\crefname{equation}{}{}
\newtheorem{lemma}{Lemma}[section]
\newtheorem{theorem}[lemma]{Theorem}
\newtheorem{cor}[lemma]{Corollary}
\newtheorem{rem}[lemma]{Remark}
\newcommand{\re}{\begin{rem}\rm}
\newcommand{\mar}{\end{rem}}
\newtheorem{defi}[lemma]{Definition}
\newcommand{\fo}{\begin{eqnarray*}}
\newcommand{\mel}{\end{eqnarray*}}
\newcommand{\cz}{{\mathbb C}}
\newcommand{\calG}{{\mathcal G}}
\newcommand{\qd}{\end{proof}\vspace{0.5ex}}
\newcommand\tnorm[1]{\left\vert\xspace\left\vert\xspace\left\vert\mskip2mu
#1\mskip2mu \right\vert\xspace\right\vert\xspace\right\vert}
\newcommand{\opnorm}{\@ifstar\@opnorms\@opnorm}
\newcommand{\@opnorms}[1]{%
  \left|\mkern-1.5mu\left|\mkern-1.5mu\left|
   #1
  \right|\mkern-1.5mu\right|\mkern-1.5mu\right|
}
\newcommand{\@opnorm}[2][]{%
  \mathopen{#1|\mkern-1.5mu#1|\mkern-1.5mu#1|}
  #2
  \mathclose{#1|\mkern-1.5mu#1|\mkern-1.5mu#1|}
}
\newcommand{\norm}[1]{\Vert#1\Vert}
\newcommand{\transpose}{\top}
\newcommand{\diag}{\mathrm{diag}}
\newcommand{\supp}[1]{\mathrm{supp}(#1)}
\newcommand{\gsupp}[1]{\mathrm{gsupp}(#1)}
\newcommand{\argmin}{\mathop{\rm argmin}}
\DeclareRobustCommand{\stirling}{\genfrac\{\}{0pt}{}}
\DeclareMathOperator*{\minimize}{\mathrm{minimize}}
\title{Unified Theory for Recovery of Sparse Signals \\ in a General Transform Domain}
\author{
Kiryung Lee, Yanjun Li~\IEEEmembership{Student Member,~˜IEEE}, Kyong Hwan Jin, \\ and Jong Chul Ye,~\IEEEmembership{Senior Member,~˜IEEE}
\thanks{
This work was supported in part by the National Science Foundation under grants IIS 14-47879
and Korea Science and Engineering Foundation under grants NRF-2013M3A9B2076548 and NRF-2016R1A2B3008104. K. Lee is with the School of Electrical and Computer Engineering, Georgia Institute of Technology, Atlanta, GA 30332 USA (e-mail: kiryung@ece.gatech.edu).
Y. Li is with the Coordinated Science Laboratory and the Department
of Electrical and Computer Engineering, University of Illinois at
Urbana-Champaign, Urbana, IL 61801, USA (e-mail: yli145@illinois.edu).
K.H. Jin is with the Biomedical Imaging Group, \'{E}cole Polytechnique
F\'{e}d\'{e}rale de Lausanne, 1015 Lausanne, Switzerland (e-mail: kyong.jin@epfl.ch).
J.C. Ye is with the Department of Bio and Brain Engineering, Korea Advanced Institute of Science and Technology (KAIST),
Daejon 305-701, Korea (e-mail: jong.ye@kaist.ac.kr).
}}
\begin{document}

\maketitle

\begin{abstract}
  Compressed sensing provided a data-acquisition paradigm for sparse signals.  Remarkably, it has been shown that practical algorithms provide robust recovery from noisy linear measurements acquired at a near optimal sampling rate.  In many real-world applications, a signal of interest is typically sparse not in the canonical basis but in a certain transform domain, such as wavelets or the finite difference.  The theory of compressed sensing was extended to the analysis sparsity model but known extensions are limited to specific choices of sensing matrix and sparsifying transform.  In this paper, we propose a unified theory for robust recovery of sparse signals in a general transform domain by convex programming.  In particular, our results apply to general acquisition and sparsity models and show how the number of measurements for recovery depends on properties of measurement and sparsifying transforms.  Moreover, we also provide extensions of our results to the scenarios where the atoms in the transform has varying incoherence parameters and the unknown signal exhibits a structured sparsity pattern.  In particular, for the partial Fourier recovery of sparse signals over a circulant transform, our main results suggest a uniformly random sampling.  Numerical results demonstrate that the variable density random sampling by our main results provides superior recovery performance over known sampling strategies.
\end{abstract}

\begin{IEEEkeywords}
Compressed sensing, analysis sparsity model, sparsifying transform, total variation, incoherence, variable density sampling.
\end{IEEEkeywords}

\section{Introduction}
\label{sec:intro}

The theory of compressed sensing (CS) \cite{donoho2006compressed,candes2006robust} provided a new data-acquisition paradigm for sparse signals.
Remarkably, it has been shown that practical algorithms are guaranteed to reconstruct the unknown sparse signal
from the linear measurements taken at a provably near optimal rate.
Reconstruction algorithms with performance guarantees include
modern optimization algorithms for $\ell_1$-norm-based convex optimization formulations (e.g., \cite{beck2009fast,boyd2011distributed}) and iterative greedy algorithms (e.g., \cite{needell2009cosamp,dai2009subspace,blumensath2009iterative,foucart2011hard}).

The canonical sparsity model in CS assumes that the unknown signal $f \in \cz^d$ is $s$-sparse in the standard coordinate basis.
In other words, $\norm{f}_0 \leq s$, where $\norm{\cdot}_0$ counts the number of nonzero elements.
The acquisition process in CS is linear and represented by a sensing matrix $A \in \cz^{m \times d}$
so that the $m$ linear measurements in $b \in \cz^m$ is given by
\[
b = A f + w,
\]
where $w \in \cz^m$ denotes additive noise to the measurements and satisfies $\norm{w}_2 \leq \epsilon$.
For certain random sensing matrices, it was shown that an estimate $\hat{f}$ given by
\begin{equation}
\label{eq:bpdn}
\hat{f} = \argmin_{\tilde{f} \in \cz^d} \norm{\tilde{f}}_1 \quad \mathrm{subject~to} \quad \norm{b - A \tilde{f}}_2 \leq \epsilon
\end{equation}
satisfies $\norm{\hat{f} - f}_2 \leq c_1 \epsilon$ with high probability,
provided that $m \geq C s \log^\alpha d$ for some $\alpha \in \mathbb{N}$ and numerical constants $C$ and $c_1$. In particular, in the noiseless case ($\epsilon = 0$), the estimate $\hat{f}$ coincides with the ground truth signal $f$.
For example, Candes and Tao \cite{candes2005decoding}
showed that the above guarantees hold for a Gaussian sensing matrix $A$ whose entries are i.i.d. following $\mathcal{N}(0,1/m)$
via the restricted isometry property (RIP).
Recent results with a sharper sample complexity of $m \geq 2 s \log(d/s)$
were derived using the Gaussian width of a tangent cone \cite{chandrasekaran2012convex,amelunxen2014living}.

In fact, the original idea of compressed sensing \cite{bresler1999image} was motivated by a need to accelerate various imaging modalities.
The sensing matrix $A$ in these applications takes observations in a measurement transform domain, i.e.
\begin{equation}
\label{eq:structred_sensing_mtx}
A = \sqrt{\frac{n}{m}} S_\Omega \Psi,
\end{equation}
where $\Psi \in \cz^{n \times d}$ is the matrix representation of the measurement transform and the sampling operator $S_\Omega: \cz^n \to \cz^m$ takes the $m$ elements indexed by $\Omega = \{\omega_1,\omega_2,\ldots,\omega_m\}$.
For example, when $\Psi$ is a discrete Fourier transform (DFT) matrix, $A$ is a partial Fourier matrix.
The aforementioned near optimal performance guarantees were shown
for a partial Fourier sensing matrix $A$ obtained using a random set $\Omega$ \cite{candes2006robust,candes2006near,rudelson2008sparse,candes2011probabilistic}
and generalized for the case where the rows of $\Psi \in \cz^{n \times d}$ correspond to an incoherent tight frame for $\cz^d$ \cite{rauhut2010compressive}.

However, in numerous imaging applications, a signal of interest is not sparse in the standard coordinate basis.
The theory of compressed sensing was accordingly extended to the so-called synthesis and analysis sparsity models \cite{candes2011compressed,nam2013cosparse,giryes2014greedy}.
The synthesis sparsity model assumes that $f \in \cz^d$ is represented as a linear combination of few atoms in a dictionary $D \in \cz^{d \times N}$.
Equivalently, $f$ is represented as $f = D u$ with an $s$-sparse coefficient vector $u \in \cz^N$.
Compressed sensing with the synthesis sparsity model can be interpreted as
conventional compressed sensing of $u$ using a sensing matrix $A D$ where $u$ is $s$-sparse in the standard basis.
In particular when $A$ is a Gaussian matrix and $D$ is an orthogonal matrix,
conventional performance guarantees carry over to the synthesis sparsity model.
On the other hand, the analysis sparsity model, which is motivated from sparse representation in harmonic analysis \cite{mallat2008wavelet},
assumes that the transform $\Phi f \in \cz^N$ of $f$ via $\Phi \in \cz^{N \times d}$ is $s$-sparse.
In fact, a signal of interest in practical applications often follows the analysis sparsity model
via various transforms including finite difference and wavelet \cite{mallat2008wavelet}, contourlet \cite{do2005contourlet},
curvelet \cite{starck2002curvelet}, and Gabor transforms.
Thus, the analysis sparsity model has been used as an effective regularizer for
classical inverse problems in signal processing (e.g., denoising and deconvolution)
and compressed sensing imaging (e.g., \cite{lustig2008compressed}).


Unlike the previous results with the canonical sparsity model,
the theory of compressed sensing with the analysis model has been relatively less explored and known results are limited to specific cases.
Candes et al. \cite{candes2011compressed} considered the recovery of $f \in \cz^d$ such that $\Phi f$ is $s$-sparse for a transform $\Phi \in \cz^{N \times d}$ satisfying $\Phi^* \Phi = I_d$, i.e. the columns of $\Phi^*$ correspond to a tight frame for $\cz^d$. They showed that
\begin{equation}
\label{eq:bpdn_tfm}
\hat{f} = \argmin_{\tilde{f} \in \cz^d} \norm{\Phi \tilde{f}}_1 \quad \mathrm{subject~to} \quad \norm{A \tilde{f} - b}_2 \leq \epsilon
\end{equation}
has the error bound given by $\norm{\hat{f} - f}_2 \leq c_1 \epsilon$,
provided that the sensing matrix $A \Phi^*$ satisfies the RIP.
In particular for a Gaussian sensing matrix $A$, their performance guarantee holds with high probability for $m = O(s \log(d/s))$ for any $\Phi$ satisfying $\Phi^* \Phi = I_d$.
Indeed the performance guarantee by Candes et al. \cite{candes2011compressed} applies beyond the case of a Gaussian sensing matrix $A$.
Krahmer and Ward \cite{krahmer2011new} showed that if $A \in \cz^{m \times d}$ and $D \in \cz^{d \times N}$ satisfy the RIP and $\varepsilon \in \mathbb{R}^d$ is a Rademacher sequence with random $\pm 1$ entries, then $A \diag(\varepsilon) D$ satisfies the RIP, where $\diag(\varepsilon) \in \cz^{d \times d}$ denotes the diagonal matrix whose diagonal entries are $\varepsilon$.
However, applying a random sign before the acquisition might not be feasible in certain applications.
In another line of research, it was shown \cite{nam2013cosparse,giryes2014greedy} that greedy algorithms for the analysis sparsity model
provide performance guarantees if the sensing matrix $A$ is a near isometry when acting on all transform-sparse $f$ such that $\Phi f$ is sparse.
Again for $\Phi$ satisfying $\Phi^* \Phi = I_d$, this condition on $A$ is less demanding than the RIP of $A \Phi^*$ since the latter implies the former. However, it has not been studied how such a relaxation translates into less-demanding sample complexity.

A special analysis sparsity model associated with the finite difference transform $\Phi$ has been of particular interest in signal processing and imaging applications. The corresponding convex surrogate $\norm{\Phi f}_1$, known as the total variation (TV), has been popularly used as an effective regularizer for solving inverse problems.
Needell and Ward \cite{needell2013stable} provided performance guarantees for TV minimization with a partial Fourier sensing matrix $A$
in terms of the RIP of $A D$ with a Haar wavelet dictionary $D$.
Their result was further refined by Krahmer and Ward \cite{krahmer2014stable}
with a clever idea of variable density sampling adopting the local incoherence parameters.
Remarkably, Krahmer and Ward \cite{krahmer2014stable} provided performance guarantees at the sample complexity of $m = O(s \log^3 s \log^2 d)$.
However, these results on TV minimization rely on a special embedding theorem
that relates the Haar transform to the finite difference transform, which holds only for signals of dimension two or higher.
Therefore, the results by Needell and Ward \cite{needell2013stable} and by Krahmer and Ward \cite{krahmer2014stable}
do not apply to the 1D case and more importantly do not generalize to other sparsifying transforms.

Recently, inspired by analogous results for the canonical sparsity model \cite{chandrasekaran2012convex,amelunxen2014living},
Kabanava et al. \cite{kabanava2015robust} derived a performance guarantee of TV minimization with a Gaussian sensing matrix $A$
at the sample complexity of $m > d \,[1 - \{ 1-(s+1)/d \}^2/\pi]$.
Cai and Xu \cite{cai2015guarantees} showed a similar result with $m \geq C \sqrt{s d} \log d$.
Compared to the previous result by Krahmer and Ward \cite{krahmer2014stable},
the above results \cite{kabanava2015robust,cai2015guarantees} apply to the 1D case but at a significantly suboptimal sample complexity.
More importantly, the use of a Gaussian sensing matrix might not be relevant to practical applications.
Along a similar analysis strategy, Kabanava and Rauhut \cite{kabanava2015analysis} showed performance guarantees for \eqref{eq:bpdn_tfm}
with a Gaussian sensing matrix $A \in \mathbb{R}^{m \times d}$ and a frame analysis operator $\Phi \in \mathbb{R}^{N \times d}$ roughly at the sample complexity of $m \geq 2 \kappa s \log (2N/s)$,
where $\kappa$ denotes the ratio of upper and lower frame bounds, i.e. $\kappa$ is the condition number of the frame operator $\Phi^* \Phi$.
The sample complexity of this result for a tight frame is near optimal.
However, their analysis is restricted to a Gaussian sensing matrix $A$ and does not generalize to other sensing matrices.

\subsection{Contributions}

Our main contribution is to derive performance guarantees for compressed sensing of analysis-sparse vectors by \eqref{eq:bpdn_tfm} for general classes of sensing matrix $A$ given in the form of \eqref{eq:structred_sensing_mtx} with measurement transform $\Psi$ and (redundant) analysis transform $\Phi$.
Unlike the previous works, performance guarantees in this paper apply without being restricted to a particular choice of $\Phi$ and $\Psi$\footnote{A Gaussian sensing matrix is also obtained in the form of \eqref{eq:structred_sensing_mtx} by choosing $\Psi$ as a Gaussian matrix.}.
The number of measurements implying these guarantees depends on certain properties of $(\Phi,\Psi)$ and this result identifies a class of measurement and sparsity models allowing recovery at a near optimal sampling rate.
Moreover, when $\Psi$ and $\Phi$ have a few strongly correlated atoms,
adopting the idea by Krahmer and Ward \cite{krahmer2014stable},
we propose to acquire linear measurements using random sampling with respect to
a variable density designed with the correlations between $\Psi$ and $\Phi$.
This modified acquisition enables recovery at a lower sampling rate.
We also extend the results to group sparsity models. This extension applies to various popular regularized recovery methods including the isotropic-total-variation minimization.
In special cases where $\Psi$ is Fourier and $\Phi$ is a circulant matrix, our theory suggests a uniformly random sampling or its variation. For example, for the total variation minimization, unlike the common belief in practice, a sampling strategy that combines the acquisition of the lowest frequency and uniformly random sampling on the other frequencies provides better reconstruction than known variable density random sampling strategies.

Our main idea is inspired by a previous work on structured matrix completion by Chen and Chi \cite{chen2014robust}.
They showed that a structured low-rank matrix (e.g., a low-rank Hankel matrix)
is successfully recovered from partially observed entries by minimizing the nuclear norm.
Indeed, their structured low-rank matrix completion can be interpreted as follows.
The unknown structured matrix $M$ is given as the image $T x$ of the generator $x$ via a linear map $T$.
Then the completion of the structured matrix $M$ with the low-rankness prior
is equivalent to the completion of the generator $x$ with low-rankness in the transform domain via $T$.
Similarly, compressed sensing with the analysis sparsity model is equivalent to
the recovery of $x \in \cz^n$ with the prior that $T x \in \cz^N$ is $s$-sparse
where the transform $T$ is given as $T = \Phi \Psi^\dagger$.
Therefore the two problems are analogous to each other in the sense that
their priors correspond to atomic sparsity \cite{chandrasekaran2012convex} in their respective transform domains.
Moreover, in both the structured low-rank matrix model and the transform-domain sparsity model, $T$ is not necessarily surjective, which implies that $T T^*$ may be rank-deficient.
This violates an important technical condition known as the \textit{isotropy} property
and many crucial steps in the proofs of existing performance guarantees for CS break down.
To overcome this difficulty, we adopt the clever idea by Chen and Chi \cite{chen2014robust} through the aforementioned analogy
and derive near optimal performance guarantees for the recovery of sparse signals in a transform domain without resorting to the isotropy property.
However, besides this similarity, our results are significantly different from the analogous results \cite{chen2014robust} in the following sense.
Chen and Chi \cite{chen2014robust} assumed that $T$ is restricted to a set of special linear operators that generate structured matrices and their analysis indeed relies critically on strong properties satisfied by such linear operators (e.g., $T$ needs to satisfy $T^* T = I_n$).
Contrarily, we only assume a mild condition that $T$ is injective and the resulting performance guarantees apply to more general cases.
For example, in compressed sensing with the analysis sparsity model,
$T = \Phi \Psi^\dagger$ is non-unitary if the measurement transform $\Psi$ is non-unitary (e.g., the Radon transform) or
the sparsifying transform $\Phi$ is non-unitary (e.g., biorthogonal wavelet and data-adaptive transforms).

We illustrate our theory through extensive Monte Carlo simulations. The variable density sampling in this paper provides an improved recovery performance over the previously suggested sampling strategies. For example, when $\Phi$ and $\Psi$ have atoms showing high correlation (e.g. Fourier and wavelet), our theory suggests to sample more densely in the lower frequencies, which enables successful recovery from fewer observations. Rather surprisingly, when $\Phi$ and $\Psi$ respectively correspond to the finite difference and the Fourier transforms, our theory suggests a special sampling density that always takes the lowest frequency and chooses the other frequency components randomly using the uniform density. This is contrary to the common belief in compressed sensing but the proposed sampling strategy turns out be more successful than previous works not only in theory but also empirically. These numerical results strongly support out new theory.

\subsection{Organization}

The rest of this paper is organized as follows.
The main results are presented in Sections~\ref{sec:main_result} and \ref{sec:main_result_vd}, where the proofs are deferred to Sections~\ref{sec:pf_main_result_noiseless} and~\ref{sec:pf_main_result_noisy}. We extend the theory to the group sparsity models in Section~\ref{sec:main_result_gs} and study a special case of circulant transforms in Section~\ref{sec:circulant}.
After demonstrating empirical observations in Section~\ref{sec:numres}, which supports our main results, we conclude the paper with some final remarks in Section~\ref{sec:concl}.

\subsection{Notations}

For a positive integer $N$, we will use a shorthand notation $[N]$ for the set $\{1,\ldots,N\}$.
For a vector $z \in \cz^N$, let $z[k]$ denote the $k$th element of $z$ for $k \in [N]$,
i.e. $z = [z[1], z[2], \ldots, z[N]]^\transpose$.
The Hadamard product of two vectors $x,y \in \cz^n$ is denoted by $x \odot y$. The circular convolution of two vectors $x,y \in \cz^n$ is denoted by $x \circledast y$. The Kronecker product of two matrices $A$ and $B$ is denoted by $A\otimes B$.
The operator norm from $\ell_p^n$ to $\ell_q^N$ will be denoted by $\norm{\cdot}_{p \to q}$.
For brevity, the spectral norm $\norm{\cdot}_{2 \to 2}$ will be written without subscript as $\norm{\cdot}$.
For a matrix $A$, its Hermitian transpose and its Moore-Penrose pseudo inverse are respectively written as $A^*$ and $A^\dagger$.

For $J \subset [N]$, the coordinate projection with respect to $J$, denoted by $\Pi_J$, is defined as
\[
(\Pi_J z)[k] :=
\begin{dcases}
z[k] & k \in J, \\
0 & \mathrm{otherwise}.
\end{dcases}
\]
The complex signum function, denoted by $\mbox{sgn}(\cdot): \cz^N \to \cz^N$, is defined as
\begin{equation}
\label{eq:sgn}
(\mbox{sgn}(z))[k] :=
\begin{dcases}
\frac{z[k]}{|z[k]|} & z[k] \neq 0, \\
0 & \mathrm{otherwise}.
\end{dcases}
\end{equation}
Let $e_1,\ldots,e_n$ denote the standard basis vectors for $\cz^n$.
In other words, $e_k$ is the $k$th column of the $n$-by-$n$ identity matrix.

\section{Recovery of Sparse Signals in a General Transform Domain}
\label{sec:main_result}

Let $T: \cz^n \to \cz^N$ be a linear map that we call a ``transform'' in this paper.
Let $\Omega = \{\omega_1,\omega_2,\ldots,\omega_m\}$ denote the multi-set of $m$ sampling indices out of $[n] := \{1,\ldots,n\}$
with possible repetition of elements.
Given $\Omega$, the sampling operator $S_\Omega: \cz^n \to \cz^m$ is defined so that
the $j$th element of $S_\Omega x \in \cz^m$ is the $\omega_j$th element of $x \in \cz^n$ for $j = 1,\ldots,m$.
We are interested in recovering an unknown signal $x \in \cz^n$ from its partial entries at $\Omega$
when the transform $T x$ is known $s$-sparse a priori, i.e. $\norm{T x}_0 \leq s$.
Compressed sensing with the analysis sparsity model is an instance of this problem formulation as shown in the following.
Recall that $\Psi$ is of full column rank.
Let $T: \cz^n \to \cz^N$ be defined by $T = \Phi \Psi^\dagger$ where $\Psi^\dagger$ denotes the Moore-Penrose pseudo inverse of $\Psi$.
Let $x \in \cz^n$ denote the vector containing fully sampled measurements, i.e. $x = \Psi f$.
Since $f = \Psi^\dagger x$, we have $\Phi f = \Phi \Psi^\dagger x = T x$.
Thus the recovery of $f$ from $b = \sqrt{n/m} S_\Omega \Psi f$ with the prior that $\Phi f$ is $s$-sparse
is equivalent to the recovery of $x$ from $b = \sqrt{n/m} S_\Omega x$ with the prior that $T x$ is $s$-sparse.

In the noise-free scenario where partial entries of $x$ are observed exactly,
we propose to estimate $x$ as the minimizer to the following optimization problem:
\begin{equation}
\label{eq:ell1min}
\minimize_{g \in \cz^n} \, \norm{T g}_1 \quad \mathrm{subject~to} \quad S_\Omega g = S_\Omega x.
\end{equation}
We provide a sufficient condition for recovering $x$ exactly by \eqref{eq:ell1min} in the following theorem.

\begin{theorem}
\label{thm:uniqueness}
Suppose $T: \cz^n \to \cz^N$ is injective.
Let $\gamma$ be defined by
\begin{equation}
\label{eq:defgamma}
\gamma := \argmin_{\tilde{\gamma} > 0} \norm{\tilde{\gamma} T^* T - I_n},
\end{equation}
where $\widetilde{T} = (T^\dagger)^*$.
Let $\mu$ be given by
\begin{equation}
\label{eq:incoherence}
\mu = \max_{k \in [n]} \max\left\{ n \norm{\gamma^{1/2} T e_k}_\infty^2, n \norm{\gamma^{-1/2} \widetilde{T} e_k}_\infty^2 \right\}.
\end{equation}

Let $\Omega = \{\omega_1,\ldots,\omega_m\}$ be a multi-set of random indices
where $\omega_k$s are i.i.d. following the uniform distribution on $[n]$.
Suppose $T x$ is $s$-sparse.
Then with probability $1 - e^{-\beta} - 3/n$, $x$ is the unique minimizer to \eqref{eq:ell1min} provided
\[
m \geq \frac{C(1+\beta) \mu s}{1 - \norm{\gamma T^* T - I_n}} \left[ \log N
+ \log \left( \norm{T}_{1 \to 2} \norm{T^\dagger}_{2 \to \infty} \right) \right].
\]
\end{theorem}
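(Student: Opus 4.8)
The plan is to prove exact recovery through the dual-certificate route, adapted to the non-isotropic setting by means of the pair $(T,\widetilde T)$ with $\widetilde T = (T^\dagger)^*$. Write $y = Tx$, let $S = \supp{y}$ with $|S|\le s$, and let $h = \mbox{sgn}(y)$, supported on $S$. First I would isolate a deterministic sufficient condition for $x$ to be the unique minimizer of \eqref{eq:ell1min}. Every feasible perturbation $d$ lies in $\ker S_\Omega$, and since $T$ is injective a short convexity estimate gives $\norm{T(x+d)}_1 - \norm{Tx}_1 \ge (1-\eta)\norm{\Pi_{S^c} Td}_1 - (\text{error})$ whenever there is a vector $v \in \cz^N$ with $T^* v \in \mathrm{range}(S_\Omega^*)$ (so that $\langle T^*v, d\rangle = 0$), $\Pi_S v \approx h$, and $\norm{\Pi_{S^c} v}_\infty \le \eta < 1$; the error term is absorbed using a restricted-injectivity bound forbidding any nonzero $d\in\ker S_\Omega$ with $Td$ supported on $S$. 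Thus the proof reduces to (i) a local-isometry/injectivity estimate and (ii) the construction of such an inexact dual certificate $v$.

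The essential structural fact — the substitute for the isotropy property — is that $\widetilde T^* T = T^\dagger T = I_n$, so the sampling $S_\Omega x$ reads in the transform domain as the inner products $\langle \widetilde T e_{\omega_k}, y\rangle$, and the mixed second moment collapses to a genuine projection: $\mathbb{E}\,\tfrac{n}{m}\sum_k (Te_{\omega_k})(\widetilde T e_{\omega_k})^* = T\widetilde T^* = T T^\dagger = P_{\calR(T)}$, which acts as the identity on $\calR(T)$. The scalar $\gamma$ of \eqref{eq:defgamma} is chosen so that $\gamma^{1/2}T$ and $\gamma^{-1/2}\widetilde T$ are simultaneously near-isometric, and the symmetric incoherence $\mu$ of \eqref{eq:incoherence} controls both $\norm{\gamma^{1/2}\Pi_S Te_{\omega_k}}_2^2 \le \mu s/n$ and the analogous quantity for $\widetilde T$. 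I would then prove the local isometry by matrix Bernstein applied to the $m$ i.i.d. rank-one terms $\tfrac{n}{m}\gamma\,\Pi_S(Te_{\omega_k})(Te_{\omega_k})^*\Pi_S$ about their mean $\gamma\,\Pi_S TT^*\Pi_S$; since $Td\in\calR(T)$ and $\gamma TT^*$ shares the spectrum of $\gamma T^*T$, this mean restricted to sparse vectors inside $\calR(T)$ has least eigenvalue at least $1 - \norm{\gamma T^* T - I_n}$. The dominant (per-summand) deviation term is $O(\mu s\log N/m)$, and requiring it to fall below that least eigenvalue produces the denominator $1 - \norm{\gamma T^*T - I_n}$ and the factor $\mu s\log N$ in the sample complexity.

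With the local isometry in hand, I would build the inexact certificate by the golfing scheme. Partition the measurements into $L$ independent batches and define $v$ as a telescoping sum in which each batch interpolates the current sign residual on $S$ using the sensing vectors $\widetilde T e_{\omega_k}$, driving $\norm{\Pi_S v^{(\ell)} - h}_2$ down geometrically while its off-support part is kept small in $\ell_\infty$ by a vector Bernstein/Hoeffding bound; $\mu$ controls the off-support inner products $\langle \widetilde T e_{\omega_k}, Te_j\rangle$ for $j\in S^c$, and summing over the $L$ rounds keeps $\norm{\Pi_{S^c} v}_\infty$ strictly below $1$. The number of rounds needed is $L \asymp \log N + \log(\norm{T}_{1\to2}\norm{T^\dagger}_{2\to\infty})$, the second term ensuring the residual falls below the threshold dictated by the atom sizes $\max_k\norm{Te_k}_2$ and $\max_k\norm{\widetilde Te_k}_2$; this is precisely the bracketed factor in the statement. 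The overall failure probability $e^{-\beta}+3/n$ is the union of the Bernstein tail in the local-isometry step and the tails of the golfing increments.

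The main obstacle is the loss of isotropy: because $T$ need not be surjective, $TT^*$ is rank-deficient and the naive second moment $\tfrac nm\sum_k (Te_{\omega_k})(Te_{\omega_k})^*$ is neither the identity nor a projection, so the textbook inversion underlying both the certificate and the injectivity step breaks down. The resolution is to place the canonical dual $\widetilde T$ on one side so the mixed moment collapses to $TT^\dagger = P_{\calR(T)}$, and to insert the balancing scalar $\gamma$ so that the two factors are simultaneously flat in the sense measured by $\mu$. The delicate part is to thread this balancing consistently through every golfing increment and the variance computation, so that the only global quantities surviving into the bound are $\mu$, $\norm{\gamma T^*T - I_n}$, and the two operator norms $\norm{T}_{1\to2}$ and $\norm{T^\dagger}_{2\to\infty}$.
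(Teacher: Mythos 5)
Your proposal follows essentially the same route as the paper's proof: a deterministic inexact-dual-certificate lemma (with the orthogonality condition $T^*v$ supported on the sampled coordinates, which is equivalent to the paper's condition $(TT^\dagger - TS_{\Omega'}^*S_{\Omega'}T^\dagger)^*v=0$), a restricted local isometry of $\Pi_J T S_\Omega^* S_\Omega T^\dagger \Pi_J$ proved by matrix Bernstein, and a golfing-scheme construction of $v$ using the pair $(T,\widetilde T)$ so that the mixed second moment collapses to $TT^\dagger$ in place of the missing isotropy, with the round count producing the $\log N + \log(\norm{T}_{1\to2}\norm{T^\dagger}_{2\to\infty})$ factor. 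The only detail you gloss over is that the later golfing rounds succeed only with constant probability, so the paper invokes Gross's oversampling-and-refinement argument rather than a plain union bound, but this does not change the approach.
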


\begin{proof}[Proof of Theorem~\ref{thm:uniqueness}]
See Section~\ref{sec:pf_main_result_noiseless}.
\end{proof}

\begin{rem}
\rm
All the results in this section including Theorem~\ref{thm:uniqueness} as well as other similar results \cite{candes2011probabilistic,gross2011recovering,chen2014robust}, all derived using the golfing scheme, provide an \emph{instance} recovery guarantee that applies to a single arbitrary instance of $x$, which is a weaker result than the \emph{uniform} recovery guarantee that applies to the set of all transform-sparse signals. In compressed sensing with the canonical sparsity model, an instance guarantee is given from a fewer measurements than the uniform guarantee \cite{candes2011probabilistic} by a poly-log factor. For matrix completion problems, the RIP do not hold and known results \cite{gross2011recovering,chen2014robust} only provide an instance guarantee. We suspect that this is the case with the recovery problem in this paper.
\end{rem}

Note that $\norm{T}_{1 \to 2}$ denotes the largest $\ell_2$-norm among all columns of $T$.
Similarly, $\norm{T^\dagger}_{2 \to \infty} = \norm{\widetilde{T}}_{1 \to 2}$
denotes the largest $\ell_2$-norm among all columns of $\widetilde{T}$.
In a special case when $T^* T = I_n$, we have $T^\dagger = T^*$. Thus $\norm{T}_{1 \to 2} = \norm{T^\dagger}_{2 \to \infty} = 1$.
In general, if $\norm{T}_{1 \to 2} \norm{T^\dagger}_{2 \to \infty} = O(N^{\alpha_1})$ for some $\alpha_1 \in \mathbb{N}$
and
\[
\frac{1}{1 - \norm{\gamma T^* T  - I_n}} = O(\log^{\alpha_2} N)
\]
for some $\alpha_2 \in \mathbb{N}$,
then we get a performance guarantee at a near optimal scaling of sample complexity of
$m = O(\mu s \log^\alpha N)$, where $\alpha = \alpha_1 + \alpha_2$.
These are mild conditions and easily satisfied by transforms that arise in practical applications.

In the noisy scenario where partial entries of $x$ are observed with additive noise,
we propose to estimate $x$ by solving the following optimization problem:
\begin{equation}
\label{eq:ell1min_noisy}
\minimize_{g \in \cz^n} \, \norm{T g}_1
\quad \mathrm{subject~to} \quad \norm{S_{\Omega'} g - S_{\Omega'} x^\sharp}_2 \leq \epsilon,
\end{equation}
where $x^\sharp$ denotes a noisy version of $x$,
$\Omega'$ denotes the set of all unique elements in $\Omega$,
and $S_{\Omega'}^*$ is the adjoint of the sampling operator $S_{\Omega'}$ that fills missing entries at outside $\Omega'$ with 0.

\begin{theorem}
\label{thm:stability1}
Suppose the hypotheses of Theorem~\ref{thm:uniqueness} hold.
Let $\hat{x}$ be the minimizer to \eqref{eq:ell1min_noisy} with $x^\sharp$ satisfying
\begin{equation}
\label{eq:xharp_fid}
\norm{S_{\Omega'} (x - x^\sharp)}_2 \leq \epsilon.
\end{equation}
Then
\[
\norm{\hat{x} - x}_2 \leq
\frac{\sigma_{\max}(T)}{\sigma_{\min}(T)}
\cdot
\left\{ 2 + 28 \sqrt{N} \left( 3 n \norm{T}_{1 \to 2} \norm{T^\dagger}_{2 \to \infty} + 1 \right) \right\} \epsilon,
\]
\end{theorem}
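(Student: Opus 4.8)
The plan is to control the transform-domain error $\norm{T(\hat{x}-x)}_2$ and then transfer it to the signal domain using the injectivity of $T$. Write $h := \hat{x}-x$ and let $S := \supp{Tx}$ with $|S|\le s$. Injectivity gives the two-sided bound $\sigma_{\min}(T)\norm{h}_2 \le \norm{Th}_2 \le \sigma_{\max}(T)\norm{h}_2$, so it suffices to bound $\norm{Th}_2^2 = \norm{\Pi_S Th}_2^2 + \norm{\Pi_{S^c}Th}_2^2$ and divide by $\sigma_{\min}(T)$; the factor $\sigma_{\max}(T)$ in the statement re-enters when the sampled residual is converted into a transform-domain quantity, so that together they form the condition number $\sigma_{\max}(T)/\sigma_{\min}(T)$. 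Since both $\hat{x}$ and $x$ lie within $\epsilon$ of $x^\sharp$ on $\Omega'$ in the $\ell_2$ sense — the former by feasibility in \eqref{eq:ell1min_noisy}, the latter by \eqref{eq:xharp_fid} — the triangle inequality yields the data-fidelity estimate $\norm{S_{\Omega'}h}_2 \le 2\epsilon$.

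First I would derive a cone (tube) condition from optimality. Minimality of $\hat{x}$ gives $\norm{Tx}_1 \ge \norm{T\hat{x}}_1$, and since $\Pi_{S^c}Tx = 0$, the convexity lower bound $\norm{\Pi_S(Tx+Th)}_1 \ge \norm{Tx}_1 + \mathrm{Re}\,\langle \mbox{sgn}(Tx),\Pi_S Th\rangle$ leads to $\norm{\Pi_{S^c}Th}_1 \le -\mathrm{Re}\,\langle \mbox{sgn}(Tx),\Pi_S Th\rangle$. To turn the right-hand side into controllable quantities I would reuse the inexact dual certificate built by the golfing scheme in the proof of Theorem~\ref{thm:uniqueness}: a vector $\rho$ supported on $\Omega'$ whose image $w = \widetilde{T}\rho$ satisfies $T^*w = \rho$ (because $\widetilde{T}^*T = I_n$, equivalently $T^*\widetilde{T}=I_n$), $\norm{\Pi_S w - \mbox{sgn}(Tx)}_2 \le 1/4$, and $\norm{\Pi_{S^c}w}_\infty \le 1/2$, together with a bound on $\norm{\rho}_2$. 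Writing $\mbox{sgn}(Tx)=\Pi_S w-(\Pi_S w-\mbox{sgn}(Tx))$ and using $\langle w, Th\rangle = \langle \rho, \widetilde{T}^*Th\rangle = \langle \rho, h\rangle = \langle S_{\Omega'}\rho, S_{\Omega'}h\rangle \le 2\epsilon\norm{\rho}_2$ together with the off-support bound $\norm{\Pi_{S^c}w}_\infty\le 1/2$, I would obtain the refined cone condition $\norm{\Pi_{S^c}Th}_1 \le 4\epsilon\norm{\rho}_2 + \tfrac{1}{2}\norm{\Pi_S Th}_2$.

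The crux is to bound $\norm{\Pi_S Th}_2$, and here I would invoke the local isometry that the golfing analysis already establishes. Let $\mathcal{M} := \frac{n}{m}\sum_{j=1}^m (Te_{\omega_j})(\widetilde{T}e_{\omega_j})^*$, whose expectation is the orthogonal projection $P = TT^\dagger$ onto $\mathrm{range}(T)$; a direct computation gives $\mathcal{M}(Th) = \frac{n}{m}\,T\,S_\Omega^* S_\Omega h$ since $(\widetilde{T}e_{\omega_j})^*Th = h[\omega_j]$. Because $Th\in\mathrm{range}(T)$ we have $P(Th) = Th$, so $\Pi_S Th = \Pi_S\mathcal{M}(Th) - \Pi_S(\mathcal{M}-P)(Th)$. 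The support near-isometry $\norm{\Pi_S(\mathcal{M}-P)\Pi_S}\le\tfrac14$ handles the on-support part; the off-support part $\Pi_S(\mathcal{M}-P)\Pi_{S^c}(Th)$ is controlled coordinatewise through the incoherence $\mu$; and the measurement term $\frac{n}{m}\Pi_S T S_\Omega^* S_\Omega h$ is bounded through $\norm{S_{\Omega'}h}_2\le 2\epsilon$ (the passage from $\Omega$ to the unique indices $\Omega'$, which is exactly why \eqref{eq:ell1min_noisy} enforces fidelity on $\Omega'$, must be accounted for here so that repeated samples do not amplify the noise). This step is where the crude constants in the statement originate: the operator-norm estimate $\norm{n(Te_\omega)(\widetilde{T}e_\omega)^*}\le n\norm{T}_{1\to2}\norm{T^\dagger}_{2\to\infty}$ produces the factor $3n\norm{T}_{1\to2}\norm{T^\dagger}_{2\to\infty}+1$, while lifting the coordinatewise ($\ell_\infty$) control afforded by incoherence to an $\ell_2$-norm on $\cz^N$ costs the factor $\sqrt{N}$. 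Substituting the cone condition then closes a coupled pair of inequalities for $\norm{\Pi_S Th}_2$ and $\norm{\Pi_{S^c}Th}_1$ — this closes precisely because the large factor multiplies $\epsilon$ while the self-coefficient of $\norm{\Pi_S Th}_2$ stays below one by the isometry — and assembling $\norm{Th}_2\le\norm{\Pi_S Th}_2+\norm{\Pi_{S^c}Th}_1$ and dividing by $\sigma_{\min}(T)$ gives the claimed estimate.

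The main obstacle is exactly the failure of the isotropy property emphasized in the introduction. Since $T$ is not surjective, $\Pi_S Th$ does not lie in $\mathrm{range}(T)$, so the near-isometry — valid only on $\mathrm{range}(T)$ — cannot be applied directly to the truncated error, and there is no global restricted isometry to suppress the coupling between the on-support and off-support transform coefficients. Handling this coupling through the adjoint identity $\widetilde{T}^*T = I_n$ and the cross operator $\mathcal{M}$, rather than through an RIP, is the heart of the argument, and it is also the reason the resulting stability constant carries the worst-case factors $\sqrt{N}$ and $n\norm{T}_{1\to2}\norm{T^\dagger}_{2\to\infty}$ instead of the dimension-free constant available in the isotropic, tight-frame case where $T^*T = I_n$.
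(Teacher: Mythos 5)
Your overall strategy---a Cand\`es--Plan-style perturbation argument applied directly to the true error $h=\hat x - x$, combining a dual-certificate cone condition with a local isometry on $\Pi_J Th$---is genuinely different from the paper's, and it has a gap at its central step. Write $\mathcal{M}:=\frac{n}{m}TS_\Omega^*S_\Omega T^\dagger$ and $P:=TT^\dagger$. In your decomposition of $\Pi_J Th$, the cross term $\Pi_J(\mathcal{M}-P)\Pi_{[N]\setminus J}Th$ cannot be ``controlled coordinatewise through the incoherence $\mu$'' under the hypotheses of Theorem~\ref{thm:uniqueness}: the vector $\Pi_{[N]\setminus J}Th$ depends on $\Omega$, so Lemma~\ref{lemma:E3} (which requires a fixed $q$ independent of $\Omega$) does not apply, and the only estimate available at this sample complexity is the deterministic one, $\norm{\Pi_J(\mathcal{M}-P)\Pi_{[N]\setminus J}Th}_2 \le (n\norm{T}_{1\to2}\norm{T^\dagger}_{2\to\infty}+1)\norm{\Pi_{[N]\setminus J}Th}_1$. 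Feeding your cone condition $\norm{\Pi_{[N]\setminus J}Th}_1\le 4\epsilon\norm{\rho}_2+\tfrac12\norm{\Pi_J Th}_2$ into this produces a self-coefficient of order $n\norm{T}_{1\to2}\norm{T^\dagger}_{2\to\infty}$ in front of $\norm{\Pi_J Th}_2$, so the coupled system does \emph{not} close---contrary to your claim that the self-coefficient ``stays below one by the isometry.'' A probabilistic, direction-uniform control of this cross term is exactly the weak-RIP estimate of Theorem~\ref{thm:rboplike}, which the paper invokes only for Theorem~\ref{thm:stability2} at the larger sample complexity $m\gtrsim\mu s\log^4 N$; Theorem~\ref{thm:stability1} must hold under the weaker hypotheses of Theorem~\ref{thm:uniqueness}. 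A secondary gap: you need a quantitative bound on $\norm{\rho}_2=\norm{T^*v}_2$, which the golfing construction does not supply (a crude $O(\sqrt{N}\,\norm{T})$ bound can be extracted from $\norm{\Pi_{[N]\setminus J}v}_\infty\le\tfrac12$, but you never do so, and the certificate's on-support accuracy is $1/(7n\norm{T}_{1\to2}\norm{T^\dagger}_{2\to\infty})$, not the $1/4$ you assume).

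The paper's proof sidesteps all of this with one substitution: it splits $h=(I_n-S_{\Omega'}^*S_{\Omega'})h+S_{\Omega'}^*S_{\Omega'}h$. The second piece obeys $\norm{TS_{\Omega'}^*S_{\Omega'}h}_2\le 2\epsilon\norm{T}$ by feasibility of $x$ and $\hat x$, and the first piece $\tilde h:=(I_n-S_{\Omega'}^*S_{\Omega'})h$ satisfies $S_{\Omega'}\tilde h=0$ \emph{exactly}, so the entire noiseless two-case analysis of Lemma~\ref{lemma:uniqueness} applies verbatim to $\tilde h$; the only new ingredient is that the optimality inequality picks up the extra term $\norm{TS_{\Omega'}^*S_{\Omega'}h}_1\le\sqrt{N}\norm{TS_{\Omega'}^*S_{\Omega'}h}_2$, which is where the $28\sqrt{N}$ actually originates (not from lifting an $\ell_\infty$ incoherence bound to $\ell_2$), while $3n\norm{T}_{1\to2}\norm{T^\dagger}_{2\to\infty}+1$ comes from the case-splitting threshold. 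In particular $\langle v,T\tilde h\rangle=\langle T^*v,\tilde h\rangle=0$ exactly because $T^*v$ is supported on $\Omega'$, so no bound on $\norm{T^*v}_2$ is ever needed. You should rework your argument around this substitution.
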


\begin{proof}[Proof of Theorem~\ref{thm:stability1}]
See Section~\ref{subsec:pf:thm:stability1}.
\end{proof}

We can tighten the upper bound on the estimation error in Theorem~\ref{thm:stability1} when $T^* T$ is well conditioned.
To this end, we will use the following theorem, which is obtained by modifying \cite[Theorem~3.1]{lee2013oblique}.
\begin{theorem}[{A modification of \cite[Theorem~3.1]{lee2013oblique}}]
\label{thm:rboplike}
Suppose $T: \cz^n \to \cz^N$ and $\widetilde{T} = (T^\dagger)^*$ satisfy \eqref{eq:incoherence} with parameters $\mu$ and $\gamma$.
Let $\Omega = \{\omega_1,\ldots,\omega_m\}$ be a multi-set of random indices
where $\omega_k$s are i.i.d. following the uniform distribution on $[n]$.
Then with probability $1-\xi$, we have
\begin{equation}
\label{eq:rboplike}
\max_{|\widetilde{J}| \leq s} \left\| \Pi_{\widetilde{J}} \left(\frac{n}{m} T S_\Omega^* S_\Omega T^\dagger - T T^\dagger \right) \Pi_{\widetilde{J}} \right\| \leq \delta,
\end{equation}
provided
\begin{align}
\label{eq:rboplike:cond1}
m {} & \geq \frac{C_1 \delta^{-2} \mu s \log^2 s \log N \log m}{1 - \norm{\gamma T^* T  - I_n}},
\intertext{and}
\label{eq:rboplike:cond2}
m {} & \geq C_2 \delta^{-2} \mu s \log (\xi^{-1}).
\end{align}
\end{theorem}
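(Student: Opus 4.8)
The plan is to view the matrix inside \eqref{eq:rboplike} as a centered sum of independent rank-one random matrices and to bound its restricted operator norm by invoking the oblique restricted-isometry estimate of \cite[Theorem~3.1]{lee2013oblique}, after a rescaling that turns the general injective $T$ into a near-isometry. Since $S_\Omega^* S_\Omega = \sum_{j=1}^m e_{\omega_j} e_{\omega_j}^*$, we can write
\[
M := \frac{n}{m} T S_\Omega^* S_\Omega T^\dagger - T T^\dagger
= \frac{1}{m} \sum_{j=1}^m \left( n\, (T e_{\omega_j}) (\widetilde{T} e_{\omega_j})^* - T T^\dagger \right),
\]
and because $\mathbb{E}[\, n\, e_{\omega} e_{\omega}^* \,] = I_n$ together with $T \widetilde{T}^* = T T^\dagger$, each summand has mean zero. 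First I would absorb $\gamma$ by setting $A := \gamma^{1/2} T$ and $B := \gamma^{-1/2} \widetilde{T}$. The matrix $M$ is invariant under this rescaling, since the two powers of $\gamma$ cancel in $A B^* = T T^\dagger$; meanwhile the columns now obey $\norm{A e_k}_\infty, \norm{B e_k}_\infty \leq \sqrt{\mu / n}$ by \eqref{eq:incoherence}, and $A^* A = \gamma T^* T$ is within $\norm{\gamma T^* T - I_n} < 1$ of the identity.

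Next I would collapse the maximum over supports and the operator norm into a single empirical process. Since $\Pi_{\widetilde{J}} u$ and $\Pi_{\widetilde{J}} v$ range precisely over the $s$-sparse vectors,
\[
\max_{|\widetilde{J}| \leq s} \left\| \Pi_{\widetilde{J}} M \Pi_{\widetilde{J}} \right\|
= \sup_{u, v} \left| \langle u, M v \rangle \right|,
\]
the supremum being taken over all $s$-sparse unit vectors $u, v \in \cz^N$. Each inner product is a centered average of the independent scalars $n\, \langle u, A e_{\omega_j} \rangle \langle B e_{\omega_j}, v \rangle$, and for $s$-sparse unit $u, v$ the incoherence bound gives $| \langle u, A e_{\omega} \rangle | \leq \sqrt{s \mu / n}$ and likewise for $B$, so that each term is uniformly $O(\mu s)$ and its variance is controlled on the same scale. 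This is exactly the configuration handled by \cite[Theorem~3.1]{lee2013oblique}, whose proof controls $\mathbb{E} \sup_{u,v} | \langle u, M v \rangle |$ by symmetrization followed by a Dudley/generic-chaining bound over the $s$-sparse sphere (the source of the $\log^2 s \, \log N \, \log m$ factors) and then bounds the deviation of the supremum from its mean by a matrix Bernstein inequality with the uniform and variance terms governed by $\mu$. Requiring the expectation to be at most $\delta/2$ yields \eqref{eq:rboplike:cond1}, and requiring the deviation to be at most $\delta/2$ with probability $1-\xi$ yields \eqref{eq:rboplike:cond2}.

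The only genuine modification over \cite[Theorem~3.1]{lee2013oblique} is the passage from the isometric to the near-isometric transform, which I would track through the factor $1/(1 - \norm{\gamma T^* T - I_n})$. When $A^* A = I_n$ exactly, one has $A^\dagger = A^*$, the summands $n (A e_{\omega})(A e_{\omega})^*$ are Hermitian and positive semidefinite, and the classical symmetric concentration applies directly. For general injective $T$ the true pseudoinverse must be used in place of $A^*$, and the resulting non-Hermitian defect is exactly what $\norm{\gamma T^* T - I_n}$ measures; it rescales the effective lower frame bound and therefore enters the sample complexity as the stated denominator. I expect the main obstacle to lie precisely here: the oblique structure $A e_k \neq B e_k$ prevents a direct appeal to symmetric matrix Bernstein, and the delicate point is to run the chaining argument uniformly over all $s$-sparse supports with the correct polylogarithmic exponents while cleanly propagating the near-isometry constant, which is the technical heart borrowed from \cite[Theorem~3.1]{lee2013oblique}.
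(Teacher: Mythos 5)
Your proposal takes essentially the same route as the paper: both defer the chaining and concentration machinery entirely to the proof of \cite[Theorem~3.1]{lee2013oblique} and reduce the present statement to tracking the modifications, namely centering the sum at $T T^\dagger$ rather than $I_N$ and absorbing $\gamma$ so that the incoherence bounds \eqref{eq:incoherence} apply to the rescaled pair. The only substantive difference is that the paper pins down the origin of the factor $1/(1-\norm{\gamma T^* T - I_n})$ concretely, by computing the constant $K_T$ appearing in that theorem in terms of $\norm{\gamma T T^* - T T^\dagger}$ and $\norm{\gamma^{-1}\widetilde{T} T^\dagger - T T^\dagger}$ and reducing both to $\norm{\gamma T^* T - I_n}$, whereas you account for it only qualitatively.
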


\begin{proof}[Proof of Theorem~\ref{thm:rboplike}]
See Appendix~\ref{sec:pf:thm:rboplike}.
\end{proof}

Using Theorem~\ref{thm:rboplike},
we provide another sufficient condition for stable recovery of sparse signals in a transform domain,
which has a smaller noise amplification factor.

\begin{theorem}
\label{thm:stability2}
Suppose $T: \cz^n \to \cz^N$ is injective.
Suppose $T$ and $\widetilde{T} = (T^\dagger)^*$ satisfy \eqref{eq:incoherence} with parameters $\mu$ and $\gamma$.
Let $\Omega = \{\omega_1,\ldots,\omega_m\}$ be a multi-set of random indices
where $\omega_k$s are i.i.d. following the uniform distribution on $[n]$.
Let $\hat{x}$ be the minimizer to \eqref{eq:ell1min_noisy} with $x^\sharp$ satisfying \eqref{eq:xharp_fid}.
Then there exist numerical constants $C, c > 0$ for which the following holds.
With probability $1 - N^{-4}$, we have
\[
\norm{\hat{x} - x}_2 \leq
\frac{\sigma_{\max}(T)}{\sigma_{\min}(T)}
\cdot
\left[ 14\sqrt{N} + \frac{n}{m} \left\{ \norm{T}_{1 \to 2} \norm{T^\dagger}_{2 \to \infty} (|\Omega|-|\Omega'|) +1 \right\} \right] \epsilon,
\]
provided
\begin{equation}
\label{eq:noisy_samp_comp}
m \geq \frac{C \mu s \log^4 N}{1 - \norm{\gamma T^* T - I_n}}.
\end{equation}
Furthermore, if $T^* T = I_n$, then with probability $1 - N^{-4}$,
\[
\norm{\hat{x} - x}_2 \leq
\frac{\sigma_{\max}(T)}{\sigma_{\min}(T)}
\cdot
\left( 14\sqrt{N} + \frac{n}{m} R \right) \epsilon,
\]
where $R$ is the count of the most repeated elements in $\Omega$.
\end{theorem}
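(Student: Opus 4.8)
The plan is to bound the error $h := \hat{x} - x$ in the transform domain and then descend to $\cz^n$ through the singular values of $T$. First I would extract from the program \eqref{eq:ell1min_noisy} the two inequalities that its feasibility and optimality impose on $h$. Because $x^\sharp$ obeys \eqref{eq:xharp_fid}, the true signal $x$ is feasible, so the minimizer satisfies $\norm{T\hat{x}}_1 \le \norm{Tx}_1$; setting $S := \supp{Tx}$ with $|S| \le s$ and using that $Tx$ vanishes off $S$, the usual splitting gives the cone condition $\norm{\Pi_{S^c} Th}_1 \le \norm{\Pi_S Th}_1$. Feasibility of both $x$ and $\hat{x}$ together with the triangle inequality gives the tube condition $\norm{S_{\Omega'} h}_2 \le 2\epsilon$. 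Together these localize $Th$ to an $\ell_1$-cone around $S$ and constrain its energy on the sampled coordinates.

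The core of the argument is to turn the tube condition into a bound on the on-support energy $\norm{\Pi_S Th}_2$, and this is exactly what Theorem~\ref{thm:rboplike} supplies. Under the sample complexity \eqref{eq:noisy_samp_comp} I would invoke \eqref{eq:rboplike} with a fixed $\delta$ (say $\delta = 1/2$), so that $\tfrac{n}{m} T S_\Omega^* S_\Omega T^\dagger$ behaves like $TT^\dagger$ when compressed to any support of size at most $s$. Since $T$ is injective, $T^\dagger T = I_n$ and hence $Th \in \mathrm{range}(T)$ with $S_\Omega T^\dagger (Th) = S_\Omega h$; applying the near-isometry to $\Pi_S Th$ and writing $\langle \Pi_S Th, TT^\dagger \Pi_S Th\rangle = \norm{\Pi_S Th}_2^2 - \langle \Pi_S Th, TT^\dagger \Pi_{S^c}Th\rangle$, the on-support energy is controlled, up to the factor $(1-\delta)^{-1}$, by the measurement energy $\tfrac{n}{m}\norm{S_\Omega h}_2$ plus a cross term that the cone condition renders harmless. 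In parallel I would feed in the inexact dual certificate built by the golfing scheme in Theorem~\ref{thm:uniqueness} to bound $\norm{\Pi_{S^c} Th}_1$ and the sign defect on $S$; converting $\ell_1$ to $\ell_2$ over the $N$-dimensional transform domain (the origin of the $\sqrt{N}$) then yields $\norm{Th}_2 \lesssim \sqrt{N}\,\epsilon + \tfrac{n}{m}\norm{S_\Omega h}_2$.

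It then remains to reconcile the two sampling operators, since the measurement energy above involves the multiset $S_\Omega$ while \eqref{eq:xharp_fid} only constrains $S_{\Omega'}$ over distinct indices. I would expand $\tfrac{n}{m}\norm{S_\Omega h}_2^2 = \tfrac{n}{m}\sum_{k \in \Omega'} \mathrm{mult}(k)\,|h[k]|^2$ and peel off the excess $\sum_{k}(\mathrm{mult}(k)-1)|h[k]|^2$. The genuine part is $\norm{S_{\Omega'} h}_2^2 \le 4\epsilon^2$, while the excess, estimated through cross terms pairing columns of $T$ with columns of $\widetilde{T} = (T^\dagger)^*$, contributes the factor $\norm{T}_{1 \to 2}\norm{T^\dagger}_{2 \to \infty}$ scaled by the repeated-draw count $|\Omega|-|\Omega'|$. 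Finally $\norm{h}_2 \le \norm{Th}_2/\sigma_{\min}(T)$ together with the $\sigma_{\max}(T)$ incurred in the domain conversions produces the condition-number prefactor $\sigma_{\max}(T)/\sigma_{\min}(T)$. In the tight case $T^*T = I_n$ one has $T^\dagger = T^*$, and the excess collapses cleanly to $(R-1)\norm{S_{\Omega'}h}_2^2$ with $R$ the largest multiplicity, so the correction reduces to $\tfrac{n}{m}R$.

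The main obstacle is the failure of the isotropy property. Because $T$ need not be a tight frame, $TT^\dagger$ is merely the projection onto $\mathrm{range}(T)$ and $T^\dagger \ne T^*$, so the sparse vector $\Pi_S Th$ to which I want to apply Theorem~\ref{thm:rboplike} does not itself lie in $\mathrm{range}(T)$ and the operator $\tfrac{n}{m} T S_\Omega^* S_\Omega T^\dagger$ is not self-adjoint. Reconciling the support-localization demanded by \eqref{eq:rboplike} with this range structure---via the decomposition of $\langle \Pi_S Th, TT^\dagger \Pi_S Th\rangle$ above---and then honestly tracking every $\sigma_{\max}(T)$, $\sigma_{\min}(T)$, and $\norm{T}_{1 \to 2}\norm{T^\dagger}_{2 \to \infty}$ factor through the resulting cross terms is the delicate part. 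The secondary difficulty is the repeated-sample bookkeeping, which must be carried out using only the column-norm (incoherence) control on $h$, since no RIP is available over the distinct index set $\Omega'$.
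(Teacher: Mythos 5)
Your proposal follows essentially the same route as the paper's proof: the cone and tube conditions from optimality and feasibility, the invocation of Theorem~\ref{thm:rboplike} (with a fixed $\delta$) together with the cone constraint and a shelling of the off-support coefficients to control the on-support energy without the $n\norm{T}_{1\to2}\norm{T^\dagger}_{2\to\infty}$ loss of Theorem~\ref{thm:stability1}, the $\sqrt{N}$ arising from converting the off-support $\ell_1$ bound to $\ell_2$, the excess-multiplicity bookkeeping yielding $\norm{T}_{1\to2}\norm{T^\dagger}_{2\to\infty}(|\Omega|-|\Omega'|)$ via the column-pair norms $\norm{Te_k}_2\norm{\widetilde{T}e_k}_2$, the collapse to the factor $R$ when $T^*T=I_n$, and the final descent through $\sigma_{\min}(T)$. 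The only cosmetic difference is that you phrase the on-support step as an inner-product identity for $\langle \Pi_S Th, TT^\dagger \Pi_S Th\rangle$ while the paper decomposes $\Pi_J T(I_n - S_{\Omega'}^*S_{\Omega'})h$ into three operator terms, but the underlying estimates are identical.
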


\begin{proof}[Proof of Theorem~\ref{thm:stability2}]
See Section~\ref{subsec:pf:thm:stability2}.
\end{proof}

\begin{rem}
\rm
By the definition of $\Omega'$, we have $|\Omega|-|\Omega'| \leq m-1$.
Suppose that $\norm{T}_{1 \to 2} \norm{T^\dagger}_{2 \to \infty} = O(1)$.
The resulting noise amplification factor by Theorem~\ref{thm:stability2} is $O(\sqrt{N} + n)$,
which is already smaller than $O(\sqrt{N} n)$ by Theorem~\ref{thm:stability1}.
In fact, the distribution of $|\Omega'|$ is explicitly given by
\[
\mathbb{P}\left( |\Omega'| = \ell \right) = \frac{\stirling{m}{\ell} n!}{n^m (n-\ell)!},
\]
where $\stirling{m}{\ell}$ denotes the Stirling number of the second kind defined by
\[
\stirling{m}{\ell} := \frac{1}{\ell!} \sum_{j=0}^\ell(-1)^{\ell-j} {\ell \choose j} j^m.
\]
It would be possible to compute a more tight probabilistic upper bound on $|\Omega|-|\Omega'|$ with its distribution.
However, because of the other factor $\sqrt{N}$, regardless of $|\Omega|-|\Omega'|$,
the noise amplification factor by Theorem~\ref{thm:stability2}
cannot be improved to $O(1)$ as shown for compressed sensing
with the canonical sparsity model \cite{candes2011probabilistic} or with a special analysis model \cite{krahmer2014stable}.
We admit that this suboptimality in noise amplification is a limitation of our analysis.
It will be interesting to see whether one can obtain near optimal noise amplification for a general transform $T$.
\end{rem}

\section{Incoherence-Dependent Variable Density Sampling}
\label{sec:main_result_vd}

In the results of the previous section, the number of measurements is proportional to the incoherence parameter $\mu$ in \eqref{eq:incoherence},
which is the worst case $\ell_\infty$-norm among $\{T e_k\}_{k=1}^n$ and $\{\widetilde{T} e_k\}_{k=1}^n$.
In certain scenarios, these $\ell_\infty$-norms are unevenly distributed.
For example, in compressed sensing with the analysis sparsity model,
$T$ is given by $T = \Phi \Psi^\dagger$
with the sensing transform $\Psi \in \cz^{n \times d}$ and the sparsifying transform $\Phi \in \cz^{N \times d}$.
If $\Psi$ and $\Phi$ correspond to the DFT and DWT (discrete wavelet transform), respectively, low-frequency atoms have larger correlations.
Thus there are a few $T e_k$s that dominate the others with large $\ell_\infty$-norms.
Krahmer and Ward \cite{krahmer2014stable} proposed a clever idea of sampling measurements
with respect to a variable density adapted to the local incoherence parameters,
which are $\{\norm{T e_k}_\infty\}_{k=1}^n$ and $\{\norm{\widetilde{T} e_k}_\infty\}_{k=1}^n$.\footnote{Krahmer and Ward \cite{krahmer2014stable} considered orthonormal $\Phi$ and $\Psi$ respectively corresponding to the DFT and the Haar DWT. In this case, $T^\dagger = T^*$. Thus, $\widetilde{T} = T$.}
Then the sample complexity depends on not the worst case incoherence parameter but the average of the local incoherence parameters.
In this section, adopting the idea by Krahmer and Ward \cite{krahmer2014stable},
we extend the results in Section~\ref{sec:main_result} to the case where the local incoherence parameters are unevenly distributed.
The following theorem is analogous to Theorem~\ref{thm:uniqueness} and provides a sufficient condition for recovery of sparse signals in a transform domain when measurements are sampled according to a variable density.

\begin{theorem}
\label{thm:uniqueness_vd}
Suppose $T: \cz^n \to \cz^N$ is injective.
Let $\mu_k$ and $\tilde{\mu}_k$ be defined by
\begin{equation}
\label{eq:loc_incoherenceT}
\mu_k = n \norm{\gamma^{1/2} T e_k}_\infty^2
\quad \mathrm{and} \quad
\tilde{\mu}_k = n \norm{\gamma^{-1/2} \widetilde{T} e_k}_\infty^2
\end{equation}
for all $k \in [n]$, where $\gamma$ is defined in \eqref{eq:defgamma} from $T$ and $\widetilde{T} = (T^\dagger)^*$.
Let $\Omega = \{\omega_1, \ldots, \omega_m\}$ be a multi-set of random indices
where $\omega_k$s are independent copies of a random variable $\omega$ with the following distribution:
\begin{equation}
\label{eq:variable_density}
\mathbb{P}(\omega = k) = \frac{\sqrt{\mu_k \tilde{\mu}_k}}{\sum_{j=1}^n {\sqrt{\mu_j \tilde{\mu}_j}}}, \quad \forall k \in [n].
\end{equation}
Suppose $T x$ is $s$-sparse.
Then with probability $1 - e^{-\beta} - 3/n$, $x$ is the unique minimizer to \eqref{eq:ell1min} provided
\[
m \geq \frac{C(1+\beta) \bar{\mu} s}{1 - \norm{\gamma T^* T - I_n}} \left[ \log N
+ \log \left( \norm{T}_{1 \to 2} \norm{T^\dagger}_{2 \to \infty} \right) \right],
\]
where $\bar{\mu}$ is defined as
\begin{equation}
\label{eq:defbarmu}
\bar{\mu} := \frac{1}{n} \sum_{k=1}^n \sqrt{\mu_k \tilde{\mu}_k}
\end{equation}
\end{theorem}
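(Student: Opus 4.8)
The plan is to follow the golfing-scheme proof of Theorem~\ref{thm:uniqueness}, replacing uniform sampling by the variable density \eqref{eq:variable_density} and tracking how the local incoherence parameters enter the concentration bounds. First I would reduce exact recovery to the existence of an (inexact) dual certificate together with the local injectivity furnished by the standing assumption that $T$ is injective, exactly as in Theorem~\ref{thm:uniqueness}. Writing $J = \mathrm{supp}(Tx)$ with $|J| \le s$, the goal is to produce $q \in \cz^N$ with $T^* q \in \mathrm{range}(S_\Omega^*)$, with $\Pi_J q$ close to $\mathrm{sgn}(Tx)$ and $\norm{\Pi_{J^c} q}_\infty$ strictly below $1$. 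The deterministic part of this reduction does not depend on the sampling distribution and can be reused verbatim.

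The essential modification is to introduce importance weights. Since $\omega_k$ is drawn from \eqref{eq:variable_density} with $p_k := \mathbb{P}(\omega = k)$, I would work with the weighted sampling operator whose per-sample weight is $w_{\omega_j} := 1/p_{\omega_j}$, so that $\mathbb{E}[\, w_\omega e_\omega e_\omega^* \,] = \sum_{k} p_k p_k^{-1} e_k e_k^* = I_n$. This restores the same unbiasedness that $\tfrac{n}{m} S_\Omega^* S_\Omega$ enjoys in the uniform case, so that every expectation appearing in the golfing scheme is unchanged and only the variance and uniform-bound terms in the Bernstein inequalities are affected.

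The point of the particular density \eqref{eq:variable_density} is that it flattens these per-term bounds to the average incoherence $\bar\mu$. Because the relevant operator $T S_\Omega^* S_\Omega T^\dagger$ uses $T$ and $\widetilde T = (T^\dagger)^*$ asymmetrically, the $k$th summand contributes a product of a $T$-factor, $w_k \norm{\gamma^{1/2} T e_k}_\infty^2 = \mu_k/(n p_k)$, and a $\widetilde T$-factor, $w_k \norm{\gamma^{-1/2}\widetilde T e_k}_\infty^2 = \tilde\mu_k/(n p_k)$. Plugging $p_k = \sqrt{\mu_k \tilde\mu_k}/(n\bar\mu)$ makes the geometric mean $\sqrt{\mu_k\tilde\mu_k}/(n p_k)$ of these two factors identically $\bar\mu$ for every $k$, which is precisely the quantity that governs the matrix and vector Bernstein bounds in the Chen--Chi style argument. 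I would then run the golfing scheme: partition the $m$ indices into $O(\log N)$ batches, build the certificate iteratively, and at each step invoke (i) a weighted near-isometry bound on $\Pi_J$ analogous to \eqref{eq:rboplike}, (ii) geometric shrinkage of the support residual $\mathrm{sgn}(Tx) - \Pi_J q$, and (iii) the off-support bound $\norm{\Pi_{J^c} q}_\infty < 1$. Each estimate now carries $\bar\mu$ in place of the worst-case $\mu$, and collecting the failure probabilities over the batches yields the stated lower bound on $m$, with $\bar\mu$ from \eqref{eq:defbarmu} replacing $\mu$.

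The main obstacle I anticipate is controlling the variance and uniform terms simultaneously under the weights: dividing by $p_k$ amplifies the contribution of rarely sampled coordinates, and one must verify that the asymmetric $T$/$\widetilde T$ split keeps both the spectral variance and the almost-sure per-term bound at the scale $\bar\mu$ rather than at $\max_k \mu_k/(n p_k)$. Establishing that the geometric-mean cancellation persists not only for the diagonal quantities above but inside the full matrix Bernstein variance proxy, and checking that the off-support $\ell_\infty$ estimate pairs the $\widetilde T$-side local incoherence with the matching weight, are the steps that will require the most care.
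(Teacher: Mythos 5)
Your plan rests on exactly the cancellation the paper exploits --- the importance weight $1/(np_k)=\bar\mu/\sqrt{\mu_k\tilde\mu_k}$ turns the product of the local incoherences of $Te_k$ and $\widetilde Te_k$ into the constant $\bar\mu$ --- but the paper packages it differently, and the packaging is precisely what disposes of the difficulty you flag in your last paragraph. Instead of rerunning the golfing scheme with a single scalar weight $w_{\omega_j}=1/p_{\omega_j}$ attached to each sample, the paper \emph{splits} the weight asymmetrically between the two sides of the rank-one summands: it defines $W=\sqrt{\bar\mu}\,T\,[\diag(\nu)]^{-1}$ and $\widetilde W=\sqrt{\bar\mu}\,\widetilde T\,[\diag(\tilde\nu)]^{-1}$ with $\nu_k=\sqrt{\mu_k}$ and $\tilde\nu_k=\sqrt{\tilde\mu_k}$, so that \emph{each} of $W$ and $\widetilde W$ separately has uniform local incoherence $\bar\mu$ while the pair still satisfies $\frac{n}{m}\,\mathbb{E}\,W S_\Omega^* S_\Omega \widetilde W^*=TT^\dagger$ under the density \eqref{eq:variable_density}; the change of variables $g=\sqrt{\bar\mu}\,[\diag(\nu)]^{-1}g'$ then makes \eqref{eq:ell1min} equivalent to the same program with $W$ in place of $T$, and the proof of Theorem~\ref{thm:uniqueness} is invoked wholesale with incoherence parameter $\bar\mu$, with no Bernstein bound re-derived. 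This matters because your worry is genuine: the variance proxies in Lemmas~\ref{lemma:E1}--\ref{lemma:E3} are lopsided (they carry $\norm{\Pi_J Te_\omega}_2^2$ against a single power of $\widetilde Te_\omega e_\omega^* T^\dagger$), so under a single scalar weight one side is charged $\bar\mu\sqrt{\mu_k/\tilde\mu_k}$ rather than $\bar\mu$ and the geometric-mean cancellation you describe only rescues the almost-sure bound, not the variance term. Only the two-sided diagonal rescaling makes every factor in every estimate come out at scale $\bar\mu$. So your route is workable, but you would need to adopt this asymmetric split before the concentration estimates close; once you have it, the whole argument collapses to the paper's short reduction to Theorem~\ref{thm:uniqueness}.
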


Compared to Theorem~\ref{thm:uniqueness}, Theorem~\ref{thm:uniqueness_vd} provides a performance guarantee at a smaller sample complexity,
where the worst-case incoherence parameter is replaced by the average incoherence parameter $\bar{\mu}$.
Note that $\bar{\mu}$ is always no greater than the worst-case incoherence parameter $(\max_k \mu_k)^{1/2} (\max_k \tilde{\mu}_k)^{1/2}$.
In particular when there exist dominant $\mu_k$s or $\tilde{\mu}_k$s compared to other incoherence parameters,
the sample complexity of Theorem~\ref{thm:uniqueness_vd} is much smaller than that of Theorem~\ref{thm:uniqueness}.

\begin{proof}[Proof of Theorem~\ref{thm:uniqueness_vd}]
Define
\begin{equation}
\label{eq:defnu}
\nu := [\sqrt{\mu_1}, \ldots, \sqrt{\mu_n}]^\transpose
\quad \mathrm{and} \quad
\tilde{\nu} := [\sqrt{\tilde{\mu}_1}, \ldots, \sqrt{\tilde{\mu}_n}]^\transpose.
\end{equation}
Without loss of generality, we may assume that $\mu_k$s and $\tilde{\mu}_k$s are strictly positive.
Then all entries of $\nu$ and $\tilde{\nu}$ are nonzero.

Using $\nu$ and $\tilde{\nu}$, we construct a pair of weighted transforms $W$ and $\widetilde{W}$ as
\begin{equation}
\label{eq:defW}
W = \sqrt{\bar{\mu}} T [\mbox{diag}(\nu)]^{-1}
\quad \mathrm{and} \quad
\widetilde{W} = \sqrt{\bar{\mu}} \widetilde{T} [\mbox{diag}(\tilde{\nu})]^{-1}.
\end{equation}
Then, $W$ and $\widetilde{W}$ satisfy
\begin{equation}
\label{eq:incoW}
\max_{k \in [n]} \norm{\gamma^{1/2} W e_k}_\infty \leq \sqrt{\frac{\bar{\mu}}{n}}
\quad \mathrm{and} \quad
\max_{k \in [n]} \norm{\gamma^{-1/2} \widetilde{W} e_k}_\infty \leq \sqrt{\frac{\bar{\mu}}{n}}.
\end{equation}
Furthermore, we have
\begin{equation}
\label{eq:isoW}
\begin{aligned}
\mathbb{E} \left( \frac{n}{m} \sum_{j=1}^m W S_\Omega^* S_\Omega \widetilde{W}^* \right)
{} & = \mathbb{E} \left( \frac{n}{m} \sum_{j=1}^m W e_{\omega_j} e_{\omega_j}^* \widetilde{W}^* \right) \\
{} & = \mathbb{E} \left( \frac{n}{m} \sum_{j=1}^m \bar{\mu} T [\mbox{diag}(\nu)]^{-1} e_{\omega_j} e_{\omega_j}^* [\mbox{diag}(\tilde{\nu})]^{-1} (\widetilde{T})^* \right) \\
{} & = \mathbb{E} \left( \frac{n}{m} \sum_{j=1}^m \frac{\bar{\mu}}{\sqrt{\mu_{\omega_j} \tilde{\mu}_{\omega_j}}} T e_{\omega_j} e_{\omega_j}^* T^\dagger \right) \\
{} & = \frac{1}{m} \sum_{j=1}^m T \widetilde{T}^\dagger = T T^\dagger.
\end{aligned}
\end{equation}
Since $\mbox{diag}(\tilde{\nu})$ is an invertible matrix,
$T$ and $W$ span the same subspace and $T T^\dagger$ is an orthogonal projection onto the span of $W$.
Furthermore,
\[
\langle W e_{k'}, \widetilde{W} e_k \rangle = 0, \quad \forall k \neq k'.
\]

Let $g = \sqrt{\bar{\mu}} [\mbox{diag}(\nu)]^{-1} g'$.
Then $T g = W g'$.
Thus \eqref{eq:ell1min} is equivalent to
\begin{equation}
\label{eq:ell1min_vd}
\minimize_{g' \in \cz^n} \, \norm{W g'}_1
\quad \mathrm{subject~to} \quad S_\Omega g' = \bar{\mu}^{-1/2} S_\Omega [\mbox{diag}(\nu)] x.
\end{equation}

Applying Theorem~\ref{thm:uniqueness} to \eqref{eq:ell1min_vd} with incoherence parameter $\bar{\mu}$ completes the proof.
\end{proof}

In the case when we are given sampled measurements corrupted with additive noise,
Theorems~\ref{thm:stability1} and \ref{thm:stability2} are modified
according to the change of the distribution for choosing random sample indices.

\begin{theorem}
\label{thm:stability1_vd}
Suppose the hypotheses of Theorem~\ref{thm:uniqueness_vd} hold.
Let $x^\sharp$ be a noisy version of $x$ that satisfies \eqref{eq:xharp_fid}.
Let $\hat{x}$ be the minimizer to
\begin{equation}
\label{eq:ell1min_noisy_vd}
\minimize_{g \in \cz^n} \, \norm{T g}_1
\quad \mathrm{subject~to} \quad \norm{S_{\Omega'} [\rho \odot (g - x^\sharp)]}_2 \leq \epsilon,
\end{equation}
where $\rho := \bar{\mu}^{-1/2} [\sqrt{\mu_1}, \dots, \sqrt{\mu_n}]^\transpose$.
Then
\[
\norm{\hat{x} - x}_2 \leq
\frac{\sigma_{\max}(T)}{\sigma_{\min}(T)}
\cdot
\frac{\bar{\mu}}{\min_{k \in [n]} \tilde{\mu}_k}
\cdot
\left[ 2 + 28 \sqrt{N} \left( \frac{\max_{k \in [n]} \mu_k}{\min_{k \in [n]} \tilde{\mu}_k} \cdot 3 n \norm{T}_{1 \to 2} \norm{T^\dagger}_{2 \to \infty} + 1 \right) \right] \epsilon.
\]
\end{theorem}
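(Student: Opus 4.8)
The plan is to reduce the variable-density problem \eqref{eq:ell1min_noisy_vd} to the uniform-density problem \eqref{eq:ell1min_noisy} treated in Theorem~\ref{thm:stability1}, reusing the change of variables from the proof of Theorem~\ref{thm:uniqueness_vd}. First I would introduce the weighted transforms $W$ and $\widetilde{W}$ as in \eqref{eq:defW} together with $\nu,\tilde{\nu}$ from \eqref{eq:defnu}, and substitute $g = \sqrt{\bar{\mu}}\,[\diag(\nu)]^{-1} g'$, so that $T g = W g'$. The crucial observation is that under this substitution $\rho \odot g = \bar{\mu}^{-1/2}\diag(\nu)\,g = g'$, since $\rho = \bar{\mu}^{-1/2}[\sqrt{\mu_1},\dots,\sqrt{\mu_n}]^\transpose$. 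Hence the weighted fidelity constraint in \eqref{eq:ell1min_noisy_vd} collapses to the plain $\ell_2$-ball constraint $\norm{S_{\Omega'}(g' - y^\sharp)}_2 \leq \epsilon$ with $y^\sharp := \rho \odot x^\sharp$, and \eqref{eq:ell1min_noisy_vd} becomes exactly the problem \eqref{eq:ell1min_noisy} for the transform $W$ with observed signal $y^\sharp$ and true signal $y := \rho \odot x$.

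Next I would verify that $y$ is a legitimate target for Theorem~\ref{thm:stability1} applied to $W$. The pair $(W,\widetilde{W})$ already satisfies the incoherence bound \eqref{eq:incoW} with parameter $\bar{\mu}$ and the isotropy identity \eqref{eq:isoW} under the uniform law on $[n]$, which are precisely the hypotheses driving the proof of Theorem~\ref{thm:stability1}; note that $\widetilde{W}$, and not $(W^\dagger)^*$, is the correct biorthogonal companion here, so it is $\widetilde{W}$ that enters in place of $(T^\dagger)^*$. For feasibility I would control $\norm{S_{\Omega'}(y - y^\sharp)}_2 = \norm{S_{\Omega'}[\rho \odot (x - x^\sharp)]}_2$ by pulling out $\max_{k}\rho_k = \sqrt{\max_k \mu_k/\bar{\mu}}$ and invoking the fidelity hypothesis \eqref{eq:xharp_fid}; this reweighting is the source of the $\max_k \mu_k$ in the final bound. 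Applying Theorem~\ref{thm:stability1} to $W$ then yields a bound on $\norm{\hat{g}' - y}_2$ governed by $\sigma_{\max}(W)/\sigma_{\min}(W)$ and $n\,\norm{W}_{1\to2}\norm{\widetilde{W}}_{1\to2}$.

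Finally I would transform the error bound back through $\hat{x} - x = \sqrt{\bar{\mu}}\,[\diag(\nu)]^{-1}(\hat{g}' - y)$, which contributes the operator norm $\sqrt{\bar{\mu}}\,\norm{[\diag(\nu)]^{-1}} = \sqrt{\bar{\mu}/\min_k \mu_k}$, and bound each $W$-quantity in terms of the corresponding $T$-quantity and the extremal incoherence parameters. Using \eqref{eq:defW} I would show $\norm{W}_{1\to2} \leq \sqrt{\bar{\mu}/\min_k\mu_k}\,\norm{T}_{1\to2}$ and $\norm{\widetilde{W}}_{1\to2} \leq \sqrt{\bar{\mu}/\min_k\tilde{\mu}_k}\,\norm{T^\dagger}_{2\to\infty}$, with analogous estimates for $\sigma_{\max}(W)$ and $\sigma_{\min}(W)$ via the singular values of $\diag(\nu)$. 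Assembling the back-transform factor, the condition-number factor, and the feasibility reweighting is what is meant to produce the stated prefactor $\bar{\mu}/\min_k\tilde{\mu}_k$ and the interior factor $(\max_k\mu_k)/(\min_k\tilde{\mu}_k)$.

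The hard part will be the bookkeeping in this last step: a naive application of submultiplicativity to $\sigma_{\max}(W)/\sigma_{\min}(W)$, to $\norm{W}_{1\to2}\norm{\widetilde{W}}_{1\to2}$, and to the back-substitution tends to produce loose products of condition-number-type ratios (e.g.\ $\sqrt{\max_k\mu_k/\min_k\mu_k}$) rather than the clean factors claimed. Getting the diagonal reweightings $\diag(\nu)$ and $\diag(\tilde{\nu})$ to cancel in exactly the right places, and combining them consistently with the feasibility factor $\sqrt{\max_k\mu_k/\bar{\mu}}$ absorbed from the $\rho$-weighted constraint, is the delicate point; a secondary subtlety is reconciling the unweighted fidelity hypothesis \eqref{eq:xharp_fid} with the $\rho$-weighted constraint in \eqref{eq:ell1min_noisy_vd}.
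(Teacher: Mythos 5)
Your proposal follows essentially the same route as the paper: the diagonal change of variables $g = \Lambda g'$ with $\Lambda = \sqrt{\bar{\mu}}\,[\diag(\nu)]^{-1}$ turns \eqref{eq:ell1min_noisy_vd} into the plain problem \eqref{eq:ell1min_noisy} for the weighted transform $W = T\Lambda$ (the paper's \eqref{eq:ell1min_noisy_vd2}), Theorem~\ref{thm:stability1} is invoked for $W$, and the error is mapped back through $\Lambda$ with the diagonal operator norms supplying the incoherence ratios. The only divergence is bookkeeping: the paper extracts the $\max_{k}\mu_k$ and $\min_k\tilde{\mu}_k$ factors from $\norm{T\Lambda}_{1\to2}\leq\norm{T}_{1\to2}\norm{\Lambda}$ and $\norm{\widetilde{\Lambda}^{-1}T^\dagger}_{2\to\infty}\leq\norm{\widetilde{\Lambda}^{-1}}\norm{T^\dagger}_{2\to\infty}$ rather than from a feasibility reweighting of \eqref{eq:xharp_fid}, but the reduction itself is identical.
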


Like Theorem~\ref{thm:uniqueness_vd}, Theorem~\ref{thm:stability1_vd} provides a performance guarantee
at a lower sample complexity (in order) when compared to Theorem~\ref{thm:stability1}.
On the other hand, the noise amplification factor of Theorem~\ref{thm:stability1_vd} is larger than that of Theorem~\ref{thm:stability1}
by a factor that depends on the distribution of the local incoherence parameters $\{(\mu_k,\tilde{\mu}_k)\}_{k=1}^n$.

\begin{proof}[Proof of Theorem~\ref{thm:stability1_vd}]
Let $W$ and $\widetilde{W}$ be defined in \eqref{eq:defW}.
Let $\nu$ and $\tilde{\nu}$ be defined in \eqref{eq:defnu}.
Define
\begin{equation}
\label{eq:defLambda}
\Lambda := \sqrt{\bar{\mu}} [\mbox{diag}(\nu)]^{-1}
\quad \mathrm{and} \quad
\widetilde{\Lambda} = \sqrt{\bar{\mu}} [\mbox{diag}(\tilde{\nu})]^{-1}.
\end{equation}
Then $W = T \Lambda$ and \eqref{eq:ell1min_noisy} is equivalent to
\begin{equation}
\label{eq:ell1min_noisy_vd2}
\minimize_{g' \in \cz^n} \, \norm{W g'}_1
\quad \mathrm{subject~to} \quad \norm{S_{\Omega'} g' - S_{\Omega'} \Lambda^{-1} x^\sharp}_2 \leq \epsilon.
\end{equation}

Let $\breve{x}$ denote the minimizer to \eqref{eq:ell1min_noisy_vd2}.
Then, the minimizer $\hat{x}$ to \eqref{eq:ell1min_noisy} is represented as $\hat{x} = \Lambda \breve{x}$,
i.e. $\breve{x} = \Lambda^{-1} \hat{x}$.
By applying Theorem~\ref{thm:stability1} to \eqref{eq:ell1min_noisy_vd2}, we obtain
\begin{equation}
\label{eq:pf_thm_stability1_vd:bnd}
\begin{aligned}
\norm{T \hat{x} - T x}_2
{} & = \norm{W \breve{x} - W \Lambda^{-1} x}_2 \\
{} & \leq \left\{ 2 + 28 \sqrt{N} \left( 3 n \norm{T \Lambda}_{1 \to 2} \norm{\widetilde{\Lambda}^{-1} T^\dagger}_{2 \to \infty} + 1 \right) \right\} \epsilon \norm{W}.
\end{aligned}
\end{equation}

The proof completes by applying the following inequalities to \eqref{eq:pf_thm_stability1_vd:bnd}:
\[
\norm{\widetilde{\Lambda}^{-1} T^\dagger}_{2 \to \infty} \leq \norm{\widetilde{\Lambda}^{-1}} \norm{T^\dagger}_{2 \to \infty}
= \frac{\bar{\mu} \norm{T^\dagger}_{2 \to \infty}}{\min_{k \in [n]} \tilde{\mu}_k}
\]
and
\[
\norm{T \Lambda}_{1 \to 2} \leq \norm{T}_{1 \to 2} \norm{\Lambda}
= \frac{\max_{k \in [n]} \mu_k \norm{T^\dagger}_{2 \to \infty}}{\bar{\mu}}.
\]
\end{proof}

\begin{theorem}
\label{thm:stability2_vd}
Suppose $T: \cz^n \to \cz^N$ is injective.
Suppose that $T$ and $\widetilde{T} = (T^\dagger)^*$ satisfy \eqref{eq:loc_incoherenceT}
with parameters $\{\mu_k\}_{k=1}^n$, $\{\tilde{\mu}_k\}_{k=1}^n$, and $\gamma$.
Let $\bar{\mu}$ be defined in \eqref{eq:defbarmu}.
Let $\Omega = \{\omega_1, \ldots, \omega_m\}$ be a multi-set of random indices
where $\omega_k$s are i.i.d. copies of a random variable $\omega$ with the distribution in \eqref{eq:variable_density}.
Let $\hat{x}$ be the minimizer to \eqref{eq:ell1min_noisy_vd} with $x^\sharp$ satisfying \eqref{eq:xharp_fid}.
Then there exist numerical constants $C, c > 0$ for which the following holds.
With probability $1 - N^{-4}$, we have
\[
\norm{\hat{x} - x}_2 \leq
\frac{\sigma_{\max}(T)}{\sigma_{\min}(T)}
\cdot
\frac{\bar{\mu}}{\min_{k \in [n]} \tilde{\mu}_k}
\cdot
\left[ 14\sqrt{N} + \frac{n}{m} \left\{ \frac{\max_{k \in [n]} \mu_k}{\min_{k \in [n]} \tilde{\mu}_k} \cdot \norm{T}_{1 \to 2} \norm{T^\dagger}_{2 \to \infty} (|\Omega|-|\Omega'|) +1 \right\} \right] \epsilon,
\]
provided
\[
m \geq \frac{C \bar{\mu} s \log^4 N}{1 - \norm{\gamma T^* T - I_n}}.
\]
Furthermore, if $T^* T = I_n$, then with probability $1 - N^{-4}$,
\[
\norm{\hat{x} - x}_2 \leq
\frac{\sigma_{\max}(T)}{\sigma_{\min}(T)}
\cdot
\frac{\bar{\mu}}{\min_{k \in [n]} \tilde{\mu}_k}
\cdot
\left( 14\sqrt{N} + \frac{n}{m} \cdot \frac{\max_{k \in [n]} \mu_k}{\min_{k \in [n]} \tilde{\mu}_k} \cdot R \right) \epsilon,
\]
where $R$ is the count of the most repeated elements in $\Omega$.
\end{theorem}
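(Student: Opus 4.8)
The plan is to mirror the proof of Theorem~\ref{thm:stability1_vd} almost verbatim, reducing the variable-density problem to a uniform-density problem for a reweighted transform and then invoking the sharper noise bound of Theorem~\ref{thm:stability2} in place of Theorem~\ref{thm:stability1}. Concretely, I would reuse the weighting apparatus from the proof of Theorem~\ref{thm:uniqueness_vd}: take $\nu,\tilde{\nu}$ as in \eqref{eq:defnu}, form the weighted transforms $W = T\Lambda$ and $\widetilde{W} = \widetilde{T}\widetilde{\Lambda}$ with $\Lambda,\widetilde{\Lambda}$ as in \eqref{eq:defLambda}, and recall from \eqref{eq:incoW} that $W$ and $\widetilde{W}$ have flattened local incoherence parameters all bounded by $\bar{\mu}$, while \eqref{eq:isoW} shows that sampling under the variable density \eqref{eq:variable_density} pushes forward to the uniform density on $[n]$ for the pair $(W,\widetilde{W})$. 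Exactly as the proof of Theorem~\ref{thm:uniqueness_vd} invoked Theorem~\ref{thm:uniqueness} for this reweighted problem with worst-case parameter $\bar{\mu}$, I would invoke Theorem~\ref{thm:stability2} for the same reweighted problem, so that the sample-complexity requirement \eqref{eq:noisy_samp_comp} for $W$ becomes precisely the stated condition $m \geq C\bar{\mu}s\log^4 N/(1-\norm{\gamma T^*T - I_n})$.

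Next I would record the equivalence of the two optimization problems. The change of variables $g = \Lambda g'$ turns \eqref{eq:ell1min_noisy_vd} into a uniform-density instance of \eqref{eq:ell1min_noisy} for $W$, because the weight vector $\rho$ is exactly the diagonal of $\Lambda^{-1}$, so that
\[
S_{\Omega'}[\rho \odot (g - x^\sharp)] = S_{\Omega'}(g' - \Lambda^{-1} x^\sharp).
\]
Writing $\breve{x} = \Lambda^{-1}\hat{x}$ for the minimizer of the $W$-problem and applying Theorem~\ref{thm:stability2} to it yields, on the event of probability $1-N^{-4}$, a bound on the error governed by $\sigma_{\max}(W)/\sigma_{\min}(W)$, $\norm{W}_{1\to 2}$, $\norm{W^\dagger}_{2\to\infty}$, and $|\Omega|-|\Omega'|$.

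The remaining step is to translate the $W$-domain bound into the $T$-domain quantities in the statement, using $\hat{x} - x = \Lambda(\breve{x} - \Lambda^{-1}x)$ together with the operator-norm inequalities from the proof of Theorem~\ref{thm:stability1_vd}. Since $\Lambda$ and $\widetilde{\Lambda}$ are diagonal with entries fixed by $\{\mu_k\}$ and $\{\tilde{\mu}_k\}$, the estimates $\norm{T\Lambda}_{1\to 2}\leq \norm{T}_{1\to 2}\norm{\Lambda}$, $\norm{\widetilde{\Lambda}^{-1}T^\dagger}_{2\to\infty}\leq \norm{\widetilde{\Lambda}^{-1}}\norm{T^\dagger}_{2\to\infty}$, and the corresponding control of $\sigma_{\max}(W)/\sigma_{\min}(W)$ produce exactly the prefactor $\bar{\mu}/\min_{k}\tilde{\mu}_k$ together with the inner ratio $\max_{k}\mu_k/\min_{k}\tilde{\mu}_k$ multiplying $\norm{T}_{1\to 2}\norm{T^\dagger}_{2\to\infty}(|\Omega|-|\Omega'|)$, which is the general bound.

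The hard part will be the refined statement under $T^*T = I_n$. In that regime $\gamma = 1$, $\widetilde{T} = T$, $\mu_k = \tilde{\mu}_k$, and $W = \widetilde{W}$, but $W^*W = \bar{\mu}\,[\diag(\nu)]^{-2} \neq I_n$, so the multiplicity-based ($R$-dependent) bound of Theorem~\ref{thm:stability2}, which relies on the transform having orthonormal columns, cannot be invoked as a black box for $W$. The obstacle is therefore to recover the $R$-refinement directly rather than through $W$: I would carry the multiplicity-weighted sampling operator $\tfrac{n}{m}S_\Omega^* S_\Omega = \tfrac{n}{m}\diag(d_1,\ldots,d_n)$ through the reweighting and bound the data-fidelity term using the orthonormal columns of $T$, so that $\norm{T\diag(d)T^*} = \max_k d_k = R$, with the diagonal factors from $\Lambda$ and $\widetilde{\Lambda}$ contributing the ratio $\max_k \mu_k/\min_k \tilde{\mu}_k$. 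This replaces $|\Omega|-|\Omega'|$ by $R$ while leaving the rest of the argument untouched.
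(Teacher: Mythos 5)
Your proposal follows the paper's own proof exactly: reweight by $\Lambda$ so that the variable-density problem \eqref{eq:ell1min_noisy_vd} becomes a uniform-density instance of \eqref{eq:ell1min_noisy} for $W=T\Lambda$ with flat incoherence parameter $\bar{\mu}$ (via \eqref{eq:incoW} and \eqref{eq:isoW}), invoke Theorem~\ref{thm:stability2} for that instance, and convert back using $\norm{T\Lambda}_{1\to 2}\leq\norm{T}_{1\to2}\norm{\Lambda}$ and $\norm{\widetilde{\Lambda}^{-1}T^\dagger}_{2\to\infty}\leq\norm{\widetilde{\Lambda}^{-1}}\norm{T^\dagger}_{2\to\infty}$ --- the paper's proof is precisely this three-line reduction. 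Your worry about the second claim is legitimate (the paper silently applies the $R$-refinement of Theorem~\ref{thm:stability2} even though $W^*W=\Lambda^2\neq I_n$), but the repair is even easier than you suggest: for diagonal invertible $\Lambda$ one has $W^\dagger=\Lambda^{-1}T^*$ and hence $W e_k e_k^* W^\dagger = T\Lambda e_k e_k^*\Lambda^{-1}T^* = T e_k e_k^* T^*$, so the summands are still orthogonal projections onto mutually orthogonal one-dimensional subspaces and the estimate $\norm{W(S_\Omega^* S_\Omega - S_{\Omega'}^* S_{\Omega'})W^\dagger}\leq R-1$ from \eqref{eq:pf_thm_stability2:bnd8} carries over verbatim, which yields the stated bound (indeed without even needing the extra factor $\max_k\mu_k/\min_k\tilde{\mu}_k$ in front of $R$).
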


\begin{proof}[Proof of Theorem~\ref{thm:stability1_vd}]
Let $W$ and $\widetilde{W}$ be defined in \eqref{eq:defW}.
Let $\nu$ and $\tilde{\nu}$ be defined in \eqref{eq:defnu}.
Let $\Lambda$ and $\widetilde{\Lambda}$ be defined in \eqref{eq:defLambda}.
In the proof of Theorem~\ref{thm:stability1_vd},
we have shown that \eqref{eq:ell1min_noisy_vd} is equivalent to \eqref{eq:ell1min_noisy_vd2}.
In the proof of Theorem~\ref{thm:uniqueness_vd},
we have shown that $W$ and $\widetilde{W}$ satisfy \cref{eq:incoW,eq:isoW}.
Therefore, applying Theorem~\ref{thm:stability2} to \eqref{eq:ell1min_vd} with incoherence parameter $\bar{\mu}$ completes the proof.
\end{proof}

\section{Extension to Group Sparsity Models}
\label{sec:main_result_gs}

In Sections~\ref{sec:main_result} and \ref{sec:main_result_vd}, we considered the sparsity model in a transform domain. In applications, the transform $T x$ exhibits additional structures -- group sparsity. For example, in compressed sensing of 2D signals, the sparsifying transform $\Phi = [\Phi_1^\top, \Phi_2^\top]^\top$ can be the concatenation of the horizontal and vertical finite difference operators $\Phi_1$ and $\Phi_2$. Anisotropic total variation encourages the sparsity of $\Phi f$ \cite{krahmer2014stable}. On the other hand, one can choose isotropic total variation, assuming that $\Phi_1 f$ and $\Phi_2 f$ are jointly sparse (see the experiments section). For another example, in compressed sensing of color images or hyperspectral images, the sparse codes acquired from applying the analysis operator to different channels are usually assumed to be jointly sparse. To exploit such structures, we extend the results in the previous sections to group sparsity models in a transform domain.
More specifically, we assume that the transform $T x$ of unknown signal $x$ via $T: \cz^n \to \cz^L$ is $(s,t)$ strongly group sparse in the following sense.

\begin{defi}[{\cite{huang2010benefit}}]
Let $\calG = \{\calG_1, \ldots, \calG_N\}$ be a partition of $[L]$, i.e. $\bigcup_{j \in [N]} \calG_j = [L]$ and $\calG_j \bigcap \calG_{j'} = \emptyset$ for $j \neq j'$. A vector $z \in \cz^L$ is $(s,t)$ strongly group sparse with respect to $\calG$ if there exists $J \subset [N]$ such that
\[
\supp{z} \subset \calG_J,
\quad
|\calG_J| \leq t,
\quad \mathrm{and} \quad
|J| \leq s,
\]
where $\calG_J$ is defined by
\[
\calG_J := \bigcup_{j \in J} \calG_j.
\]
\end{defi}

An atomic norm for this group sparsity model is given by
\[
\tnorm{z}_{\calG,1} := \sum_{j \in [N]} \norm{\Pi_{\calG_j} z}_2.
\]
The dual norm of $\tnorm{\cdot}_{\calG,1}$ is defined by
\[
\tnorm{z}_{\calG,\infty} := \sup_{\zeta \in \cz^L: \tnorm{\zeta}_{\calG,1} \leq 1} |\langle \zeta, z \rangle|,
\]
which is equivalently rewritten as
\[
\tnorm{z}_{\calG,\infty} = \max_{j \in [N]} \norm{\Pi_{\calG_j} z}_2.
\]

A subgradient $\tilde{z}$ of $\tnorm{\cdot}_{\calG,1}$ at $z$ is given by
\begin{equation}
\label{eq:subgrad_mixed}
\Pi_{\calG_j} \tilde{z}
=
\begin{dcases}
\frac{\Pi_{\calG_j} z}{\norm{\Pi_{\calG_j} z}_2}, & \Pi_{\calG_j} z \neq 0, \\
0, & \mathrm{otherwise},
\end{dcases}
\qquad \forall j \in [N].
\end{equation}

In a special case where $\calG_j = \{j\}$ for all $j \in [N]$,
the strong group sparsity level reduces to the conventional sparsity model.
The analogy between the two models is summarized in Table~\ref{tab:replace_notation}.
All the results in the previous section generalize to the strong group sparsity model according to this analogy.
In the below, we state the extended results as Theorems, the proofs of which are obtained in a straightforward way
by modifying the proofs of analogous theorems and lemmas for the usual sparsity model according to Table~\ref{tab:replace_notation}.
Thus, we do not repeat the proofs in this section.
(The only exception is Lemma~\ref{lemma:E3} and we provide an analogous Lemma~\ref{lemma:E3'} in Section~\ref{subsec:riplike_lemmas}.)

\begin{table}
\newcolumntype{C}{>{\centering\arraybackslash}p{30ex}}
\caption{Analogy between Sparsity Model and Group Sparsity Model}
\label{tab:replace_notation}
\begin{center}
\begin{tabular}{>{\centering\arraybackslash}p{20ex}|C|C}
& Sparsity Model & Group Sparsity Model \\\hline\hline
$\mbox{dim}(R(T))$ & $N$ & $L = \sum_{j=1}^N |\calG_j|$ \\\hline
\multirow{2}{*}{support} & \multirow{2}{*}{$\supp{z} = \{ j :~ \Pi_{\{j\}} z \neq 0 \}$} & $\gsupp{z} = \{ j :~ \Pi_{\calG_j} z \neq 0 \}$ \\
& & $\supp{z} = \bigcup_{j \in \gsupp{z}} \calG_j$ \\\hline
group sparsity level & $\norm{z}_0 = |\supp{z}|$ & $\tnorm{z}_{\calG,0} = |\gsupp{z}|$ \\\hline
total sparsity level & $\norm{z}_0 = |\supp{z}|$ & $\norm{z}_0 = |\supp{z}|$ \\\hline
atomic norm & $\norm{z}_1 = \sum_{j=1}^N |\Pi_{\{j\}} z|$ & $\tnorm{\cdot}_{\calG,1} = \sum_{j=1}^N \norm{\Pi_{\calG_j} z}_2$ \\\hline
dual norm & $\norm{\cdot}_\infty = \max_{j \in [N]} |\Pi_{\{j\}} z|$ & $\tnorm{\cdot}_{\calG,\infty} = \max_{j \in [N]} \norm{\Pi_{\calG_j} z}_2$ \\\hline
subgradient & \multirow{2}{*}{\eqref{eq:sgn}} & \multirow{2}{*}{\eqref{eq:subgrad_mixed}} \\
of atomic norm & & \\\hline
incoherence & \multirow{2}{*}{$\mu$} & \multirow{2}{*}{$\mu_\calG$} \\
parameter & & \\\hline
local incoherence & \multirow{2}{*}{$\{\mu_k\}_{k=1}^n$} & \multirow{2}{*}{$\{\mu_{\calG,k}\}_{k=1}^n$} \\
parameters & & \\\hline
\end{tabular}
\end{center}
\end{table}

When $T x$ is strongly group sparse,
we propose to estimate $T x$ by solving the following optimization problem, which generalizes \eqref{eq:ell1min}:
\begin{equation}
\label{eq:mixedmin}
\minimize_{g \in \cz^n} \, \tnorm{T g}_{\calG,1}
\quad \mathrm{subject~to} \quad S_\Omega g = S_\Omega x.
\end{equation}
Then the following theorem, which is analogous to Theorem~\ref{thm:uniqueness}, provides a performance guarantee for \eqref{eq:mixedmin}.

\begin{theorem}
\label{thm:uniqueness_gs}
Let $\calG = \{\calG_1, \ldots, \calG_N\}$ be a partition of $[L]$.
Suppose $T: \cz^n \to \cz^L$ is injective.
Let $\mu_\calG$ be given by
\begin{equation}
\label{eq:incoherence_gs}
\mu_\calG = \max_{k \in [n]} \max_{j \in [N]}
\max\left\{
n \norm{\gamma^{1/2} \Pi_{\calG_j} T e_k}_2^2,
n \norm{\gamma^{-1/2} \Pi_{\calG_j} \widetilde{T} e_k}_2^2
\right\},
\end{equation}
where $\gamma$ is defined in \eqref{eq:defgamma} from $T$ and $\widetilde{T} = (T^\dagger)^*$.
Let $\Omega = \{\omega_1,\ldots,\omega_m\}$ be a multi-set of random indices
where $\omega_k$s are i.i.d. following the uniform distribution on $[n]$.
Suppose $T x$ is $(s,t)$ strongly group sparse with respect $\calG$.
Then with probability $1 - e^{-\beta} - 3/n$, $x$ is the unique minimizer to \eqref{eq:mixedmin} provided
\begin{equation}
\label{eq:samprate_gs}
m \geq \frac{C(1+\beta) \mu_\calG s}{1 - \norm{\gamma T^* T - I_n}} \left[ \log N
+ \log \left( \norm{T}_{1 \to 2} \norm{T^\dagger}_{2 \to \infty} \right) \right].
\end{equation}
\end{theorem}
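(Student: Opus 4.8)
The plan is to reproduce the proof of Theorem~\ref{thm:uniqueness} line by line, translating every scalar-sparsity object into its group-sparsity counterpart through the dictionary in Table~\ref{tab:replace_notation}. Write $J = \gsupp{Tx} \subset [N]$ for the set of active groups, so $|J| \le s$, and let $v^\star \in \cz^L$ be the subgradient of $\tnorm{\cdot}_{\calG,1}$ at $Tx$ given by \eqref{eq:subgrad_mixed}, which is supported on $\calG_J$ and satisfies $\tnorm{v^\star}_{\calG,\infty} \le 1$. By convex duality, $x$ is the unique minimizer of \eqref{eq:mixedmin} once we exhibit two ingredients, both group-analogs of the pillars behind Theorem~\ref{thm:uniqueness}: (i) an \emph{inexact dual certificate}, a vector $v \in \cz^L$ with $T^* v$ supported on the sampled coordinates, whose restriction to $\calG_J$ is close to $v^\star$ and whose off-support group-dual norm $\max_{j \notin J}\norm{\Pi_{\calG_j} v}_2$ is strictly less than one; and (ii) a \emph{restricted injectivity} on $\calG_J$ ensuring that no nonzero feasible perturbation can decrease $\tnorm{T\cdot}_{\calG,1}$.

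First I would obtain the restricted injectivity from the group version of the RIP-like estimate \eqref{eq:rboplike} of Theorem~\ref{thm:rboplike}, i.e. a bound on $\norm{\Pi_{\calG_J}(\tfrac{n}{m}TS_\Omega^* S_\Omega T^\dagger - TT^\dagger)\Pi_{\calG_J}}$ with the group incoherence $\mu_\calG$ of \eqref{eq:incoherence_gs} in place of $\mu$ and the union bound now ranging over the $N$ groups. This controls the spectrum of the sampled operator on the span relevant to the certificate and simultaneously supplies the contraction factor for the golfing iteration.

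Next I would construct the certificate by the golfing scheme, exactly as for Theorem~\ref{thm:uniqueness} and as in \cite{gross2011recovering,chen2014robust}. Partition $\Omega$ into independent batches of size chosen so that their number $\ell$ is of order $\log N$, and write $\A_i$ for the appropriately $\gamma$-rescaled and normalized batch operator whose expectation is the range projection $TT^\dagger$; this $\gamma$-rescaling is precisely the device that circumvents the failure of isotropy when $TT^*$ is rank-deficient. Iterate $q_i = \Pi_{\calG_J}(TT^\dagger - \A_i)q_{i-1}$ with $q_0 = v^\star$ and accumulate $v = \sum_i \A_i q_{i-1}$. The restricted-spectrum bound forces $\norm{q_i} \le \tfrac12 \norm{q_{i-1}}$ on $\calG_J$, so after $\ell$ steps $\Pi_{\calG_J} v$ is within $2^{-\ell}$ of $v^\star$, and taking $\ell$ of order $\log N$ makes this deviation negligible.

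The main obstacle is the off-support bound $\max_{j \notin J}\norm{\Pi_{\calG_j} v}_2 < 1$, which is where the genuinely new Lemma~\ref{lemma:E3'} is needed. At each golfing step one must control the group-$\ell_{2,\infty}$ deviation $\max_{j \in [N]}\norm{\Pi_{\calG_j}(\A_i - TT^\dagger)q}_2$ in terms of $\norm{q}_2$. Replacing the scalar modulus $|\cdot|$ by the block norm $\norm{\Pi_{\calG_j}\cdot}_2$ turns the scalar Bernstein argument of Lemma~\ref{lemma:E3} into a vector-valued one: the summands $\Pi_{\calG_j} T e_{\omega} e_\omega^* T^\dagger q$ are now random vectors, and their per-group variance is governed exactly by $\mu_\calG$ through the bound $n\norm{\gamma^{1/2}\Pi_{\calG_j} T e_k}_2^2 \le \mu_\calG$. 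A vector Bernstein inequality then yields a per-step off-support contribution that is geometrically summable, and a union bound over the $N$ groups supplies the $\log N$ factor together with the $\log(\norm{T}_{1\to2}\norm{T^\dagger}_{2\to\infty})$ correction, reproducing the sample complexity \eqref{eq:samprate_gs}. Once Lemma~\ref{lemma:E3'} is established, the remaining steps are identical to the scalar proof after the substitutions of Table~\ref{tab:replace_notation}.
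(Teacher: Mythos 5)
Your proposal is correct and follows essentially the same route as the paper: the paper obtains Theorem~\ref{thm:uniqueness_gs} by translating the proof of Theorem~\ref{thm:uniqueness} (inexact dual certificate constructed by the golfing scheme, plus the fundamental estimates) through Table~\ref{tab:replace_notation}, and, exactly as you identify, the only genuinely new ingredient is the off-group-support bound, Lemma~\ref{lemma:E3'}, proved by a vector Bernstein inequality with a union bound over the $N$ groups. One minor routing difference: for the restricted injectivity \eqref{eq:local_isometry_rank_deficient} on the fixed support $\calG_J$ the paper uses the fixed-support Lemma~\ref{lemma:E1} (and Lemma~\ref{lemma:E2} for the $\ell_2$-contraction of the $q_i$) rather than the uniform bound \eqref{eq:rboplike} of Theorem~\ref{thm:rboplike}, which would cost extra logarithmic factors, and the $\log\left( \norm{T}_{1 \to 2} \norm{T^\dagger}_{2 \to \infty} \right)$ term in \eqref{eq:samprate_gs} arises from the number of golfing iterations needed to meet the target \eqref{eq:dualcert_sgn}, not from the union bound over groups.
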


Suppose that $\mu$ and $\mu_\calG$ are the smallest constants satisfying corresponding incoherence conditions.
If $t = \ell s$ and $|\calG_j| = \ell$ for all $j \in [N]$, then
\[
\norm{\Pi_{\calG_j} T e_k}_2 \leq \sqrt{|\calG_j|} \norm{T e_k}_\infty, \quad \forall j \in [N], ~ \forall k \in [n].
\]
Thus, we have $\mu_\calG \leq \ell \mu$ and a sufficient condition for \eqref{eq:samprate_gs} is given by
\[
m \geq \frac{C(1+\beta) \mu t}{1 - \norm{\gamma T^* T - I_n}} \left[ \log N
+ \log \left( \norm{T}_{1 \to 2} \norm{T^\dagger}_{2 \to \infty} \right) \right].
\]
In other words, the sample complexity is proportional to the total sparsity level $t$ and there is no gain from the group structure.
This inequality is tight if each $\Pi_{\calG_j} T e_k$ have nonzero elements of the same magnitude.
Contrarily, if nonzero elements of each $\Pi_{\calG_j} T e_k$ vary a lot in their magnitudes,
$\mu_\calG$ is smaller than $\ell \mu$ and there is gain from the group sparsity structure.

In the presence of noise to measurements,
we generalize the optimization formulation for recovery in \eqref{eq:ell1min_noisy} as follows:
\begin{equation}
\label{eq:mixedmin_noisy}
\minimize_{g \in \cz^n} \, \tnorm{T g}_{\calG,1}
\quad \mathrm{subject~to} \quad \norm{S_{\Omega'} g - S_{\Omega'} x^\sharp}_2 \leq \epsilon.
\end{equation}

The following theorem, analogous to Theorem~\ref{thm:stability1}, provides a performance guarantee for \eqref{eq:mixedmin_noisy}.

\begin{theorem}
\label{thm:stability1_gs}
Suppose the hypotheses of Theorem~\ref{thm:uniqueness_gs} hold.
Let $\hat{x}$ be the minimizer to \eqref{eq:mixedmin_noisy} with $x^\sharp$ satisfying
\[
\norm{S_{\Omega'} (x - x^\sharp)}_2 \leq \epsilon.
\]
Then
\[
\norm{\hat{x} - x}_2 \leq
\frac{\sigma_{\max}(T)}{\sigma_{\min}(T)}
\cdot
\left\{ 2 + 28 \sqrt{N} \left( 3 n \norm{T}_{1 \to 2} \norm{T^\dagger}_{2 \to \infty} + 1 \right) \right\} \epsilon.
\]
\end{theorem}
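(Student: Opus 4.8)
The plan is to mirror the proof of Theorem~\ref{thm:stability1} essentially line by line, replacing every coordinate-wise object by its group counterpart according to Table~\ref{tab:replace_notation}: the norm $\norm{\cdot}_1$ by the group atomic norm $\tnorm{\cdot}_{\calG,1}$, its dual $\norm{\cdot}_\infty$ by $\tnorm{\cdot}_{\calG,\infty}$, the complex sign \eqref{eq:sgn} by the group subgradient \eqref{eq:subgrad_mixed}, the coordinate support by the group support, and the incoherence $\mu$ by $\mu_\calG$ from \eqref{eq:incoherence_gs}. The deterministic part of the argument uses only three facts about the mixed norm, all of which hold verbatim: the group Hölder inequality $|\langle a,b\rangle| \leq \tnorm{a}_{\calG,\infty}\tnorm{b}_{\calG,1}$, the separability of $\tnorm{\cdot}_{\calG,1}$ across the partition $\calG$, and the sandwich $\norm{z}_2 \leq \tnorm{z}_{\calG,1} \leq \sqrt{N}\,\norm{z}_2$ (the right-hand side following from Cauchy--Schwarz over the $N$ groups). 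Consequently no new idea is needed for the reduction itself.

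Concretely, I would set $h := \hat{x} - x$ and first observe that both $x$ and $\hat{x}$ are feasible for \eqref{eq:mixedmin_noisy}, the former by the hypothesis \eqref{eq:xharp_fid}, so the triangle inequality gives $\norm{S_{\Omega'} h}_2 \leq 2\epsilon$. Optimality of $\hat{x}$ gives $\tnorm{T\hat{x}}_{\calG,1} \leq \tnorm{Tx}_{\calG,1}$; writing $T\hat{x} = Tx + Th$, using that $\supp{Tx} \subset \calG_J$ for the group support $J$ of $Tx$, invoking separability of $\tnorm{\cdot}_{\calG,1}$, and lower-bounding the on-support part by the subgradient \eqref{eq:subgrad_mixed} yields the cone inequality $\tnorm{\Pi_{\calG_J^c} Th}_{\calG,1} \leq |\langle \tilde{z}, \Pi_{\calG_J} Th\rangle|$, exactly as in the scalar case. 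I would then bring in the inexact dual certificate $Y$ produced by the golfing construction underlying Theorem~\ref{thm:uniqueness_gs}: it lies in the range of the sampling operator, approximates the subgradient on the support in $\tnorm{\cdot}_{\calG,\infty}$, and has $\tnorm{\Pi_{\calG_J^c} Y}_{\calG,\infty} \leq 1/2$. Writing $\langle \tilde{z}, \Pi_{\calG_J} Th\rangle = \langle \tilde{z} - \Pi_{\calG_J} Y,\, \Pi_{\calG_J} Th\rangle + \langle Y, Th\rangle - \langle \Pi_{\calG_J^c} Y,\, \Pi_{\calG_J^c} Th\rangle$ and applying the group Hölder inequality termwise collapses, through the cone inequality, to $\tnorm{\Pi_{\calG_J^c} Th}_{\calG,1} \leq \tnorm{\Pi_{\calG_J} Th}_{\calG,1} + 2|\langle Y, Th\rangle|$.

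To finish, because the certificate arises as a scaled subsampled image its preimage is supported on $\Omega'$, so $\langle Y, Th\rangle$ factors through $S_{\Omega'} h$ and is bounded by $\norm{S_{\Omega'} h}_2 \leq 2\epsilon$ times a certificate-energy factor of order $n\norm{T}_{1\to 2}\norm{T^\dagger}_{2\to\infty}$; adding the on-support term and applying $\tnorm{Th}_{\calG,1} \leq \sqrt{N}\,\norm{Th}_2$ converts everything to $\norm{Th}_2$. Injectivity then gives $\norm{h}_2 \leq \norm{Th}_2/\sigma_{\min}(T)$ while $\sigma_{\max}(T)$ supplies the matching upper factor, reproducing the stated bound $\frac{\sigma_{\max}(T)}{\sigma_{\min}(T)}\{2 + 28\sqrt{N}(3n\norm{T}_{1\to 2}\norm{T^\dagger}_{2\to\infty}+1)\}\epsilon$ with the same numerical constants.

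The only genuinely non-mechanical step, and hence the main obstacle, is confirming that the certificate promised by Theorem~\ref{thm:uniqueness_gs} actually satisfies the required $\tnorm{\cdot}_{\calG,\infty}$ estimates, since its golfing construction rests on a group analogue of the RIP-like concentration lemma. The paper isolates precisely this ingredient as Lemma~\ref{lemma:E3'} in Section~\ref{subsec:riplike_lemmas}; everything else transcribes under Table~\ref{tab:replace_notation}. I would therefore concentrate my effort on verifying that the group dual norm of the relevant random matrix sums concentrates in terms of $\mu_\calG$ rather than the coordinate-wise $\mu$, and that the off-support estimate $\tnorm{\Pi_{\calG_J^c} Y}_{\calG,\infty} \leq 1/2$ survives the change of norm.
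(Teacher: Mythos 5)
Your stated plan---transcribe the proof of Theorem~\ref{thm:stability1} under the dictionary of Table~\ref{tab:replace_notation}, with Lemma~\ref{lemma:E3'} as the only genuinely new probabilistic ingredient---is exactly how the paper handles this theorem, and your identification of the group H\"older inequality, the $\sqrt{N}$ (rather than $\sqrt{L}$) factor from summing over the $N$ groups, and the final $\sigma_{\max}(T)/\sigma_{\min}(T)$ conversion are all on target. The problem is that the concrete argument you then describe is not the proof of Theorem~\ref{thm:stability1}, and as written it does not close. The paper's proof never pairs the certificate against $Th$ itself: it replaces $h$ by $(I_n - S_{\Omega'}^* S_{\Omega'})h$, for which $S_{\Omega'}(I_n - S_{\Omega'}^* S_{\Omega'})h = 0$, so that the certificate inner product vanishes \emph{exactly} via \eqref{eq:dualcert_vanish} and no ``certificate-energy factor'' is ever needed; it then splits into the two complementary cases \eqref{eq:case1noisy} and \eqref{eq:case2noisy} according to whether the on-support part of $T(I_n - S_{\Omega'}^* S_{\Omega'})h$ exceeds $3n\norm{T}_{1\to2}\norm{T^\dagger}_{2\to\infty}$ times the off-support part. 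Case~1 produces the $28\sqrt{N}(3n\norm{T}_{1\to2}\norm{T^\dagger}_{2\to\infty}+1)$ term, and Case~2 uses the local isometry \eqref{eq:local_isometry_rank_deficient} to force $T(I_n - S_{\Omega'}^* S_{\Omega'})h = 0$.

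The gap in your version is that the on-support component is never bounded. Your chain terminates at $\tnorm{\Pi_{\calG_J^c} Th}_{\calG,1} \leq \tnorm{\Pi_{\calG_J} Th}_{\calG,1} + 2|\langle Y, Th\rangle|$, which, combined with the cone inequality (off-support controlled by on-support), is circular: nothing pins down $\tnorm{\Pi_{\calG_J} Th}_{\calG,1}$. The case split---or equivalently an independent on-support estimate from the local isometry together with the tube constraint---is the missing mechanism, and it is precisely where the factor $3n\norm{T}_{1\to2}\norm{T^\dagger}_{2\to\infty}$ in the stated bound originates. Relatedly, your bound on $|\langle Y, Th\rangle|$ is unsubstantiated: what the support property of $T^*Y$ on $\Omega'$ actually yields is $|\langle Y, Th\rangle| \leq \norm{Y}_2\,\norm{T S_{\Omega'}^* S_{\Omega'} h}_2 \leq 2\epsilon\norm{T}\norm{Y}_2$, and $\norm{Y}_2$ (of order $\sqrt{s}+\sqrt{N}/2$, not $n\norm{T}_{1\to2}\norm{T^\dagger}_{2\to\infty}$) is never estimated in the paper because its argument does not need it. Consequently the claim that your route reproduces the theorem ``with the same numerical constants'' is not justified; to match the statement you should follow the actual two-case structure of Section~\ref{subsec:pf:thm:stability1} (via the group analogue of Lemma~\ref{lemma:uniqueness}), applied to $(I_n - S_{\Omega'}^* S_{\Omega'})h$.
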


The results for recovery using a variable density sampling designed from local incoherence parameters generalize in a similar way.
We state the results in the following theorems.

\begin{theorem}
\label{thm:uniqueness_vd_gs}
Let $\calG = \{\calG_1, \ldots, \calG_N\}$ be a partition of $[L]$.
Suppose $T: \cz^n \to \cz^N$ is injective.
Let $\mu_{\calG,k}$ and $\tilde{\mu}_{\calG,k}$ be given by
\begin{equation}
\label{eq:loc_incoherenceT_gs}
\mu_{\calG,k} = \max_{j \in [N]} n \norm{\gamma^{1/2} \Pi_{\calG_j} T e_k}_2^2
\quad \mathrm{and} \quad
\tilde{\mu}_{\calG,k} = \max_{j \in [N]} n \norm{\gamma^{-1/2} \Pi_{\calG_j} \widetilde{T} e_k}_2^2,
\end{equation}
where $\gamma$ is defined in \eqref{eq:defgamma} from $T$ and $\widetilde{T} = (T^\dagger)^*$.
Let $\Omega = \{\omega_1, \ldots, \omega_m\}$ be a multi-set of random indices
where $\omega_k$s are independent copies of a random variable $\omega$ with the following distribution:
\begin{equation}
\label{eq:variable_density_gs}
\mathbb{P}(\omega = k) = \frac{\sqrt{\mu_{\calG,k} \tilde{\mu}_{\calG,k}}}{\sum_{j=1}^n \sqrt{\mu_{\calG,j} \tilde{\mu}_{\calG,j}}}, \quad \forall k \in [n].
\end{equation}
Suppose $T x$ is $(s,t)$ strongly group sparse with respect $\calG$.
Then with probability $1 - e^{-\beta} - 3/n$, $x$ is the unique minimizer to \eqref{eq:ell1min} provided
\[
m \geq \frac{C(1+\beta) \bar{\mu}_\calG s}{1 - \norm{\gamma T^* T - I_n}} \left[ \log N
+ \log \left( \norm{T}_{1 \to 2} \norm{T^\dagger}_{2 \to \infty} \right) \right],
\]
where $\bar{\mu}$ is defined as
\[
\bar{\mu}_\calG := \frac{1}{n} \sum_{k=1}^n \sqrt{\mu_{\calG,k} \tilde{\mu}_{\calG,k}}.
\]
\end{theorem}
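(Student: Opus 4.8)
The plan is to reduce recovery under the variable density \eqref{eq:variable_density_gs} to the uniformly sampled group-sparse setting of Theorem~\ref{thm:uniqueness_gs}, following verbatim the reweighting argument in the proof of Theorem~\ref{thm:uniqueness_vd} but with the scalar local incoherence parameters replaced by their group counterparts \eqref{eq:loc_incoherenceT_gs}. Concretely, I set $\nu := [\sqrt{\mu_{\calG,1}},\ldots,\sqrt{\mu_{\calG,n}}]^\transpose$ and $\tilde\nu := [\sqrt{\tilde\mu_{\calG,1}},\ldots,\sqrt{\tilde\mu_{\calG,n}}]^\transpose$ (assuming without loss of generality that all entries are positive) and define the weighted pair $W := \sqrt{\bar\mu_\calG}\, T[\mbox{diag}(\nu)]^{-1}$ and $\widetilde W := \sqrt{\bar\mu_\calG}\, \widetilde T[\mbox{diag}(\tilde\nu)]^{-1}$ exactly as in \eqref{eq:defW}, now scaled by $\bar\mu_\calG$. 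The change of variables $g = \sqrt{\bar\mu_\calG}[\mbox{diag}(\nu)]^{-1} g'$ gives $Tg = Wg'$, so that \eqref{eq:mixedmin} is equivalent to the group-sparse program $\minimize_{g'} \tnorm{Wg'}_{\calG,1}$ subject to $S_\Omega g' = \bar\mu_\calG^{-1/2} S_\Omega[\mbox{diag}(\nu)]x$, to which I will apply Theorem~\ref{thm:uniqueness_gs}.

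The three structural properties needed to invoke Theorem~\ref{thm:uniqueness_gs} carry over under only the notational substitution of Table~\ref{tab:replace_notation}. First, the group incoherence of $W$: since $We_k = \sqrt{\bar\mu_\calG/\mu_{\calG,k}}\, Te_k$, one has $n\norm{\gamma^{1/2}\Pi_{\calG_j} We_k}_2^2 = (\bar\mu_\calG/\mu_{\calG,k})\, n\norm{\gamma^{1/2}\Pi_{\calG_j} Te_k}_2^2 \le \bar\mu_\calG$ for every $j\in[N]$ and $k\in[n]$, directly from the definition $\mu_{\calG,k} = \max_j n\norm{\gamma^{1/2}\Pi_{\calG_j} Te_k}_2^2$ in \eqref{eq:loc_incoherenceT_gs}; the identical bound holds for $\widetilde W$, so the pair obeys \eqref{eq:incoherence_gs} with parameter $\bar\mu_\calG$. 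Second, the variable density \eqref{eq:variable_density_gs} is designed precisely so that the isotropy identity \eqref{eq:isoW} holds unchanged, $\mathbb{E}(\frac nm W S_\Omega^* S_\Omega \widetilde W^*) = TT^\dagger$, because $\mathbb{P}(\omega=k)$ cancels the factor $\bar\mu_\calG/\sqrt{\mu_{\calG,k}\tilde\mu_{\calG,k}}$ appearing in each summand. Third, $\widetilde W^* W = \bar\mu_\calG[\mbox{diag}(\tilde\nu)]^{-1}(T^\dagger T)[\mbox{diag}(\nu)]^{-1}$ is diagonal since $T^\dagger T = I_n$, giving the biorthogonality $\langle We_{k'},\widetilde W e_k\rangle = 0$ for $k\neq k'$. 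With these facts in hand, Theorem~\ref{thm:uniqueness_gs} applied to the reduced program yields the claimed bound with $\mu_\calG$ replaced by $\bar\mu_\calG$.

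The one point demanding care — and the only genuine obstacle — is the same as in the proof of Theorem~\ref{thm:uniqueness_vd}: the weighted pair is \emph{not} self-dual, i.e. $\widetilde W \neq (W^\dagger)^*$ whenever $\mu_{\calG,k}\neq\tilde\mu_{\calG,k}$, so Theorem~\ref{thm:uniqueness_gs} must be invoked in the form in which its golfing-scheme proof actually operates: as a statement about any injective transform together with a companion transform $\widetilde W$ satisfying incoherence \eqref{eq:incoherence_gs}, isotropy \eqref{eq:isoW}, and biorthogonality, rather than the literal relation $\widetilde T=(T^\dagger)^*$. I would therefore double-check that the dual-certificate construction and concentration estimates underlying Theorem~\ref{thm:uniqueness_gs} reference only these structural properties, and that the residual factors $\norm{T}_{1\to 2}\norm{T^\dagger}_{2\to\infty}$ and $1-\norm{\gamma T^* T - I_n}$ in the final sample-complexity bound are expressed in terms of the original $T$ and are preserved under the reweighting — exactly the bookkeeping already carried out, implicitly, when Theorem~\ref{thm:uniqueness} was applied to the reweighted program in the proof of Theorem~\ref{thm:uniqueness_vd}. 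Everything else is a mechanical transcription through Table~\ref{tab:replace_notation}, so no new estimate is required.
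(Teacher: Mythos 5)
Your proposal matches the paper's own treatment: the paper proves this theorem precisely by transcribing the reweighting argument from the proof of Theorem~\ref{thm:uniqueness_vd} through the dictionary of Table~\ref{tab:replace_notation} and invoking Theorem~\ref{thm:uniqueness_gs} on the weighted program, with Lemma~\ref{lemma:E3'} supplying the only estimate that genuinely changes. Your closing caveat about $\widetilde{W} \neq (W^\dagger)^*$ identifies a real subtlety that the paper leaves implicit even in the scalar case (it simply "applies Theorem~\ref{thm:uniqueness} with incoherence parameter $\bar{\mu}$"), so your version is, if anything, slightly more careful than the original.
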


\begin{theorem}
\label{thm:stability1_vd_gs}
Suppose the hypotheses of Theorem~\ref{thm:uniqueness_vd_gs} hold.
Let $\hat{x}$ be the minimizer to \eqref{eq:mixedmin_noisy} with $x^\sharp$ satisfying
\begin{equation}
\label{eq:xharp_fid_gs}
\norm{S_{\Omega'} x - S_{\Omega'} x^\sharp}_2 \leq \epsilon.
\end{equation}
Then
\[
\norm{\hat{x} - x}_2 \leq
\frac{\sigma_{\max}(T)}{\sigma_{\min}(T)}
\cdot
\frac{\bar{\mu}_\calG}{\min_{k \in [n]} \tilde{\mu}_{\calG,k}}
\cdot
\left[ 2 + 28 \sqrt{N} \left( \frac{\max_{k \in [n]} \mu_{\calG,k}}{\min_{k \in [n]} \tilde{\mu}_{\calG,k}} \cdot 3 n \norm{T}_{1 \to 2} \norm{T^\dagger}_{2 \to \infty} + 1 \right) \right] \epsilon.
\]
\end{theorem}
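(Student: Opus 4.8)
The plan is to mirror the reduction used in the proof of Theorem~\ref{thm:stability1_vd}, replacing every scalar incoherence quantity by its group analog according to the dictionary of Table~\ref{tab:replace_notation}, and then to invoke the group-sparse uniform-density stability result Theorem~\ref{thm:stability1_gs} in place of Theorem~\ref{thm:stability1}. Throughout, $N$ denotes the number of groups, and $\gamma$ is the scalar defined in \eqref{eq:defgamma}.

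First I would set $\nu := [\sqrt{\mu_{\calG,1}}, \ldots, \sqrt{\mu_{\calG,n}}]^\transpose$ and $\tilde{\nu} := [\sqrt{\tilde{\mu}_{\calG,1}}, \ldots, \sqrt{\tilde{\mu}_{\calG,n}}]^\transpose$, the group version of \eqref{eq:defnu}, and define the weighted transforms $W := \sqrt{\bar{\mu}_\calG}\, T [\mbox{diag}(\nu)]^{-1}$ and $\widetilde{W} := \sqrt{\bar{\mu}_\calG}\, \widetilde{T} [\mbox{diag}(\tilde{\nu})]^{-1}$ together with the diagonal scalings $\Lambda := \sqrt{\bar{\mu}_\calG}[\mbox{diag}(\nu)]^{-1}$ and $\widetilde{\Lambda} := \sqrt{\bar{\mu}_\calG}[\mbox{diag}(\tilde{\nu})]^{-1}$, as in \eqref{eq:defW} and \eqref{eq:defLambda}. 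The key observation is that the per-column scaling by $1/\sqrt{\mu_{\calG,k}}$ acts on the domain $\cz^n$ and commutes with the group projections $\Pi_{\calG_j}$ acting on the range, so it flattens the group incoherence \eqref{eq:incoherence_gs} of $W$ to the average $\bar{\mu}_\calG$: since $W e_k = \sqrt{\bar{\mu}_\calG/\mu_{\calG,k}}\, T e_k$, we get $\max_{j \in [N]} n \norm{\gamma^{1/2} \Pi_{\calG_j} W e_k}_2^2 = (\bar{\mu}_\calG/\mu_{\calG,k}) \max_{j \in [N]} n \norm{\gamma^{1/2} \Pi_{\calG_j} T e_k}_2^2 = \bar{\mu}_\calG$ for every $k$, and likewise for $\widetilde{W}$. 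Moreover, because the sampling density \eqref{eq:variable_density_gs} is proportional to $\sqrt{\mu_{\calG,k} \tilde{\mu}_{\calG,k}}$, this geometric-mean weight cancels exactly against the scalings of $W$ and $\widetilde{W}$, so that the group analog of the isotropy identity \eqref{eq:isoW}, namely $\mathbb{E}(\frac{n}{m} W S_\Omega^* S_\Omega \widetilde{W}^*) = T T^\dagger$, holds; this is the same computation as in the proof of Theorem~\ref{thm:uniqueness_vd}, read through Table~\ref{tab:replace_notation}.

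Next I would perform the change of variable $g = \Lambda g'$, under which $T g = W g'$ and hence $\tnorm{T g}_{\calG,1} = \tnorm{W g'}_{\calG,1}$. With the fidelity weight $\rho := \bar{\mu}_\calG^{-1/2} [\sqrt{\mu_{\calG,1}}, \ldots, \sqrt{\mu_{\calG,n}}]^\transpose$ chosen to cancel $\Lambda$ entrywise (so that $\rho_k (\Lambda)_{kk} = 1$ for every $k$), the constraint $\norm{S_{\Omega'}[\rho \odot (g - x^\sharp)]}_2 \leq \epsilon$ becomes $\norm{S_{\Omega'}(g' - \Lambda^{-1} x^\sharp)}_2 \leq \epsilon$, exactly as \eqref{eq:ell1min_noisy_vd} reduced to \eqref{eq:ell1min_noisy_vd2} in the scalar case. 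Thus the variable-density group program (the group analog of \eqref{eq:ell1min_noisy_vd}) is equivalent, in the variable $g'$, to a uniform-form group program in the transform $W$. Writing $\breve{x}$ for its minimizer, the minimizer of the original program is $\hat{x} = \Lambda \breve{x}$, so that $\norm{T \hat{x} - T x}_2 = \norm{W \breve{x} - W \Lambda^{-1} x}_2$. Since $W$ and $\widetilde{W}$ satisfy the group incoherence bound with parameter $\bar{\mu}_\calG$ and the isotropy identity established above, I would apply Theorem~\ref{thm:stability1_gs} to this reduced program to bound $\norm{W \breve{x} - W \Lambda^{-1} x}_2$ in terms of $\sqrt{N}$, $\norm{W}$, $\norm{W}_{1 \to 2}$, and $\norm{W^\dagger}_{2 \to \infty}$.

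Finally I would unwind these weighted operator norms back to $T$ exactly as at the end of the proof of Theorem~\ref{thm:stability1_vd}: using $W = T \Lambda$ and submultiplicativity, $\norm{T \Lambda}_{1 \to 2} \leq \norm{T}_{1 \to 2} \norm{\Lambda}$ and $\norm{\widetilde{\Lambda}^{-1} T^\dagger}_{2 \to \infty} \leq \norm{\widetilde{\Lambda}^{-1}} \norm{T^\dagger}_{2 \to \infty}$, and evaluating the diagonal operator norms $\norm{\Lambda}$ and $\norm{\widetilde{\Lambda}^{-1}}$ in terms of $\max_{k \in [n]} \mu_{\calG,k}$, $\min_{k \in [n]} \tilde{\mu}_{\calG,k}$, and $\bar{\mu}_\calG$. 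Converting $\norm{T \hat{x} - T x}_2$ to $\norm{\hat{x} - x}_2$ through $\sigma_{\min}(T)$ and $\sigma_{\max}(T)$ then produces the stated bound, with leading factor $(\sigma_{\max}(T)/\sigma_{\min}(T)) \cdot (\bar{\mu}_\calG/\min_{k} \tilde{\mu}_{\calG,k})$ and the incoherence ratio $\max_{k} \mu_{\calG,k}/\min_{k} \tilde{\mu}_{\calG,k}$ inside the bracket. I expect the only point requiring genuine care, rather than a real obstacle, to be verifying that the per-index diagonal reweighting is compatible with the per-group projections, i.e. that normalizing each column of $T$ by its worst-over-groups incoherence $\sqrt{\mu_{\calG,k}}$ indeed flattens the group incoherence of $W$ to the average $\bar{\mu}_\calG$; everything else is the scalar argument of Theorem~\ref{thm:stability1_vd} translated verbatim through Table~\ref{tab:replace_notation}.
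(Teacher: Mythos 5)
Your proposal is correct and follows exactly the route the paper intends: the paper does not write this proof out but states that it is obtained by translating the proof of Theorem~\ref{thm:stability1_vd} through the dictionary of Table~\ref{tab:replace_notation} and invoking Theorem~\ref{thm:stability1_gs} in place of Theorem~\ref{thm:stability1}, which is precisely your diagonal-reweighting reduction with the group local incoherence parameters $\{\mu_{\calG,k},\tilde{\mu}_{\calG,k}\}$ flattened to $\bar{\mu}_\calG$. The only remark worth making is that you (correctly) read the data-fidelity constraint as the $\rho$-weighted one analogous to \eqref{eq:ell1min_noisy_vd}, whereas the theorem statement cites the unweighted \eqref{eq:mixedmin_noisy} --- an apparent typo in the paper, since the reduction to Theorem~\ref{thm:stability1_gs} requires the weight to cancel $\Lambda$.
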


\section{Circulant Transforms}
\label{sec:circulant}

Many sparsifying transforms fall into the category of circulant transforms, including the identity transform. A block transform (e.g., block DCT), when applied to all overlapping patches of a signal (sliding window with stride $1$, including the wrap-around patches at the edges), is a union of circulant transforms applied to the signal \cite{pfister2015learning}.
In this section, we consider only the case when the measurement matrix $\Psi\in\cz^{n\times n}$ is the DFT matrix, and compute the variable density sampling distribution \eqref{eq:variable_density} for sparsity with respect to a circulant transform, and distribution \eqref{eq:variable_density_gs} for joint sparsity with respect to a union of circulant transforms. We show that, if the circulant transforms are injective, the distributions \eqref{eq:variable_density} and \eqref{eq:variable_density_gs} correspond to the uniform distribution on $[n]$. On the other hand, some circulant transforms are not injective (e.g., the finite difference operator for 1D total variation or 2D isotropic total variation). Even in this case, we show that the ``variable'' density sampling distributions are a variation of the uniform distribution.

\subsection{Injective Circulant Transforms}
We say $\Phi\in \cz^{n\times n}$ is a circulant transform (circulant matrix), if $\Phi f = \phi \circledast f$ is the circular convolution of $f$ with some vector $\phi \in \cz^n$, i.e. the matrix representation of $\Phi$ is given by
\begin{equation}
\label{eq:circulant}
\Phi = \begin{bmatrix}
\phi[1] & \phi[n] & \phi[n-1] & \cdots & \phi[2] \\
\phi[2] & \phi[1] & \phi[n] & \cdots & \phi[3] \\
\phi[3] & \phi[2] & \phi[1] & \cdots & \phi[4] \\
\vdots & \vdots & \vdots & \ddots & \vdots \\
\phi[n] & \phi[n-1] & \phi[n-2] & \cdots & \phi[1] \\
\end{bmatrix}
\end{equation}
The identity transform is a circulant transform with $\phi=e_1$.

By linearity and shift invariance of circular convolution, a circulant transform $\Phi$ can always be diagonalized by the DFT matrix $\Psi$:
\begin{equation}
\label{eq:diagonalize}
\Phi = \Psi^* \diag(\lambda) \Psi.
\end{equation}
where $\lambda = \sqrt{n}\Psi \phi$ is the (unnormalized) DFT of $\phi$. The same argument also applies to a 2D circulant transform (circulant block circulant matrix) and the 2D DFT matrix. To avoid verbosity, we use circulant transform $\Phi$ and DFT matrix $\Psi$ to denote both 1D and 2D transforms.

\begin{theorem}\label{thm:circulant1}
For the DFT matrix $\Psi\in\cz^{n\times n}$ and an invertible circulant transform $\Phi\in\cz^{n\times n}$, the sampling density distribution \eqref{eq:variable_density} is the uniform distribution on $[n]$.
\end{theorem}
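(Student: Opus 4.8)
The plan is to show that the numerator $\sqrt{\mu_k \tilde\mu_k}$ in the sampling density \eqref{eq:variable_density} is independent of $k$; once this is established, normalization forces $\mathbb{P}(\omega = k) = 1/n$ for every $k$, which is exactly the uniform distribution. The first step is to put the transform $T = \Phi \Psi^\dagger$ into a clean diagonal form. Since $\Psi$ is the (unitary) DFT matrix, we have $\Psi^\dagger = \Psi^*$ and $\Psi \Psi^* = I_n$, so substituting the diagonalization \eqref{eq:diagonalize} gives
\[
T = \Phi \Psi^* = \Psi^* \diag(\lambda) \Psi \Psi^* = \Psi^* \diag(\lambda),
\]
where $\lambda = \sqrt{n}\Psi\phi$. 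Because $\Phi$ is invertible, every entry $\lambda_k$ is nonzero, so $T$ is invertible with $T^\dagger = T^{-1} = \diag(\lambda)^{-1}\Psi$, whence $\widetilde{T} = (T^\dagger)^* = \Psi^* \diag(\bar\lambda)^{-1}$.

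Next I would record $\gamma$, although its precise value turns out to be irrelevant. From the diagonal form, $T^* T = \diag(|\lambda_1|^2, \ldots, |\lambda_n|^2)$, so $\gamma T^* T - I_n$ is diagonal and $\gamma$ from \eqref{eq:defgamma} is just the positive scalar minimizing $\max_k |\gamma|\lambda_k|^2 - 1|$. All that matters for the argument is that $\gamma$ is a single fixed positive constant shared by all $k$.

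The key computation is the evaluation of the local incoherence parameters \eqref{eq:loc_incoherenceT}. Writing $T e_k = \lambda_k (\Psi^* e_k)$ and $\widetilde{T} e_k = \bar\lambda_k^{-1}(\Psi^* e_k)$, I would invoke the defining feature of the DFT: every entry of $\Psi$, and hence of $\Psi^* e_k$, has modulus $1/\sqrt{n}$. Therefore $\norm{\Psi^* e_k}_\infty = 1/\sqrt{n}$, which yields
\[
\mu_k = n\gamma \norm{T e_k}_\infty^2 = \gamma |\lambda_k|^2
\quad\mathrm{and}\quad
\tilde\mu_k = n\gamma^{-1} \norm{\widetilde{T} e_k}_\infty^2 = \gamma^{-1} |\lambda_k|^{-2}.
\]
Multiplying, $\mu_k \tilde\mu_k = 1$ for every $k \in [n]$, so $\sqrt{\mu_k \tilde\mu_k} \equiv 1$ and $\mathbb{P}(\omega = k) = 1/n$, as claimed.

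There is no genuine obstacle here; the result is essentially a bookkeeping exercise once the diagonalization is in place. The two points requiring care are (i) tracking the adjoint and pseudoinverse correctly, so that $\widetilde{T}$ carries $\bar\lambda_k^{-1}$ rather than $\lambda_k$, and (ii) observing that the scale factor $\gamma$ enters $\mu_k$ and $\tilde\mu_k$ with reciprocal powers and therefore cancels in the product $\mu_k\tilde\mu_k$. The invertibility hypothesis on $\Phi$ is used exactly once, to guarantee $\lambda_k \neq 0$ so that $\widetilde{T}$ is well defined.
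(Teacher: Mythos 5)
Your proof is correct and follows essentially the same route as the paper's: diagonalize $\Phi$ by the DFT to get $T = \Psi^*\diag(\lambda)$ and $\widetilde{T} = \Psi^*\diag(\tilde\lambda)$ with $\tilde\lambda[k] = (\lambda[k]^*)^{-1}$, use $\norm{\Psi^* e_k}_\infty = 1/\sqrt{n}$ to evaluate $\mu_k = \gamma|\lambda[k]|^2$ and $\tilde\mu_k = \gamma^{-1}|\lambda[k]|^{-2}$, and conclude $\mu_k\tilde\mu_k = 1$. Your added remarks on the cancellation of $\gamma$ and the single use of invertibility are accurate but do not change the argument.
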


\begin{proof}[Proof of Theorem~\ref{thm:circulant1}]
By the diagonalization in \eqref{eq:diagonalize}, we have
\begin{align*}
& T = \Phi \Psi^\dagger = \Psi^* \diag(\lambda),\\
& \widetilde{T} = (T^\dagger)^* = \Psi^* \diag(\tilde{\lambda}),
\end{align*}
where $\tilde{\lambda}$ is the complex conjugate of the element-wise inverse of $\lambda$, i.e. $\tilde{\lambda}[k]=(\lambda[k]^*)^{-1}$. Then, the following choice of parameters $\mu_k$ and $\tilde{\mu}_k$ satisfy \eqref{eq:loc_incoherenceT}:
\begin{align*}
& \mu_k = n\gamma \norm{Te_k}_\infty^2 = n\gamma |\lambda[k]|^2 \norm{\Psi^*e_k}_\infty^2 = \gamma |\lambda[k]|^2,\\
& \tilde{\mu}_k = n\gamma^{-1} \norm{\widetilde{T}e_k}_\infty^2 = n\gamma^{-1} |\tilde{\lambda}[k]|^2 \norm{\Psi^*e_k}_\infty^2 = \gamma^{-1} |\lambda[k]|^{-2},
\end{align*}
where $\Psi^*e_k$ is the $k$th column of the discrete Fourier basis and has infinity norm $1/\sqrt{n}$. Therefore, $\mu_k\tilde{\mu}_k = 1$ for all $k\in [n]$, and the distribution in \eqref{eq:variable_density} is $\mathbb{P}(\omega = k) = 1/n$, i.e. the uniform distribution on $[n]$.
\end{proof}

Applying a sparsifying circulant transform to a signal is equivalent to passing the signal through a sparsifying filter. One may also pass the signal through a bank of sparsifying filters.
The filter bank is equivalent to a union of circulant transforms, concatenated as
\begin{equation}
\label{eq:concatenation}
\Phi = [\Phi_1^\transpose, \Phi_2^\transpose,\dots, \Phi_\ell^\transpose]^\transpose \in\cz^{L\times n},
\end{equation}
where $L=\ell n$.
For example, the patch transform sparsity model \cite{pfister2015learning} corresponds to this case. The 2D isotropic total variation model (see Section \ref{sec:TV}), as another example, has an additional joint sparsity structure. For the latter example, let us consider a particular partition $\calG = \{\calG_1, \ldots, \calG_n\}$ given by
\begin{equation}
\label{eq:partition}
\calG_k = \{(j-1)n+k\}_{j=1}^\ell, \qquad \forall k\in [n].
\end{equation}
For group sparsity on this union of circulant transforms, we have a result similar to Theorem \ref{thm:circulant1}.

\begin{theorem}\label{thm:circulant2}
For the DFT matrix $\Psi\in\cz^{n\times n}$, an injective transform $\Phi\in\cz^{L\times n}$ defined in \eqref{eq:concatenation}, and the partition defined in \eqref{eq:partition}, the sampling density distribution \eqref{eq:variable_density_gs} is the uniform distribution on $[n]$.
\end{theorem}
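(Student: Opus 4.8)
The plan is to follow the strategy of the proof of Theorem~\ref{thm:circulant1}: compute the local incoherence parameters in \eqref{eq:loc_incoherenceT_gs} explicitly and verify that their product $\mu_{\calG,k}\tilde{\mu}_{\calG,k}$ is constant in $k$, which forces \eqref{eq:variable_density_gs} to be uniform. First I would write each circulant block as $\Phi_i = \Psi^* \diag(\lambda_i) \Psi$ with $\lambda_i = \sqrt{n}\,\Psi \phi_i$, exactly as in \eqref{eq:diagonalize}. Since $\Psi$ is the unitary DFT matrix, $\Psi^\dagger = \Psi^*$, and stacking the $\ell$ blocks shows that the $i$th block of $T = \Phi \Psi^\dagger$ is $\Phi_i \Psi^* = \Psi^* \diag(\lambda_i)$. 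Consequently the $i$th block of $T e_k$ equals $\lambda_i[k]\,\Psi^* e_k$, and its $j$th entry is $\lambda_i[k]\,(\Psi^* e_k)[j]$. Here the partition \eqref{eq:partition} has $N = n$ groups and $L = \ell n$.

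Next I would compute $\mu_{\calG,k}$. The key structural observation is that, under \eqref{eq:partition}, the group $\calG_j$ collects precisely the $j$th output coordinate of each of the $\ell$ filters. Hence the nonzero entries of $\Pi_{\calG_j} T e_k$ are exactly $\{\lambda_i[k]\,(\Psi^* e_k)[j]\}_{i=1}^\ell$, so that
\[
\norm{\Pi_{\calG_j} T e_k}_2^2
= |(\Psi^* e_k)[j]|^2 \sum_{i=1}^\ell |\lambda_i[k]|^2
= \frac{1}{n}\sum_{i=1}^\ell |\lambda_i[k]|^2,
\]
where the last equality uses that every entry of the Fourier basis has modulus $1/\sqrt{n}$. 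Since this value is independent of the group index $j$, the maximum over $j$ in \eqref{eq:loc_incoherenceT_gs} is trivial, and $\mu_{\calG,k} = \gamma\, d_k$ with $d_k := \sum_{i=1}^\ell |\lambda_i[k]|^2$.

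It then remains to compute $\tilde{\mu}_{\calG,k}$. I would first note that $\Phi^*\Phi = \sum_i \Psi^* \diag(|\lambda_i|^2)\Psi$, whence $T^* T = \Psi\,\Phi^*\Phi\,\Psi^* = \diag(d_1,\ldots,d_n)$; injectivity of $\Phi$ guarantees $d_k > 0$, so $T^*T$ is invertible. Therefore $T^\dagger = (T^*T)^{-1} T^*$ and $\widetilde{T} = (T^\dagger)^* = T\,[\diag(d_k)]^{-1}$, giving $\widetilde{T} e_k = d_k^{-1} T e_k$. Reusing the computation above yields $\norm{\Pi_{\calG_j}\widetilde{T} e_k}_2^2 = d_k^{-2}\norm{\Pi_{\calG_j} T e_k}_2^2 = (n d_k)^{-1}$, again independent of $j$, so $\tilde{\mu}_{\calG,k} = \gamma^{-1} d_k^{-1}$. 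Hence $\mu_{\calG,k}\tilde{\mu}_{\calG,k} = 1$ for every $k$, and \eqref{eq:variable_density_gs} reduces to $\mathbb{P}(\omega = k) = 1/n$, the uniform distribution on $[n]$.

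I expect the only real obstacle to be the bookkeeping in the second paragraph: one must carefully track how the ``transposed'' partition \eqref{eq:partition} interacts with the block structure of $T$, and recognize that it is the $j$-independence of $|(\Psi^* e_k)[j]|$ --- the flat modulus of the Fourier basis --- combined with the fact that each filter response $\lambda_i[k]$ depends only on the column index $k$, that collapses the group maximum and makes $\mu_{\calG,k}$ and $\tilde{\mu}_{\calG,k}$ reciprocal. Everything else, including the precise value of $\gamma$ from \eqref{eq:defgamma}, cancels in the product and need not be evaluated.
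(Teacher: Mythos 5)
Your proposal is correct and follows essentially the same route as the paper's proof: both diagonalize each circulant block by the DFT, observe that $\Pi_{\calG_j} T e_k$ collects the $j$th Fourier coordinate of each filter response so that $\norm{\Pi_{\calG_j} T e_k}_2^2 = \tfrac{1}{n}\sum_i |\lambda_i[k]|^2$ independently of $j$, and conclude $\mu_{\calG,k}\tilde{\mu}_{\calG,k}=1$. The only cosmetic difference is that you obtain $\widetilde{T}e_k = d_k^{-1} T e_k$ via the explicit identity $T^*T = \diag(d_1,\ldots,d_n)$, whereas the paper writes the same fact through the Kronecker-product factorization of $\Phi$.
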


\begin{proof}[Proof of Theorem~\ref{thm:circulant2}]
Similar to \eqref{eq:diagonalize}, we have the following factorization for the concatenated transform:
\begin{equation}
\label{eq:diagonalize2}
\Phi = (I_\ell \otimes \Psi^*) [\diag(\lambda_1),\diag(\lambda_2),\dots, \diag(\lambda_\ell)]^\top \Psi,
\end{equation}
where $\lambda_j$ is the (unnormalized) DFT of $\phi_j$, the convolution kernel of the $j$th cirulant transform $\Phi_j$. Hence $T$ and $\widetilde{T}$ are
\begin{align*}
& T = \Phi \Psi^\dagger = (I_\ell \otimes \Psi^*)[\diag(\lambda_1),\diag(\lambda_2),\dots, \diag(\lambda_\ell)]^\top,\\
& \widetilde{T} = (T^\dagger)^* = T(T^*T)^{-1} =  (I_\ell \otimes \Psi^*) [\diag(\tilde{\lambda}_1),\diag(\tilde{\lambda}_2),\dots, \diag(\tilde{\lambda}_\ell)]^\top,
\end{align*}
where $\tilde{\lambda}_j[k] = \lambda_j[k]/\bigl(\sum_{j'=1}^{\ell} |\lambda_{j'}[k]|^2\bigr)$. Then, $\mu_{\calG,k}$ and $\tilde{\mu}_{\calG,k}$ in \eqref{eq:loc_incoherenceT_gs} are given respectively by
\begin{align*}
\mu_{\calG,k} & = n\gamma \max_{k'\in[n]}\norm{\Pi_{\calG_{k'}}Te_k}_2^2 \\
& = n\gamma \max_{k'\in[n]} \norm{\Pi_{\calG_{k'}}\bigl\{\bigl[\lambda_1[k],\lambda_2[k],\dots,\lambda_\ell[k]\bigr]^\transpose \otimes (\Psi^*e_k) \bigl\} }_2^2 \\
& = n\gamma \max_{k'\in[n]} |e_{k'}^\transpose \Psi^*e_k| \norm{\bigl[\lambda_1[k],\lambda_2[k],\dots,\lambda_\ell[k]\bigr]^\transpose}_2^2
= \gamma \bigl(\sum_{j=1}^{\ell} |\lambda_{j}[k]|^2\bigr)
\end{align*}
and
\begin{align*}
\tilde{\mu}_{\calG,k} & = n\gamma^{-1} \max_{k'\in[n]}\norm{\Pi_{\calG_{k'}}\widetilde{T}e_k}_2^2 \\
& = n\gamma^{-1} \max_{k'\in[n]} \norm{\Pi_{\calG_{k'}}\bigl\{\bigl[\tilde{\lambda}_1[k],\tilde{\lambda}_2[k],\dots,\tilde{\lambda}_\ell[k]\bigr]^\transpose \otimes (\Psi^*e_k) \bigl\} }_2^2 \\
& = \gamma^{-1} \bigl(\sum_{j=1}^{\ell} |\tilde{\lambda}_{j}[k]|^2\bigr)
= \gamma^{-1}  \bigl(\sum_{j=1}^{\ell} |\lambda_{j}[k]|^2\bigr)^{-1}.
\end{align*}
Therefore, $\mu_{\calG,k}\tilde{\mu}_{\calG,k} = 1$ for all $k\in [n]$, and the distribution in \eqref{eq:variable_density_gs} is $\mathbb{P}(\omega = k) = 1/n$, i.e. the uniform distribution on $[n]$.
\end{proof}

\subsection{Non-injective Circulant Transforms}\label{sec:TV}
In this section, we consider non-injective circulant transforms, such as the finite difference operator for 1D total variation, or the union of vertical and horizontal finite difference operators for 2D isotropic total variation. Since the spectral responses of these transforms are zero at certain frequencies, they are invariant to changes in the corresponding frequency components. Therefore, unless these frequency components are sampled in the measurement, they cannot be recovered from $\ell_1$-norm minimization \eqref{eq:ell1min} or mixed norm minimization \eqref{eq:mixedmin}. Assuming without loss of generality that the null frequencies are the first $n_0$ columns in $\Psi^*$ ($n_0<\min\{n,m\}$), we adopt the following two-step sampling scheme:
\begin{enumerate}
	\item Always sample indices $\omega_k = k$ for $k\in [n_0]$.
	\item Generate a multi-set of $m-n_0$ indices $\{\omega_{n_0+1},\omega_{n_0+2},\dots,\omega_m\}$, which are i.i.d. following a distribution on $\{n_0+1,n_0+2,\dots,n\}$.
\end{enumerate}
We can compute the sampling density on $\{n_0+1,n_0+2,\dots,n\}$ based on \eqref{eq:loc_incoherenceT} and \eqref{eq:variable_density} (or \eqref{eq:loc_incoherenceT_gs} and \eqref{eq:variable_density_gs}) by removing the zero columns in $T=\Phi\Psi^*$ (replacing $T$ with $T[e_{n_0+1},e_{n_0+2},\dots,e_{n}]$). Next, we state variations of Theorems \ref{thm:circulant1} and \ref{thm:circulant2} in these cases.

\begin{cor} \label{cor:circulant1}
Suppose circulant transform $\Phi$ in \eqref{eq:diagonalize} satisfies $\lambda[k] = 0$ for $k\in[n_0]$. Then the sampling density in Step 2), for $\ell_1$-norm minimization \eqref{eq:ell1min} based on \eqref{eq:variable_density}, is the uniform distribution on $\{n_0+1,n_0+2,\dots,n\}$.
\end{cor}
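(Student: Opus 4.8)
The plan is to reduce the statement to Theorem~\ref{thm:circulant1} by restricting attention to the non-null frequencies. Since $T = \Psi^* \diag(\lambda)$ has $k$th column $T e_k = \lambda[k] \Psi^* e_k$, the hypothesis $\lambda[k] = 0$ on $[n_0]$ means precisely that the first $n_0$ columns of $T$ vanish, so $T$ is \emph{not} injective and Theorem~\ref{thm:circulant1} cannot be invoked as stated. Following the reduction prescribed just before the corollary, I would pass to the restricted transform $T' := T[e_{n_0+1}, \ldots, e_n]$, obtained by deleting the zero columns, and transfer the incoherence computation of Theorem~\ref{thm:circulant1} to $T'$, which governs the randomized Step~2.

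First I would record the diagonal structure of $T'$. For $k \in \{n_0+1, \ldots, n\}$ each surviving column $T e_k = \lambda[k] \Psi^* e_k$ is a nonzero scalar multiple of a column of $\Psi^*$, and since distinct columns of $\Psi^*$ are orthonormal, $T'$ has orthogonal columns and is therefore injective, with $(T')^* T' = \diag(|\lambda[n_0+1]|^2, \ldots, |\lambda[n]|^2)$ invertible. This gives $\widetilde{T'} = ((T')^\dagger)^* = T' ((T')^* T')^{-1}$, whose $k$th column is $(\lambda[k]^*)^{-1} \Psi^* e_k$ by the computation $\lambda[k]/|\lambda[k]|^2 = (\lambda[k]^*)^{-1}$ — exactly the diagonal form exploited in the proof of Theorem~\ref{thm:circulant1}.

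Next I would evaluate the local incoherence parameters \eqref{eq:loc_incoherenceT} for $T'$. Using $\norm{\Psi^* e_k}_\infty = 1/\sqrt{n}$, which holds for every column of the Fourier basis, I get for each non-null frequency $k$ that $\mu_k$ is proportional to $\gamma |\lambda[k]|^2$ and $\tilde{\mu}_k$ is proportional to $\gamma^{-1} |\lambda[k]|^{-2}$, with a common $k$-independent proportionality constant. Hence the product $\mu_k \tilde{\mu}_k$ equals the same constant for all $k \in \{n_0+1, \ldots, n\}$, and substituting into \eqref{eq:variable_density} this constant cancels between numerator and denominator, leaving $\mathbb{P}(\omega = k) = 1/(n - n_0)$; that is, the Step~2 density is uniform on $\{n_0+1, \ldots, n\}$.

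The only genuine obstacle is the non-injectivity of the full $T$, and this is exactly what the two-step sampling scheme is designed to absorb: the $n_0$ null frequencies carry no information about the analysis coefficients $Tx$ and are acquired deterministically in Step~1, so the randomized Step~2 sees only the injective restriction $T'$, to which the Theorem~\ref{thm:circulant1} argument transfers verbatim. I would finally remark that neither the optimal $\gamma$ of \eqref{eq:defgamma} for $T'$ nor the choice of ambient-dimension factor in the normalization of \eqref{eq:loc_incoherenceT} affects the conclusion, since both enter $\mu_k \tilde{\mu}_k$ as $k$-independent multipliers that cancel in the ratio \eqref{eq:variable_density}; this cancellation is what makes the resulting density robustly uniform.
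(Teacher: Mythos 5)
Your proposal is correct and follows the same route as the paper: the paper's stated proof of Corollary~\ref{cor:circulant1} is simply that the argument of Theorem~\ref{thm:circulant1} translates with no changes other than removing the zero columns of $T$ indexed by $[n_0]$, which is exactly the reduction to $T' = T[e_{n_0+1},\dots,e_n]$ that you carry out. Your additional checks (injectivity of $T'$, the explicit form of $\widetilde{T'}$, and the observation that the $k$-independent normalization constants cancel in \eqref{eq:variable_density}) are consistent with, and somewhat more explicit than, the paper's one-line justification.
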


\begin{cor} \label{cor:circulant2}
Suppose concatenated transform $\Phi$ in \eqref{eq:diagonalize2} satisfies $\lambda_j[k] = 0$ for all $j\in[\ell]$ and $k\in[n_0]$. Then the sampling density in Step 2), for mixed norm minimization \eqref{eq:mixedmin} based on \eqref{eq:variable_density_gs}, is the uniform distribution on $\{n_0+1,n_0+2,\dots,n\}$.
\end{cor}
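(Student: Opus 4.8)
The plan is to reduce the statement to Theorem~\ref{thm:circulant2} by restricting the concatenated transform to its non-null frequencies. First I would use the diagonalization \eqref{eq:diagonalize2} to write $T = \Phi\Psi^* = (I_\ell \otimes \Psi^*)[\diag(\lambda_1),\dots,\diag(\lambda_\ell)]^\top$, so that its $k$th column is $Te_k = [\lambda_1[k],\dots,\lambda_\ell[k]]^\top \otimes (\Psi^*e_k)$. The hypothesis $\lambda_j[k]=0$ for all $j\in[\ell]$, $k\in[n_0]$ says precisely that the first $n_0$ columns of $T$ vanish, which is the source of non-injectivity. Following the reduction described in the text, I would pass to the restricted transform $T' := T[e_{n_0+1},\dots,e_n]$ obtained by deleting these zero columns; its surviving columns are unchanged, and $T'$ is injective because $\sum_{j=1}^\ell |\lambda_j[k]|^2 > 0$ for every $k > n_0$.

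Next I would compute $\widetilde{T'} = (T'^\dagger)^* = T'(T'^*T')^{-1}$ explicitly. Because $(I_\ell \otimes \Psi^*)$ has orthonormal columns in each block and the surviving columns of the stacked-diagonal factor are supported on disjoint index sets, $T'^*T'$ is diagonal with entries $\sum_{j=1}^\ell |\lambda_j[k]|^2$ for $k>n_0$. Hence $\widetilde{T'}e_k = [\tilde\lambda_1[k],\dots,\tilde\lambda_\ell[k]]^\top \otimes (\Psi^*e_k)$ with $\tilde\lambda_j[k] = \lambda_j[k]/\bigl(\sum_{j'=1}^\ell |\lambda_{j'}[k]|^2\bigr)$, exactly as in Theorem~\ref{thm:circulant2}. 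I would then evaluate the group-local incoherence parameters \eqref{eq:loc_incoherenceT_gs} for $T'$ against the partition \eqref{eq:partition}: since $\Pi_{\calG_{k'}}(T'e_k)$ collects the entries $\lambda_j[k](\Psi^*e_k)[k']$ over $j\in[\ell]$ and every DFT column entry satisfies $|(\Psi^*e_k)[k']| = 1/\sqrt{n}$, the maximum over $k'$ yields $\mu_{\calG,k} \propto \gamma \sum_{j}|\lambda_j[k]|^2$ and $\tilde\mu_{\calG,k} \propto \gamma^{-1}\bigl(\sum_j |\lambda_j[k]|^2\bigr)^{-1}$, with the \emph{same} proportionality constant for every $k>n_0$.

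The decisive observation, which is essentially all that is needed, is that the products $\mu_{\calG,k}\tilde\mu_{\calG,k}$ are therefore equal across all surviving frequencies $k\in\{n_0+1,\dots,n\}$: both the frequency-dependent factor $\sum_j|\lambda_j[k]|^2$ and the scaling $\gamma$ cancel, leaving a constant. Plugging this into \eqref{eq:variable_density_gs}, renormalized over the surviving index set as prescribed in Step~2, makes every $\mathbb{P}(\omega = k)$ equal, i.e. uniform on $\{n_0+1,\dots,n\}$. I expect there to be no true obstacle here, only two points requiring care rather than difficulty: verifying injectivity of $T'$ so that $\widetilde{T'}$ is well defined, and confirming that the group/Kronecker structure makes the maximizing $k'$ in \eqref{eq:loc_incoherenceT_gs} immaterial because each DFT column entry has constant modulus $1/\sqrt{n}$. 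This constant-modulus property is exactly what forces the product of local incoherence parameters to be flat, and hence collapses the density to uniform regardless of the overall normalization factor.
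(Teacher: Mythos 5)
Your proposal is correct and follows essentially the same route as the paper: the paper likewise obtains Corollary~\ref{cor:circulant2} by deleting the zero columns of $T$ indexed by $[n_0]$ and rerunning the computation of Theorem~\ref{thm:circulant2}, which gives $\mu_{\calG,k}\tilde{\mu}_{\calG,k}$ constant over the surviving frequencies and hence a uniform density on $\{n_0+1,\dots,n\}$. Your explicit verification that $T'$ is injective and that the constant modulus $1/\sqrt{n}$ of the DFT entries makes the maximizing group index immaterial matches the paper's (more tersely stated) argument.
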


These results are direct consequences of Theorems \ref{thm:circulant1} and \ref{thm:circulant2}, whose proofs translate with no changes other than removing the zero columns in $T$ (the columns indexed by $[n_0]$).

Next, we specialize these results to total variation minimization.
We define the 1D finite difference operator $\Phi_{\mathrm{TV},n}$ by \eqref{eq:circulant}, where
\[
\phi[k] = \begin{cases}
1 & k = 1\\
-1 & k = 2\\
0 & k\in\{3,4,\dots, n\}.
\end{cases}
\]
Then $\norm{f}_{\mathrm{TV}} = \norm{\Phi_{\mathrm{TV},n} f}_1$ is the 1D total variation of $f$.
Clearly, the circulant transform $\Phi_{\mathrm{TV},n}$ is not injective, since its null space contains the direct current (DC) component -- $\Psi^*e_1$.
By Corollary \ref{cor:circulant1}, we adopt the following sampling scheme for total variation minimization:
\begin{enumerate}
	\item Always sample index $\omega_1 = 1$.
	\item Generate a multi-set of $m-1$ indices $\{\omega_2,\omega_3,\dots,\omega_m\}$, which are i.i.d. following the uniform distribution on $\{2,3,\dots,n\}$.
\end{enumerate}


Total variation is more commonly used for 2D signals (e.g., images). For a 2D signal $f$ of size $n_1\times n_2$, the finite difference operator is a concatenation of the vertical and horizontal finite difference operators:
\[
\Phi_{\mathrm{TV},n_1,n_2} = \begin{bmatrix}
I_{n_2} \otimes \Phi_{\mathrm{TV},n_1} \\
\Phi_{\mathrm{TV},n_2} \otimes I_{n_1}
\end{bmatrix}.
\]
The anisotropic and isotropic total variations of $f$ are defined by
\begin{align*}
& \norm{f}_{\mathrm{TV, aniso}} = \norm{\Phi_{\mathrm{TV},n_1,n_2} f}_1,\\
& \norm{f}_{\mathrm{TV, iso}} = \tnorm{\Phi_{\mathrm{TV},n_1,n_2} f}_{\calG,1},
\end{align*}
where the partion $\calG=\{\calG_1,\dots,\calG_n\}$ is defined by \eqref{eq:partition} for $\ell=2$ and $n=n_1n_2$. Let the measurement operator $\Psi$ be the 2D DFT on signals of size $n_1\times n_2$. Similar to the 1D case, the DC component $\Psi^* e_1$ belongs to the null space of $\Phi_{\mathrm{TV},n_1,n_2}$. By Corollary \ref{cor:circulant2}, we use the same two-step sampling scheme as in the 1D case.


\section{Proof of Theorem~\ref{thm:uniqueness}}
\label{sec:pf_main_result_noiseless}

In this section, we prove Theorem~\ref{thm:uniqueness},
which provides a sufficient condition for exact recovery of sparse signals in a transform domain from noiseless observations.
The proof of Theorem~\ref{thm:uniqueness} is based on the golfing scheme \cite{gross2011recovering},
which was originally proposed for matrix completion \cite{gross2011recovering}
and later adopted to compressed sensing \cite{candes2011probabilistic} and to structured matrix completion \cite{chen2014robust}.

The golfing scheme constructs an inexact dual certificate.
The notion of a dual certificate was originally proposed for compressed sensing (cf. \cite{candes2006near}).
To reconstruct an $s$-sparse $f \in \cz^d$ from $b = A f$, it was proposed to estimate $f$ as the solution to
\[
\minimize_{\tilde{f} \in \cz^d} \, \norm{\tilde{f}}_1 \quad \mathrm{subject~to} \quad b = A \tilde{f}.
\]
A dual certificate is a subgradient $v \in \cz^d$ of $\norm{\cdot}_1$ at $f$
such that $f$ is orthogonal to all null vectors of the sensing matrix $A$.
In matrix completion, the objective function is replaced from the $\ell_1$-norm to the nuclear norm
and the sensing matrix is replaced by a pointwise sampling operator.
Gross \cite{gross2011recovering} proposed the clever golfing scheme
that constructs an inexact dual certificate, which is close to the exact dual certificate,
and showed a low-rank matrix is exactly reconstructed from partial entries sampled at a near optimal rate.
Candes and Plan \cite{candes2011probabilistic} adopted the golfing scheme back to compressed sensing
and showed that exact recovery is guaranteed from $m = O(s \log d)$ incoherent measurements,
which improves on the previous performance guarantee with $m = O(s \log^4 d)$.

Chen and Chi \cite{chen2014robust} adopted the golfing scheme to structured matrix completion.
Let $T: \cz^n \to \cz^{n_1 \times n_1}$ be a linear operator that maps a vector $x \in \cz^n$ to
a structured matrix $T x \in \cz^{n_1 \times n_2}$ (e.g., a Hankel matrix).
Chen and Chi \cite{chen2014robust} proposed to estimate $x$ by
\[
\minimize_{g \in \cz^n} \, \norm{T g}_* \quad \mathrm{subject~to} \quad S_\Omega g = S_\Omega x.
\]
This can be interpreted as recovery of low-rank matrices in a special transform domain,
where $T$ maps the standard basis vectors $\{e_1,\ldots,e_n\}$ to unit-norm matrices with disjoint supports.
Unlike compressed sensing or matrix completion, in structured matrix completion,
the dimension of the vector space where the unknown signal is rearranged as a structured low-rank matrix is larger than the dimension of the vector space where measurements are sampled. In other words, $T$ is a tall matrix and is not surjective.
With this nontrivial difference, the conventional approaches \cite{gross2011recovering,candes2011probabilistic}
do not apply directly to structured matrix completion.
Chen and Chi \cite{chen2014robust} cleverly modified the definition of an inexact dual certificate
and the golfing scheme accordingly and provided performance guarantee at a near optimal sample complexity.
We adopt their approach to recovery of sparse signals in a transform domain,
where the nuclear norm is replaced by the $\ell_1$-norm\footnote{A subset of the authors \cite{ye2016compressive} sharpened the original analysis of the completion of structured low-rank matrices by Chen and Chi \cite{chen2014robust} particularly on the noise propagation in the recovery. In this paper, we generalize the improved version \cite{ye2016compressive}.}.
Notably, we extend the theory to the case where $T$ is not necessarily a unitary transform (e.g., $T^* T = I_n$),
which is the case in various practical applications.

We first present the following lemma that extends the notion of an inexact dual certificate
for recovery of sparse signals in a transform domain where the transform $T$ is not necessarily unitary.

\begin{lemma}[Uniqueness by an Inexact Dual Certificate]
\label{lemma:uniqueness}
Suppose that $T: \cz^n \to \cz^N$ is injective.
Let $J \subset [N]$ denote the support of $T x$,
i.e. the elements of $J$ correspond to the locations of the nonzero elements in $T x$.
Let $\Omega$ be a multi-set that consists of elements in $[n]$ with possible repetitions.
Let $\Omega'$ denote the set of all distinct elements in $\Omega$.
Suppose that
\begin{equation}
\label{eq:local_isometry_rank_deficient}
\left\|
\frac{n}{m} \Pi_J T S_\Omega^* S_\Omega T^\dagger \Pi_J
- \Pi_J T T^\dagger \Pi_J
\right\|
\leq \frac{1}{2}.
\end{equation}
If there exists a vector $v \in \cz^N$ satisfying
\begin{equation}
\label{eq:dualcert_vanish}
(T T^\dagger - T S_{\Omega'}^* S_{\Omega'} T^\dagger)^* v = 0,
\end{equation}
\begin{equation}
\label{eq:dualcert_sgn}
\norm{\Pi_J (v - \mbox{\upshape sgn}(T x))}_2 \leq \frac{1}{7 n \norm{T}_{1 \to 2} \norm{T^\dagger}_{2 \to \infty}},
\end{equation}
and
\begin{equation}
\label{eq:dualcert_bnd}
\norm{(I_N - \Pi_J) v}_\infty \leq \frac{1}{2},
\end{equation}
then $x$ is the unique minimizer to (\ref{eq:ell1min}).
\end{lemma}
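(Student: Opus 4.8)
The plan is to prove uniqueness by a direct first‑order/convex‑duality argument: I take an arbitrary feasible competitor $g = x + h$ with $S_\Omega g = S_\Omega x$, so that $h$ vanishes on $\Omega$ and hence on $\Omega'$, and aim to show $\norm{Tg}_1 > \norm{Tx}_1$ whenever $h \neq 0$. Because $T$ is injective, $h \neq 0$ is equivalent to $Th \neq 0$, so it suffices to exhibit a strict objective gain on the event $Th \neq 0$. Writing $J = \supp{Tx}$ and using that $\mbox{sgn}(Tx)$ is a subgradient of $\norm{\cdot}_1$ supported on $J$, the starting point is the convexity estimate
\[
\norm{Tg}_1 \ge \norm{Tx}_1 + \mathrm{Re}\,\langle \mbox{sgn}(Tx), \Pi_J Th\rangle + \norm{\Pi_{J^c} Th}_1,
\]
and the whole proof amounts to controlling the two correction terms by means of the certificate $v$ and the local isometry \eqref{eq:local_isometry_rank_deficient}.

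The first key step is to turn the \emph{inexact} certificate into the exact orthogonality $\langle v, Th\rangle = 0$. Since $Th \in R(T)$ we have $T T^\dagger Th = Th$, while $T^\dagger Th = h$ together with $S_{\Omega'} h = 0$ gives $T S_{\Omega'}^* S_{\Omega'} T^\dagger Th = 0$; hence $(TT^\dagger - T S_{\Omega'}^* S_{\Omega'} T^\dagger) Th = Th$, and pairing with $v$ and invoking \eqref{eq:dualcert_vanish} yields $\langle v, Th\rangle = 0$. I then rewrite $\langle \Pi_J v, \Pi_J Th\rangle = -\langle \Pi_{J^c} v, \Pi_{J^c} Th\rangle$ and use it to replace $\mbox{sgn}(Tx)$ by $v$ on the support: the support residual $\mbox{sgn}(Tx) - \Pi_J v$ is small in $\ell_2$ by \eqref{eq:dualcert_sgn}, and the off‑support part of $v$ is at most $\tfrac12$ in $\ell_\infty$ by \eqref{eq:dualcert_bnd}. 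After these substitutions the gain is bounded below by
\[
\tfrac12 \norm{\Pi_{J^c} Th}_1 - \frac{\norm{\Pi_J Th}_2}{7 n \norm{T}_{1\to2}\norm{T^\dagger}_{2\to\infty}},
\]
so the remaining task is to show the subtracted term is dominated by the first.

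The hard part — and the only place \eqref{eq:local_isometry_rank_deficient} is used — is bounding $\norm{\Pi_J Th}_2$ by a constant multiple of $\norm{\Pi_{J^c} Th}_1$. Setting $a := \Pi_J Th$ and $b := \Pi_{J^c} Th$, I would exploit $S_\Omega^* S_\Omega T^\dagger Th = S_\Omega^* S_\Omega h = 0$, which forces $\frac nm \Pi_J T S_\Omega^* S_\Omega T^\dagger \Pi_J a = -\frac nm \Pi_J T S_\Omega^* S_\Omega T^\dagger b$, while $TT^\dagger Th = Th$ gives $\Pi_J TT^\dagger \Pi_J a = a - \Pi_J TT^\dagger b$. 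Feeding both into \eqref{eq:local_isometry_rank_deficient} applied to $a$ and using the triangle inequality yields
\[
\tfrac12 \norm{a}_2 \le \Big\| \tfrac nm \Pi_J T S_\Omega^* S_\Omega T^\dagger b \Big\|_2 + \norm{\Pi_J TT^\dagger b}_2 .
\]
The second term is at most $\norm{b}_2 \le \norm{b}_1$ because $TT^\dagger$ is an orthogonal projection, and the first is estimated entrywise by $\frac nm \sum_j |(T^\dagger b)[\omega_j]|\,\norm{\Pi_J T e_{\omega_j}}_2 \le n\norm{T}_{1\to2}\norm{T^\dagger}_{2\to\infty}\norm{b}_1$. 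I expect to spend the most care on two delicate points. First, this is a genuinely rank‑deficient, non‑self‑adjoint situation: $TT^\dagger$ is only the projection onto $R(T)\subsetneq\cz^N$, so every step must route through $T^\dagger$ and the identity $TT^\dagger Th = Th$ rather than any isometry of $T$ itself, and the quadratic form $\langle \Pi_J T S_\Omega^* S_\Omega T^\dagger \Pi_J \cdot,\cdot\rangle$ is not a squared norm. Second, the numerical constants must be tracked so that the resulting bound $\norm{\Pi_J Th}_2 \le 2(P+1)\norm{\Pi_{J^c} Th}_1$, with $P := n\norm{T}_{1\to2}\norm{T^\dagger}_{2\to\infty}\ge n\ge1$ (using $\norm{T}_{1\to2}\norm{T^\dagger}_{2\to\infty}\ge1$), stays strictly below the threshold $\tfrac72 P \norm{\Pi_{J^c}Th}_1$ at which the factor $\tfrac17$ beats $\tfrac12$. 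Finally, if $\Pi_{J^c}Th = 0$ the displayed bound forces $\Pi_J Th = 0$ as well, hence $Th = 0$ and $h = 0$ by injectivity; otherwise $\norm{\Pi_{J^c}Th}_1 > 0$ and the gain is strictly positive, establishing that $x$ is the unique minimizer of \eqref{eq:ell1min}.
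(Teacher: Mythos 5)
Your proof is correct, and it reaches the conclusion by a route that differs from the paper's in one substantive place. The certificate half is the same in both arguments: use the subgradient $\mbox{sgn}(Tx)+\mbox{sgn}(\Pi_{[N]\setminus J}Th)$, establish $\langle v,Th\rangle=0$ from \eqref{eq:dualcert_vanish} together with $S_{\Omega'}h=0$ and $TT^\dagger Th=Th$, then trade $\mbox{sgn}(Tx)$ for $v$ using \eqref{eq:dualcert_sgn} on $J$ and \eqref{eq:dualcert_bnd} off $J$, arriving at the same net gain $\tfrac12\norm{\Pi_{[N]\setminus J}Th}_1-\norm{\Pi_J Th}_2/(7P)$ with $P:=n\norm{T}_{1\to2}\norm{T^\dagger}_{2\to\infty}$. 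Where you diverge is in relating the on-support and off-support energies. The paper splits into two cases according to whether $\norm{\Pi_J Th}_2\leq 3P\norm{\Pi_{[N]\setminus J}Th}_2$; in the complementary case it tests the identity $\bigl(\tfrac nm TS_\Omega^*S_\Omega T^\dagger+I_N-TT^\dagger\bigr)Th=0$ against $\Pi_J Th$, lower-bounding the diagonal quadratic form by $\tfrac12\norm{\Pi_J Th}_2^2$ via \eqref{eq:local_isometry_rank_deficient} and upper-bounding the cross term by the operator-norm estimate $\norm{\tfrac nm TS_\Omega^*S_\Omega T^\dagger+I_N-TT^\dagger}\leq P$, forcing $\Pi_J Th=0$ there. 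You instead prove a single, unconditional restricted-nullspace inequality $\norm{\Pi_J Th}_2\leq 2(P+1)\norm{\Pi_{[N]\setminus J}Th}_1$ by applying \eqref{eq:local_isometry_rank_deficient} to the vector $\Pi_J Th$, using $S_\Omega^*S_\Omega T^\dagger Th=0$ and $TT^\dagger Th = Th$ to push everything onto the off-support part, and bounding $\tfrac nm\Pi_J TS_\Omega^*S_\Omega T^\dagger$ column by column. The ingredients are identical, but your packaging is linear rather than quadratic; it removes the case split and yields an $\ell_1$ right-hand side, which is exactly what the subgradient gain supplies, so the constants close with the same $1/14$ margin. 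One small caveat: the closing comparison $2(P+1)<\tfrac72 P$ requires $P>4/3$, which your stated bound $P\geq n\geq 1$ guarantees only for $n\geq2$; since $n=1$ renders the lemma vacuous (the single coordinate is necessarily sampled), this costs nothing, but it deserves an explicit sentence.
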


\begin{proof}[Proof of Lemma~\ref{lemma:uniqueness}]
See Section~\ref{subsec:pf:lemma:uniqueness}.
\end{proof}

The next lemma shows that such an inexact dual certificate exists with hight probability
under the hypothesis of Theorem~\ref{thm:uniqueness}.

\begin{lemma}[Existence of an Inexact Dual Certificate]
\label{lemma:existence}
Suppose the hypotheses of Theorem~\ref{thm:uniqueness} hold.
Then, with probability $1 - e^{-\beta} - 1/n$,
there exists a vector $v \in \cz^N$ satisfying \cref{eq:dualcert_vanish,eq:dualcert_sgn,eq:dualcert_bnd}.
\end{lemma}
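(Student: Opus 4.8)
The plan is to produce $v$ by the golfing scheme, adapting Gross's construction and its structured-matrix-completion variant of Chen--Chi to the possibly non-unitary $T$. I would first split the random multi-set $\Omega$ into $p$ independent batches $\Omega_1,\dots,\Omega_p$ of i.i.d.\ uniform indices, chosen so that $\bigcup_i\Omega_i\subseteq\Omega'$ and so that each batch separately carries the restricted isometry of Theorem~\ref{thm:rboplike} on $J:=\supp{Tx}$. For each batch set $\mathcal{Q}_i:=\frac{n}{|\Omega_i|}\,\widetilde{T}S_{\Omega_i}^*S_{\Omega_i}T^*$; since the indices are uniform, $\mathbb{E}\,\mathcal{Q}_i=\widetilde{T}T^*=TT^\dagger$, the orthogonal projection onto $R(T)$. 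Starting from $v_0=0$ and $q_0=\mathrm{sgn}(Tx)$, I would iterate $v_i=v_{i-1}+\mathcal{Q}_i q_{i-1}$ and $q_i=\mathrm{sgn}(Tx)-\Pi_J v_i$, and take $v=v_p$. Every increment lies in $R(\widetilde{T})$ with coefficient vector supported on $\Omega_i$, hence $T^*v$ is supported on $\Omega'$; because $T^*\widetilde{T}=I_n$ one checks directly that ``$T^*v$ supported on $\Omega'$'' is exactly \eqref{eq:dualcert_vanish}, which therefore holds by construction. The remaining work is to choose $p$ and the batch sizes so that the two size conditions \eqref{eq:dualcert_sgn} and \eqref{eq:dualcert_bnd} hold with the stated probability.

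Two quantities must be driven down. For \eqref{eq:dualcert_sgn}, note $\Pi_J(v-\mathrm{sgn}(Tx))=-q_p$, and the recursion gives $q_i=(\Pi_J-\Pi_J\mathcal{Q}_i\Pi_J)q_{i-1}$; I would show $\|q_i\|_2\le\tfrac12\|q_{i-1}\|_2$ on the event of Theorem~\ref{thm:rboplike}, so that once $p$ is a constant multiple of $\log N+\log(\|T\|_{1\to2}\|T^\dagger\|_{2\to\infty})$ the norm $\|q_p\|_2$ falls below the tolerance in \eqref{eq:dualcert_sgn}. For \eqref{eq:dualcert_bnd}, I would expand $\Pi_{J^c}v=\sum_i\Pi_{J^c}\mathcal{Q}_iq_{i-1}$ and bound each term in $\ell_\infty$ by a vector Bernstein inequality, whose per-index variance proxies are exactly the local quantities $n\|\gamma^{1/2}Te_k\|_\infty^2$ and $n\|\gamma^{-1/2}\widetilde{T}e_k\|_\infty^2$ of \eqref{eq:incoherence}; each batch then contributes a multiple of $\sqrt{\mu/|\Omega_i|}\,\|q_{i-1}\|_2$, and summing the resulting geometric series keeps $\|\Pi_{J^c}v\|_\infty\le\tfrac12$. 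Matching the per-batch requirement of Theorem~\ref{thm:rboplike}, namely $|\Omega_i|$ at least a constant times $\frac{\mu s}{1-\|\gamma T^*T-I_n\|}$ up to logarithmic and $(1+\beta)$ factors, against $p=O(\log N)$ batches recovers the sample complexity of Theorem~\ref{thm:uniqueness}.

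The main obstacle is the failure of isotropy: here $\mathbb{E}\,\mathcal{Q}_i=TT^\dagger$ is a projection rather than the identity, so the support coordinate subspace $\mathrm{span}\{e_j:j\in J\}$ is not left invariant and the textbook fact ``the restricted sampling operator is close to the identity on the support'' is simply false; the restricted operator concentrates instead around $\Pi_J TT^\dagger\Pi_J$. Establishing the contraction $\|q_i\|_2\le\tfrac12\|q_{i-1}\|_2$ is therefore the technical heart of the argument, and is precisely where I would import the modification of Chen--Chi: I would work with the biorthogonal pair $\{Te_k\}$, $\{\widetilde{T}e_k\}$ (so that $T^\dagger T=I_n$ and $TT^\dagger=\widetilde{T}T^*$), use the $\gamma$-renormalization of \eqref{eq:defgamma} to balance the two families in \eqref{eq:incoherence} and to supply the conditioning factor $\frac{1}{1-\|\gamma T^*T-I_n\|}$, and invoke the restricted isometry \eqref{eq:rboplike} of Theorem~\ref{thm:rboplike} as the concentration engine on $J$. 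A secondary difficulty is that the on-support $\ell_2$ decay and the off-support $\ell_\infty$ bound are read off the same random batches; keeping the batches independent lets me union bound over the $p$ golfing steps together with the isometry event, and combining this with the failure probabilities of the Bernstein estimates yields the overall probability $1-e^{-\beta}-1/n$.
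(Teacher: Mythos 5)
Your overall strategy (golfing with independent batches, contraction of the on-support residual, Bernstein off the support) matches the paper's, and your observation that \eqref{eq:dualcert_vanish} is equivalent to ``$T^*v$ is supported on $\Omega'$'' is correct. But there is a genuine gap in the construction itself: you take each golfing increment to be $\mathcal{Q}_i q_{i-1}$ with $\mathcal{Q}_i = \frac{n}{m_i}\widetilde{T}S_{\Omega_i}^*S_{\Omega_i}T^*$, so that $v$ lies entirely in $R(\widetilde{T})$, and your residual recursion becomes $q_i=(\Pi_J-\Pi_J\mathcal{Q}_i\Pi_J)q_{i-1}$. Since $\mathbb{E}\,\mathcal{Q}_i = TT^\dagger$ is a proper projection rather than the identity, this operator concentrates around $\Pi_J(I_N-TT^\dagger)\Pi_J$, which is \emph{not} small: nothing in the hypotheses of Theorem~\ref{thm:uniqueness} (nor in \eqref{eq:local_isometry_rank_deficient} or Theorem~\ref{thm:rboplike}, which only control deviations \emph{from} $\Pi_J TT^\dagger\Pi_J$) prevents $\norm{\Pi_J(I_N-TT^\dagger)\Pi_J}$ from being $1$ or arbitrarily close to it. So the claimed contraction $\norm{q_i}_2\le\tfrac12\norm{q_{i-1}}_2$ fails in general, and with it the bound \eqref{eq:dualcert_sgn}. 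You correctly diagnose this as ``the failure of isotropy,'' but the Chen--Chi modification you say you would import is precisely incompatible with your construction: it consists of adding the deterministic correction $I_N-TT^\dagger$ to each increment, i.e.\ taking $v=\sum_i\bigl(\frac{n}{m_i}\widetilde{T}S_{\Omega_i}^*S_{\Omega_i}T^*+I_N-TT^\dagger\bigr)q_{i-1}$, so that the residual recursion $q_i=\Pi_J\bigl(\widetilde{T}T^*-\frac{n}{m_i}\widetilde{T}S_{\Omega_i}^*S_{\Omega_i}T^*\bigr)q_{i-1}$ is a genuinely zero-mean fluctuation controlled by a vector Bernstein bound (Lemma~\ref{lemma:E2}). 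The corrected increments no longer lie in $R(\widetilde{T})$, but \eqref{eq:dualcert_vanish} survives because $(\widetilde{T}T^*-\widetilde{T}S_{\Omega'}^*S_{\Omega'}T^*)(I_N-\widetilde{T}T^*)=\widetilde{T}(I_n-S_{\Omega'}^*S_{\Omega'})(T^*-T^*\widetilde{T}T^*)=0$, using $T^*\widetilde{T}=I_n$. This correction is the essential idea your proof is missing.

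Two smaller points. First, you invoke Theorem~\ref{thm:rboplike} per batch for the contraction, which is both stronger and more expensive (it carries extra $\log^2 s\,\log m$ factors) than needed; since $q_{i-1}$ is independent of $\Omega_i$, the fixed-vector estimate of Lemma~\ref{lemma:E2} suffices and preserves the sample complexity of Theorem~\ref{thm:uniqueness}. Second, a plain union bound over $p=O(\log N)$ batches does not yield the stated probability $1-e^{-\beta}-1/n$, because for the later batches the per-batch failure probability is only a constant; the paper handles this with Gross's oversampling-and-refinement device (generate extra i.i.d.\ batches and apply Hoeffding to guarantee enough ``good'' ones), and your write-up would need the same step.
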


\begin{proof}[Proof of Lemma~\ref{lemma:existence}]
See Section~\ref{subsec:pf:lemma:existence}.
\end{proof}

Lemma~\ref{lemma:existence} together with Lemma~\ref{lemma:uniqueness} implies Theorem~\ref{thm:uniqueness}.
Indeed, we only need to verify \eqref{eq:local_isometry_rank_deficient}.
By Lemma~\ref{lemma:E1}, \eqref{eq:local_isometry_rank_deficient} holds with probability at least $1 - 2/n$
provided that $m \geq 32 \mu s \max\{1/(1-\norm{T^* T - I_n}), 1/6\} \log n$.
This completes the proof of Theorem~\ref{thm:uniqueness}.

In the next section, we will introduce fundamental estimates that will be used in the proofs of Lemmas~\ref{lemma:uniqueness} and \ref{lemma:existence}.

\subsection{Lemmas on fundamental estimates}
\label{subsec:riplike_lemmas}

The following lemmas provides estimates on various functions of the random matrix $T S_\Omega^* S_\Omega T^\dagger$,
which are analogous to the corresponding estimates for RIPless compressed sensing \cite{candes2011probabilistic}.

Similarly to RIPless compressed sensing, we also employ the notion of incoherence.
In fact, our incoherence assumption in \eqref{eq:incoherence} is analogous to
a generalized version for anisotropic compressed sensing \cite{rudelson2013reconstruction,lee2013oblique,kueng2014ripless}.

One important distinction from compressed sensing is that the \textit{isotropy} property is not satisfied.
Indeed, since the random indices in $\Omega$ are i.i.d. following the uniform distribution on $[n]$,
the random matrix $T S_\Omega^* S_\Omega T^\dagger$ satisfies
\[
\frac{n}{m} \mathbb{E} T S_\Omega^* S_\Omega T^\dagger = T T^\dagger.
\]
While $T T^\dagger$ is an orthogonal projection (idempotent and self-adjoint), it is not necessarily an identity operator.
The rank-deficiency of $T T^\dagger$ requires new analysis in Lemmas~\ref{lemma:uniqueness} and \ref{lemma:existence}
compared to the previous results for compressed sensing with the canonical sparsity model \cite{candes2011probabilistic}.
However, the fundamental estimates measure the deviations of functions of $T S_\Omega^* S_\Omega T^\dagger$
from their expectations and do not require the isotropy property ($T^* T = I_n$).
We present the following lemmas for the fundamental estimates, whose proofs are deferred to the appendix.

\begin{lemma}[E1: Local Isometry on a Proper Subspace]
\label{lemma:E1}
Suppose $T: \cz^n \to \cz^N$ and $\widetilde{T} = (T^\dagger)^*$ satisfy \eqref{eq:incoherence} with parameter $\mu$.
Let $\gamma$ be defined in \eqref{eq:defgamma}.
Let $J$ be a fixed subset of $[N]$ satisfying $|J| = s$.
Let $\Omega = \{\omega_1,\ldots,\omega_m\}$ where $\omega_j$s are i.i.d. following the uniform distribution on $[n]$.
Then for $\delta > 0$,
\[
\mathbb{P}\left(\left\| \Pi_J \left(\frac{n}{m} T S_\Omega^* S_\Omega T^\dagger - T T^\dagger \right) \Pi_J \right\| \geq \delta\right)
\leq 2s \exp\left( -\frac{m}{s\mu} \cdot \frac{\delta^2/2}{1/(1-\norm{\gamma T^* T - I_n})+\delta/3} \right).
\]
\end{lemma}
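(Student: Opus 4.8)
The plan is to recognize the quantity inside the norm as a sum of independent, mean-zero rank-one matrices and apply the matrix Bernstein inequality in its rectangular (non-Hermitian) form, whose ambient-dimension factor $2s$ is exactly the prefactor in the claim. First I would expand $S_\Omega^* S_\Omega = \sum_{j=1}^m e_{\omega_j} e_{\omega_j}^*$ and use $T^\dagger = \widetilde{T}^*$ to write
\[
\frac{n}{m} T S_\Omega^* S_\Omega T^\dagger = \frac{1}{m} \sum_{j=1}^m n (T e_{\omega_j})(\widetilde{T} e_{\omega_j})^*.
\]
Since the $\omega_j$ are uniform on $[n]$, we have $\mathbb{E}[n (T e_{\omega})(\widetilde{T} e_{\omega})^*] = \sum_{k=1}^n (T e_k)(\widetilde{T} e_k)^* = T \widetilde{T}^* = T T^\dagger$. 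Hence, setting
\[
Z_j := \frac{1}{m} \Pi_J \left( n (T e_{\omega_j})(\widetilde{T} e_{\omega_j})^* - T T^\dagger \right) \Pi_J,
\]
the $Z_j$ are i.i.d. with $\mathbb{E} Z_j = 0$ and $\sum_{j=1}^m Z_j$ is precisely the operator whose norm we must control. Because each $Z_j$ is supported on the rows and columns indexed by $J$, it may be viewed as an $s \times s$ matrix, which is what produces the factor $2s$ from Bernstein.

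For the almost-sure bound I would use, for every $k$,
\[
\norm{\Pi_J\, n (T e_k)(\widetilde{T} e_k)^* \Pi_J} = n \norm{\Pi_J T e_k}_2 \norm{\Pi_J \widetilde{T} e_k}_2 \leq n\, s \norm{T e_k}_\infty \norm{\widetilde{T} e_k}_\infty \leq s \mu,
\]
where the middle inequality is $\norm{\Pi_J u}_2 \leq \sqrt{s}\,\norm{u}_\infty$ and the last step multiplies the two incoherence bounds in \eqref{eq:incoherence} (the factors $\gamma^{\pm 1/2}$ cancel). Together with $\norm{\Pi_J T T^\dagger \Pi_J}\leq 1$, this gives the effective almost-sure bound $\norm{Z_j}\leq R := s\mu/m$ for the centered summand.

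The crux is the variance parameter $\sigma^2 = \max\{\norm{\sum_j \mathbb{E}[Z_j Z_j^*]}, \norm{\sum_j \mathbb{E}[Z_j^* Z_j]}\}$. Writing $a_k := \Pi_J T e_k$ and $b_k := \Pi_J \widetilde{T} e_k$ and expanding, the cross terms with the mean collapse, leaving
\[
\sum_{j=1}^m \mathbb{E}[Z_j Z_j^*] = \frac{1}{m} \left( n \sum_{k=1}^n \norm{b_k}_2^2\, a_k a_k^* - P P^* \right), \qquad P := \Pi_J T T^\dagger \Pi_J.
\]
Dropping $-PP^* \preceq 0$ and bounding $\norm{b_k}_2^2 \leq s \norm{\widetilde{T} e_k}_\infty^2 \leq s\mu\gamma/n$ via \eqref{eq:incoherence} gives $\sum_j \mathbb{E}[Z_j Z_j^*] \preceq \frac{s\mu\gamma}{m}\,\Pi_J T T^* \Pi_J$, so its norm is at most $\frac{s\mu}{m}\,\gamma\norm{T}^2 = \frac{s\mu}{m}\,\lambda_{\max}(\gamma T^* T)$. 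The symmetric computation for $Z_j^* Z_j$ uses $\norm{a_k}_2^2 \leq s\mu/(n\gamma)$ and yields a norm at most $\frac{s\mu}{m}\cdot\frac{\norm{\widetilde{T}}^2}{\gamma} = \frac{s\mu}{m}\cdot\frac{1}{\gamma\sigma_{\min}(T)^2} = \frac{s\mu}{m\,\lambda_{\min}(\gamma T^* T)}$.

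To finish, I would invoke the elementary spectral facts $\lambda_{\max}(\gamma T^* T)\leq 1+\norm{\gamma T^* T - I_n}$ and $\lambda_{\min}(\gamma T^* T)\geq 1-\norm{\gamma T^* T - I_n}$, together with $1+\eta\leq(1-\eta)^{-1}$ for $\eta\in(0,1)$, to conclude that both variance terms are bounded by $\sigma^2 = \frac{s\mu}{m}\cdot\frac{1}{1-\norm{\gamma T^* T - I_n}}$. Substituting this $\sigma^2$ and $R = s\mu/m$ into the Bernstein bound $2s\exp\!\left(-\frac{\delta^2/2}{\sigma^2 + R\delta/3}\right)$ and factoring out $s\mu/m$ produces exactly the stated inequality. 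I expect the variance step to be the main obstacle: the sum is genuinely non-Hermitian, so both $\mathbb{E}[Z_j Z_j^*]$ and $\mathbb{E}[Z_j^* Z_j]$ must be controlled, and these are governed respectively by $\norm{T}^2$ and $\norm{\widetilde{T}}^2 = \sigma_{\min}(T)^{-2}$; it is precisely the choice of $\gamma$ in \eqref{eq:defgamma} minimizing $\norm{\gamma T^* T - I_n}$ that balances these two and makes the conditioning factor $1/(1-\norm{\gamma T^* T - I_n})$ emerge. A secondary, bookkeeping-level subtlety is tracking the constant in the almost-sure bound $R$ carefully enough that the denominator reads $\delta/3$ rather than a larger multiple of $\delta$.
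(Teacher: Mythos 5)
Your proposal is correct and follows essentially the same route as the paper's proof: center the rank-one summands $\Pi_J(nTe_{\omega_j}e_{\omega_j}^*T^\dagger - TT^\dagger)\Pi_J$, bound $\norm{X_j}$ by $s\mu$ via the incoherence condition, control both $\mathbb{E}X_jX_j^*$ and $\mathbb{E}X_j^*X_j$ by $s\mu\gamma\,\Pi_J TT^*\Pi_J$ and $s\mu\gamma^{-1}\Pi_J\widetilde{T}T^\dagger\Pi_J$ respectively, convert these to $\frac{s\mu}{1-\norm{\gamma T^*T-I_n}}$ via the spectral bounds on $\gamma T^*T$, and invoke the matrix Bernstein inequality on the $s\times s$ block. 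The only differences are cosmetic (you normalize by $1/m$ and take $t=\delta$ where the paper takes $t=m\delta$), and you share with the paper the same harmless simplification of absorbing the $\norm{\Pi_J TT^\dagger\Pi_J}\le 1$ contribution into the almost-sure bound $s\mu$.
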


\begin{proof}[Proof of Lemma~\ref{lemma:E1}]
See Appendix~\ref{sec:pf:lemma:E1}.
\end{proof}

\begin{lemma}[E2: Low Distortion]
\label{lemma:E2}
Suppose $T: \cz^n \to \cz^N$ and $\widetilde{T} = (T^\dagger)^*$ satisfy \eqref{eq:incoherence} with parameter $\mu$.
Let $\gamma$ be defined in \eqref{eq:defgamma}.
Let $q \in \cz^N$ be a fixed vector.
Let $J$ be a fixed subset of $[N]$ satisfying $|J| = s$.
Let $\Omega = \{\omega_1,\ldots,\omega_m\}$ where $\omega_j$s are i.i.d. following the uniform distribution on $[n]$.
Then for each $t \leq 1/2$,
\[
\mathbb{P}\left(\left\|\Pi_J \left(\frac{n}{m} T S_\Omega^* S_\Omega T^\dagger - T T^\dagger \right) \Pi_J q\right\|_2 \geq t \norm{\Pi_J q}_2\right)
\leq \exp\left\{ -\frac{1}{4} \left( t \sqrt{\frac{m(1-\norm{\gamma T^* T - I_n})}{s\mu}} - 1 \right)^2 \right\}.
\]
\end{lemma}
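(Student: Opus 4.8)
The plan is to recognize the left-hand quantity as an average of independent mean-zero random vectors and to control it with a vector Bernstein inequality, exactly as in the RIPless analysis \cite{candes2011probabilistic,gross2011recovering}. First I would expand $\frac{n}{m} T S_\Omega^* S_\Omega T^\dagger = \frac{1}{m}\sum_{j=1}^m n (T e_{\omega_j})(\widetilde{T} e_{\omega_j})^*$, using $T^\dagger = \widetilde{T}^*$, and write $\Pi_J(\frac{n}{m} T S_\Omega^* S_\Omega T^\dagger - T T^\dagger)\Pi_J q = \frac{1}{m}\sum_{j=1}^m z_j$ with $z_j := n\, c_{\omega_j}\, \Pi_J T e_{\omega_j} - \Pi_J T T^\dagger \Pi_J q$, where $c_k := (\widetilde{T} e_k)^* \Pi_J q$. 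Because $\frac{1}{n}\sum_{k=1}^n (T e_k)(\widetilde{T} e_k)^* = T T^\dagger$, the $z_j$ are i.i.d. with $\mathbb{E} z_j = 0$, so the task reduces to a tail bound for $\|\frac{1}{m}\sum_j z_j\|_2$.

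Next I would extract from the incoherence condition \eqref{eq:incoherence} the two pointwise estimates $\|T e_k\|_\infty \le \sqrt{\mu/(n\gamma)}$ and $\|\widetilde{T} e_k\|_\infty \le \sqrt{\gamma\mu/n}$, valid because $\gamma$ is a positive scalar. These give $\|\Pi_J T e_k\|_2 \le \sqrt{s}\,\|T e_k\|_\infty$ and $|c_k| \le \|\Pi_J q\|_2\,\|\Pi_J \widetilde{T} e_k\|_2 \le \sqrt{s}\,\|\widetilde{T} e_k\|_\infty \|\Pi_J q\|_2$; multiplying the two estimates, the factors of $\gamma$ cancel and $\|n\, c_k\, \Pi_J T e_k\|_2 \le s\mu \|\Pi_J q\|_2$, whence $\|z_j\|_2 \le 2 s \mu \|\Pi_J q\|_2 =: B$.

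For the variance proxy $\sigma^2 := \frac{1}{m} \mathbb{E} \|z_1\|_2^2$, I would drop the nonnegative squared-mean term and compute $\mathbb{E}\|n c_\omega \Pi_J T e_\omega\|_2^2 = n\sum_k |c_k|^2 \|\Pi_J T e_k\|_2^2 \le \frac{s\mu}{\gamma}\sum_k |c_k|^2 = \frac{s\mu}{\gamma}\|\widetilde{T}^* \Pi_J q\|_2^2 = \frac{s\mu}{\gamma}\|T^\dagger \Pi_J q\|_2^2$. Here the definition of $\gamma$ in \eqref{eq:defgamma} enters: from $\gamma T^* T \succeq (1 - \norm{\gamma T^* T - I_n}) I_n$ one gets $\sigma_{\min}(T)^2 \ge (1-\norm{\gamma T^* T - I_n})/\gamma$, hence $\|T^\dagger \Pi_J q\|_2^2 \le \frac{\gamma}{1-\norm{\gamma T^* T - I_n}}\|\Pi_J q\|_2^2$. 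The factors of $\gamma$ cancel a second time, leaving
\[
\sigma^2 \le \frac{s\mu}{m\,(1 - \norm{\gamma T^* T - I_n})}\,\|\Pi_J q\|_2^2.
\]

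Finally I would invoke the vector Bernstein inequality: for an i.i.d. mean-zero sum with per-term bound $B/m$ and variance proxy $\sigma^2$, one has $\mathbb{P}(\|\frac{1}{m}\sum_j z_j\|_2 \ge \tau) \le \exp(-\frac{1}{4} (\tau/\sigma - 1)^2)$ throughout the sub-Gaussian range $\tau \le \sigma^2/(B/m)$. Setting $\tau = t \|\Pi_J q\|_2$ and inserting the variance bound yields $\tau/\sigma \ge t\sqrt{m(1-\norm{\gamma T^* T - I_n})/(s\mu)}$, which reproduces the claimed exponent; a short computation shows $\sigma^2/(B/m) \ge \|\Pi_J q\|_2/2$, so the hypothesis $t \le 1/2$ keeps $\tau$ in the admissible range. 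I expect the only real work to be the variance estimate: the point is that the incoherence normalization and the definition of $\gamma$ are calibrated so that every factor of $\gamma$ cancels and the lower spectral bound on $T^* T$ supplies precisely the $1/(1-\norm{\gamma T^* T - I_n})$ factor. Once this cancellation is in place, the remainder is the routine bookkeeping of a Bernstein bound.
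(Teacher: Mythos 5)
Your proposal is correct and follows essentially the same route as the paper's proof: decompose $\Pi_J(\frac{n}{m}TS_\Omega^*S_\Omega T^\dagger - TT^\dagger)\Pi_J q$ into i.i.d.\ mean-zero vectors, bound each term and its second moment using the incoherence condition \eqref{eq:incoherence} (with the two $\gamma$ factors cancelling) and the spectral lower bound $\gamma T^*T \succeq (1-\norm{\gamma T^*T - I_n})I_n$, and apply the vector Bernstein inequality. Your writeup is in fact slightly more careful than the paper's, since you verify the admissible range $\tau \le \sigma^2/B$ and thereby explain the hypothesis $t \le 1/2$, which the paper leaves implicit.
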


\begin{proof}[Proof of Lemma~\ref{lemma:E2}]
See Appendix~\ref{sec:pf:lemma:E2}.
\end{proof}

\begin{lemma}[E3: Off-Support Incoherence]
\label{lemma:E3}
Suppose $T: \cz^n \to \cz^N$ and $\widetilde{T} = (T^\dagger)^*$ satisfy \eqref{eq:incoherence} with parameter $\mu$.
Let $\gamma$ be defined in \eqref{eq:defgamma}.
Let $q \in \cz^N$ be a fixed vector.
Let $J$ be a fixed subset of $[N]$ satisfying $|J| = s$.
Let $\Omega = \{\omega_1,\ldots,\omega_m\}$ where $\omega_j$s are i.i.d. following the uniform distribution on $[n]$.
Then for each $t > 0$,
\begin{align*}
{} & \mathbb{P}\left(\left\|\Pi_{[N] \setminus J} \left( \frac{n}{m} \widetilde{T} S_\Omega^* S_\Omega T^* - \widetilde{T} T^* \right) \Pi_J q\right\|_\infty \geq t \norm{\Pi_J q}_2\right) \\
{} & \quad \leq 2N \exp\left( - \frac{m}{2\mu} \cdot \frac{t^2}{1/(1-\norm{\gamma T^* T - I_n}) + \sqrt{s}t/3} \right).
\end{align*}
\end{lemma}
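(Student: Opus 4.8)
The plan is to reduce the $\ell_\infty$-norm, which is a maximum over the rows indexed by $[N]\setminus J$, to a union bound over scalar concentration inequalities, exactly as in the off-support incoherence estimate for RIPless compressed sensing. First I would write the centered random matrix as a sum of independent contributions via $S_\Omega^* S_\Omega = \sum_{j=1}^m e_{\omega_j} e_{\omega_j}^*$, so that
\[
\frac{n}{m}\widetilde{T} S_\Omega^* S_\Omega T^* - \widetilde{T}T^* = \sum_{j=1}^m \left( \frac{n}{m}\widetilde{T} e_{\omega_j} e_{\omega_j}^* T^* - \frac{1}{m}\widetilde{T} T^* \right),
\]
each summand being i.i.d. and mean-zero since $\mathbb{E}\, e_{\omega_j}e_{\omega_j}^* = \frac{1}{n}I_n$. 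Fixing an off-support index $\ell\in[N]\setminus J$ and writing $p := \Pi_J q$, the scalar $e_\ell^*(\cdots)p$ becomes a sum $\sum_{j=1}^m Z_j$ of i.i.d. centered complex random variables $Z_j = \frac{n}{m}a_{\omega_j}b_{\omega_j} - \frac{1}{m}\sum_{k} a_k b_k$, where $a_k := e_\ell^*\widetilde{T}e_k$ is an entry of $\widetilde{T}$ and $b_k := (Te_k)^* p$ is the $k$-th entry of $T^* p$.

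The two ingredients needed for Bernstein's inequality are a uniform bound $K$ on $|Z_j|$ and a bound on the total variance. For the range, the incoherence assumption \eqref{eq:incoherence} gives $|a_k| \le \norm{\widetilde{T} e_k}_\infty \le \sqrt{\mu\gamma/n}$ and, since $p$ is supported on $J$, $|b_k| \le \sqrt{s}\,\norm{Te_k}_\infty\norm{p}_2 \le \sqrt{s\mu/(n\gamma)}\,\norm{p}_2$; multiplying and noting that the centering term $\frac{1}{m}e_\ell^*\widetilde{T}T^*p$ is of lower order yields $K \lesssim \frac{\sqrt{s}\,\mu}{m}\norm{p}_2$, which produces the $\sqrt{s}\,t/3$ term in the exponent. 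For the variance I would bound $\sum_{j}\mathbb{E}|Z_j|^2 \le \frac{n}{m}\sum_k|a_k|^2|b_k|^2 \le \frac{\mu\gamma}{m}\sum_k|b_k|^2 = \frac{\mu\gamma}{m}\norm{T^* p}_2^2$, again using the incoherence of $\widetilde{T}$ to extract $\max_k|a_k|^2$.

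The crux is then to convert $\gamma\norm{T^*p}_2^2 = \gamma\, p^* T T^* p$ into the factor $1/(1-\norm{\gamma T^* T - I_n})$. I would argue spectrally: writing $\delta_0 := \norm{\gamma T^* T - I_n} < 1$, the nonzero eigenvalues of $\gamma T T^*$ coincide with those of $\gamma T^* T$ and hence lie in $[1-\delta_0,1+\delta_0]$, while the remaining directions lie in $\ker T^*$ and contribute nothing. Thus $\gamma\, p^* T T^* p \le (1+\delta_0)\norm{p}_2^2 \le \frac{1}{1-\delta_0}\norm{p}_2^2$, using $1+\delta_0 \le 1/(1-\delta_0)$. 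This gives total variance $\le \frac{\mu}{m(1-\delta_0)}\norm{p}_2^2$, matching the leading term in the denominator of the exponent.

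Finally I would invoke a scalar Bernstein inequality for $\sum_j Z_j$ with these values of $K$ and the variance, normalize by $\norm{p}_2 = \norm{\Pi_J q}_2$, and take a union bound over the at most $N$ off-support indices $\ell$, the two-sided tail accounting for the prefactor $2N$. The main obstacle is the variance step: obtaining the exact $1/(1-\norm{\gamma T^* T - I_n})$ dependence requires the spectral identification of $\gamma T T^*$ with $\gamma T^* T$ on the range of $T$ together with the inequality $1+\delta_0 \le 1/(1-\delta_0)$. The complex-valued nature of the $Z_j$ is a minor secondary issue, handled either by a complex Bernstein bound or by treating real and imaginary parts separately.
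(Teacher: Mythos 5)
Your proposal is correct and takes essentially the same route as the paper: for each fixed off-support coordinate it reduces to a sum of i.i.d.\ mean-zero scalars, bounds the range by $\sqrt{s}\,\mu$ via the incoherence of $T$ and $\widetilde{T}$, bounds the variance by extracting the worst-case incoherence factor and controlling the remaining quadratic form through the spectrum of $\gamma T^* T$ to get the $1/(1-\norm{\gamma T^* T - I_n})$ term, and finishes with a two-sided Bernstein inequality and a union bound over the $N$ coordinates. The only cosmetic difference is that you work directly with $\widetilde{T} S_\Omega^* S_\Omega T^*$ as stated (so the $\max$ in the variance falls on the $\widetilde{T}$ entries and you bound $\gamma\, p^* T T^* p \leq 1+\norm{\gamma T^*T-I_n} \leq 1/(1-\norm{\gamma T^*T-I_n})$), whereas the paper computes with the adjoint $T S_\Omega^* S_\Omega T^\dagger$ and bounds $\gamma^{-1}\norm{\widetilde{T} T^\dagger}$ instead; the symmetry of the incoherence condition makes the two computations interchangeable.
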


\begin{proof}[Proof of Lemma~\ref{lemma:E3}]
See Appendix~\ref{sec:pf:lemma:E3}.
\end{proof}

\begin{lemma}[E3': Off-Group-Support Incoherence]
\label{lemma:E3'}
Suppose $T: \cz^n \to \cz^L$ and $\widetilde{T} = (T^\dagger)^*$ satisfy \eqref{eq:incoherence_gs} with parameter $\mu$.
Let $\gamma$ be defined in \eqref{eq:defgamma}.
Let $q \in \cz^L$ be a fixed vector.
Let $\calG = \{\calG_1, \ldots, \calG_N\}$ be a partition of $[L]$.
Let $J$ be a fixed subset of $[N]$ satisfying $|J| = s$.
Let $\calG_J = \bigcup_{j \in J} \calG_j$.
Let $\Omega = \{\omega_1,\ldots,\omega_m\}$ where $\omega_j$s are i.i.d. following the uniform distribution on $[n]$.
Then for each $t > 0$,
\begin{align*}
{} & \mathbb{P}\left( \left\|\Pi_{[L] \setminus \calG_J} \left( \frac{n}{m} \widetilde{T} S_\Omega^* S_\Omega T^* - \widetilde{T} T^* \right) \Pi_{\calG_J} q\right\|_{\calG,\infty} \geq t \norm{\Pi_{\calG_J} q}_2\right) \\
{} & \quad \leq 2 \max_{j \in [N]} |\calG_j| N \exp\left( - \frac{m}{2\mu} \cdot \frac{t^2}{1/(1-\norm{\gamma T^* T - I_n}) + \sqrt{s}t/3} \right).
\end{align*}
\end{lemma}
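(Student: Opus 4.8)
The plan is to mirror the proof of Lemma~\ref{lemma:E3}, replacing the scalar Bernstein inequality applied coordinatewise by a matrix Bernstein inequality applied groupwise, and the $\ell_\infty$ union bound over coordinates by a union bound over the groups indexed by $[N]\setminus J$. Throughout write $\delta := \norm{\gamma T^* T - I_n}$ and $p := \Pi_{\calG_J} q$. Since the $\omega_k$ are i.i.d.\ uniform on $[n]$, we have $\frac{n}{m}\mathbb{E}(S_\Omega^* S_\Omega) = I_n$ and hence $\frac{n}{m}\mathbb{E}(\widetilde{T} S_\Omega^* S_\Omega T^*) = \widetilde{T}T^*$, so the matrix $M := \frac{n}{m}\widetilde{T} S_\Omega^* S_\Omega T^* - \widetilde{T}T^*$ has mean zero. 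Because $\Pi_{\calG_{j'}}\Pi_{[L]\setminus\calG_J}=0$ whenever $j'\in J$, the group dual norm in the statement reduces to
\[
\tnorm{\Pi_{[L]\setminus\calG_J} M\, p}_{\calG,\infty}=\max_{j'\in[N]\setminus J}\norm{Z_{j'}}_2,\qquad Z_{j'}:=\Pi_{\calG_{j'}} M\, p .
\]
I would fix $j'$ and write $Z_{j'}=\sum_{k=1}^m \zeta_k$ as a sum of independent, mean-zero summands $\zeta_k := \frac{n}{m}\,(T^* p)[\omega_k]\,\Pi_{\calG_{j'}}\widetilde{T} e_{\omega_k}-\frac{1}{m}\Pi_{\calG_{j'}}\widetilde{T}T^* p$, each viewed as a $|\calG_{j'}|\times 1$ matrix.

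The next step is to bound the two Bernstein parameters from the group incoherence \eqref{eq:incoherence_gs}. For the almost-sure bound, Cauchy--Schwarz gives $|(T^* p)[k]|=|\langle \Pi_{\calG_J} T e_k, p\rangle|\le \norm{\Pi_{\calG_J} T e_k}_2\norm{p}_2$, and \eqref{eq:incoherence_gs} yields both $\norm{\Pi_{\calG_J} T e_k}_2^2\le s\max_{j}\norm{\Pi_{\calG_j} T e_k}_2^2\le s\mu/(n\gamma)$ and $\norm{\Pi_{\calG_{j'}}\widetilde{T} e_k}_2^2\le \gamma\mu/n$. Multiplying, the powers of $\gamma$ cancel and the leading term of $\zeta_k$ has norm at most $\frac{\sqrt{s}\,\mu}{m}\norm{p}_2=:B$. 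For the variance, centering only decreases second moments, so both matrix-variance quantities are controlled by
\[
\sum_{k=1}^m\mathbb{E}\norm{\zeta_k}_2^2\le \frac{n}{m}\cdot\frac{\gamma\mu}{n}\sum_{i=1}^n |(T^*p)[i]|^2=\frac{\gamma\mu}{m}\norm{T^* p}_2^2 .
\]
The key inequality is then $\gamma\norm{T^* p}_2^2\le \gamma\norm{T}^2\norm{p}_2^2=\norm{\gamma T^* T}\,\norm{p}_2^2\le (1+\delta)\norm{p}_2^2\le \frac{1}{1-\delta}\norm{p}_2^2$, which furnishes the variance parameter $\sigma^2:=\frac{\mu}{m(1-\delta)}\norm{p}_2^2$.

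Finally I would invoke the matrix Bernstein inequality for the $|\calG_{j'}|\times 1$ matrices $\zeta_k$, whose dimension prefactor is $|\calG_{j'}|+1$. With threshold $t\norm{p}_2$ and the parameters $B$ and $\sigma^2$ above, a direct substitution gives
\[
\frac{(t\norm{p}_2)^2/2}{\sigma^2+B\,t\norm{p}_2/3}=\frac{m}{2\mu}\cdot\frac{t^2}{1/(1-\delta)+\sqrt{s}\,t/3},
\]
so that $\mathbb{P}(\norm{Z_{j'}}_2\ge t\norm{p}_2)\le (|\calG_{j'}|+1)\exp\big(-\frac{m}{2\mu}\cdot\frac{t^2}{1/(1-\delta)+\sqrt{s}t/3}\big)$. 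Using $|\calG_{j'}|+1\le 2\max_{j}|\calG_j|$ and a union bound over the at most $N$ indices $j'\in[N]\setminus J$ then produces the claimed estimate. The step I expect to be most delicate is the bookkeeping around the centering term in $\zeta_k$: controlling its contribution to the almost-sure bound tightly enough that the Bernstein denominator reads exactly $\sqrt{s}\,t/3$ rather than a larger constant multiple, and confirming that the spectral norm of $\sum_k\mathbb{E}(\zeta_k\zeta_k^*)$ is dominated by its trace $\sum_k\mathbb{E}\norm{\zeta_k}_2^2$. The cancellation of $\gamma$ in $B$ and its reappearance as $1/(1-\delta)$ in $\sigma^2$ is the crux of the estimate but is otherwise routine.
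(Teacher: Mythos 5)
Your proposal is correct and follows essentially the same route as the paper's proof: fix a group $j'\in[N]\setminus J$, write the group-projected quantity as a sum of independent mean-zero vectors, bound the almost-sure norm by $(\sqrt{s}\mu+1)\norm{\Pi_{\calG_J}q}_2/m$ and the variance by $\mu\norm{\Pi_{\calG_J}q}_2^2/(m(1-\norm{\gamma T^*T-I_n}))$ using the group incoherence, apply a Bernstein inequality, and union bound over groups (your variance estimate goes through $\norm{\gamma TT^*}\le 1+\delta\le 1/(1-\delta)$ while the paper routes through $\norm{\gamma^{-1}\widetilde{T}T^\dagger}\le 1/(1-\delta)$, but these are interchangeable). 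If anything, your explicit $(|\calG_{j'}|+1)$-dimensional matrix-Bernstein accounting justifies the stated prefactor $2\max_j|\calG_j|\,N$ more cleanly than the paper's own invocation of the vector Bernstein inequality, and the looseness you flag around the centering term (dropping the ``$+1$'' so the denominator reads $\sqrt{s}t/3$) is present in the paper's proof as well.
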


\begin{proof}[Proof of Lemma~\ref{lemma:E3'}]
See Appendix~\ref{sec:pf:lemma:E3'}.
\end{proof}

\subsection{Proof of Lemma~\ref{lemma:uniqueness}}
\label{subsec:pf:lemma:uniqueness}
Our proof essentially adapts the arguments of Chen and Chi \cite[Appendix~B]{chen2014robust}
for the structured low-rank matrix completion problem.
There are two key differences in the two proofs.
First, the $\ell_1$-norm replaces the nuclear norm.
Second, $T$ is a general injective transform, which is not necessarily unitary.
These differences require nontrivial modifications of crucial steps in the proof.
Furthermore, the upper bound on the deviation of $v$ from $\mbox{sgn}(T x)$ in \eqref{eq:dualcert_sgn} is sharpened by optimizing parameters.
This improvement also applies to the previous work \cite{chen2014robust}.

Let $\hat{x} = x + h$ be the minimizer to (\ref{eq:ell1min}). We show that $T h = 0$ in two complementary cases.
Then by the injectivity of $T$, $h = 0$, or equivalently, $\hat{x} = x$.

\noindent\textbf{Case 1:} We first consider the case when $h$ satisfies
\begin{equation}
\label{eq:case1}
\norm{\Pi_J T h}_2 \leq 3 n \norm{T}_{1 \to 2} \norm{T^\dagger}_{2 \to \infty} \norm{\Pi_{[N] \setminus J} T h}_2.
\end{equation}

Since $\mbox{sgn}(\Pi_{[N] \setminus J} T h)$ and $\mbox{sgn}(T x)$ have disjoint supports,
it follows that $\mbox{sgn}(T x) + \mbox{sgn}(\Pi_{[N] \setminus J} T h)$ and $\mbox{sgn}(T x)$ coincide on $J$.
Furthermore, $\norm{\mbox{sgn}(T x) + \mbox{sgn}(\Pi_{[N] \setminus J} T h)}_\infty \leq 1$.
Therefore, $\mbox{sgn}(T x) + \mbox{sgn}(\Pi_{[N] \setminus J} T h)$ is a valid sub-gradient of the $\ell_1$-norm at $T x$.
Then it follows that
\begin{equation}
\label{eq:pf_lemma_uniqueness:ineq1}
\begin{aligned}
\norm{T x + T h}_1
{} & \geq \norm{T x}_1 + \langle \mbox{sgn}(T x) + \mbox{sgn}(\Pi_{[N] \setminus J} T h), ~ Th \rangle \\
{} & = \norm{T x}_1
+ \langle v, T h \rangle
+ \langle \mbox{sgn}(\Pi_{[N] \setminus J} T h), ~ Th \rangle
- \langle v - \mbox{sgn}(T x), Th \rangle.
\end{aligned}
\end{equation}

In fact, $\langle v, T h \rangle  = 0$ as shown below.
The inner product of $T h$ and $v$ is decomposed as
\begin{equation}
\label{eq:pf_lemma_uniqueness:ip}
\langle v, Th \rangle
= \langle v, (I_N - T T^\dagger) Th) \rangle
+ \langle v, (T T^\dagger - T S_{\Omega'}^* S_{\Omega'} T^\dagger) Th \rangle
+ \langle v, T S_{\Omega'}^* S_{\Omega'} T^\dagger Th \rangle.
\end{equation}
Indeed, all three terms in the right-hand side of \eqref{eq:pf_lemma_uniqueness:ip} are 0.
Since $T T^\dagger$ is the orthogonal projection onto the range space of $T$, the first term is 0.
The second term is 0 by the assumption on $v$ in \eqref{eq:dualcert_vanish}.
Since $\hat{x}$ is feasible for \eqref{eq:ell1min}, $S_\Omega \hat{x} = S_\Omega x$.
Thus $S_\Omega h = S_\Omega (\hat{x} - x) = 0$,
i.e. $e_\omega^* h = 0$ for all $\omega \in \Omega$, which also implies $S_{\Omega'} h = 0$.
Then it follows that $S_{\Omega'} T^\dagger T h = S_{\Omega'} h = 0$.
Thus the third term of the right-hand side of \eqref{eq:pf_lemma_uniqueness:ip} is 0.

Since the $\mbox{sgn}(\cdot)$ operator commutes with $\Pi_{[N] \setminus J}$ and $\Pi_{[N] \setminus J}$ is idempotent, we get
\begin{align*}
\langle \mbox{sgn}(\Pi_{[N] \setminus J} T h), ~ T h \rangle
{} & = \langle \Pi_{[N] \setminus J} \mbox{sgn}(\Pi_{[N] \setminus J} T h), ~ T h \rangle \\
{} & = \langle \mbox{sgn}(\Pi_{[N] \setminus J} T h), ~ \Pi_{[N] \setminus J} T h \rangle \\
{} & = \norm{\Pi_{[N] \setminus J} T h}_1.
\end{align*}

Then \eqref{eq:pf_lemma_uniqueness:ineq1} implies
\begin{equation}
\label{eq:pf_lemma_uniqueness:ineq2}
\norm{T x + T h}_1 \geq \norm{T x}_1 + \norm{\Pi_{[N] \setminus J} T h}_1 - \langle v - \mbox{sgn}(T x), Th \rangle.
\end{equation}

We derive an upper bound on the magnitude of the last term in the right-hand side of \eqref{eq:pf_lemma_uniqueness:ineq2} given by
\begin{subequations}
\label{eq:pf_lemma_uniqueness:ineq3}
\begin{align}
| \langle v - \mbox{sgn}(T x), Th \rangle |
{} & = | \langle \Pi_J (v - \mbox{sgn}(T x)), Th \rangle + \langle \Pi_{[N] \setminus J} (v - \mbox{sgn}(T x)), Th \rangle | \nonumber \\
{} & \leq | \langle \Pi_J (v - \mbox{sgn}(T x)), Th \rangle | + | \langle \Pi_{[N] \setminus J} v, Th \rangle | \label{eq:pf_lemma_uniqueness:ineq3a} \\
{} & \leq \norm{\Pi_J (v - \mbox{sgn}(T x))}_2 \norm{\Pi_J Th}_2
+ \norm{\Pi_{[N] \setminus J} v}_\infty \norm{\Pi_{[N] \setminus J} T h}_1 \label{eq:pf_lemma_uniqueness:ineq3b} \\
{} & \leq \frac{1}{7 n \norm{T}_{1 \to 2} \norm{T^\dagger}_{2 \to \infty}} \norm{\Pi_J Th}_2 + \frac{1}{2} \norm{\Pi_{[N] \setminus J} T h}_1, \label{eq:pf_lemma_uniqueness:ineq3c}
\end{align}
\end{subequations}
where \eqref{eq:pf_lemma_uniqueness:ineq3a} holds by the triangle inequality and the fact that $T x$ is supported on $J$;
\eqref{eq:pf_lemma_uniqueness:ineq3b} by H\"older's inequality;
\eqref{eq:pf_lemma_uniqueness:ineq3c} by the assumptions on $v$ in \eqref{eq:dualcert_sgn} and \eqref{eq:dualcert_bnd}.

We continue by applying \eqref{eq:pf_lemma_uniqueness:ineq3} to \eqref{eq:pf_lemma_uniqueness:ineq2} and get
\begin{align*}
\norm{T x + T h}_1
{} & \geq \norm{T x}_1 - \frac{1}{7 n \norm{T}_{1 \to 2} \norm{T^\dagger}_{2 \to \infty}} \norm{\Pi_J Th}_2 + \frac{1}{2} \norm{\Pi_{[N] \setminus J} T h}_1 \\
{} & \geq \norm{T x}_1 - \frac{3}{7} \norm{\Pi_{[N] \setminus J} T h}_2 + \frac{1}{2} \norm{\Pi_{[N] \setminus J} T h}_2 \\
{} & = \norm{T x}_1 + \frac{1}{14} \norm{\Pi_{[N] \setminus J} T h}_2,
\end{align*}
where the second step follows from \eqref{eq:case1}.

Then, $\norm{T \hat{x}}_1 \geq \norm{T x}_1 \geq \norm{T \hat{x}}_1$, which implies $\Pi_{[N] \setminus J} T h = 0$.
By (\ref{eq:case1}), we also have $\Pi_J T h = 0$.
Therefore, it follows that $T h = 0$.

\noindent\textbf{Case 2:} Next, we consider the complementary case when $h$ satisfies
\begin{equation}
\label{eq:case2}
\norm{\Pi_J T h}_2 > 3 n \norm{T}_{1 \to 2} \norm{T^\dagger}_{2 \to \infty} \norm{\Pi_{[N] \setminus J} T h}_2.
\end{equation}

In the previous case, we have shown that $S_\Omega h = 0$. Thus $S_\Omega T^\dagger T h = 0$.
Then together with $(I_N - T T^\dagger) T = 0$, we get
\begin{align*}
\left( \frac{n}{m} T S_\Omega^* S_\Omega T^\dagger + I_N - T T^\dagger \right) T h = 0,
\end{align*}
which implies
\begin{equation}
\label{eq:pf_lemma_uniqueness:ineq4}
\begin{aligned}
0 {} & \geq \left\langle \Pi_J T h, \left( \frac{n}{m} T S_\Omega^* S_\Omega T^\dagger + I_N - T T^\dagger \right) T h \right\rangle \\
{} & = \left\langle \Pi_J T h, \left( \frac{n}{m} T S_\Omega^* S_\Omega T^\dagger + I_N - T T^\dagger \right) \Pi_J T h \right\rangle \\
{} & \quad + \left\langle \Pi_J T h, \left( \frac{n}{m} T S_\Omega^* S_\Omega T^\dagger + I_N - T T^\dagger \right) \Pi_{[N] \setminus J} T h \right\rangle.
\end{aligned}
\end{equation}

The magnitude of the first term in the right-hand side of \eqref{eq:pf_lemma_uniqueness:ineq4} is lower-bounded by
\begin{equation}
\label{eq:pf_lemma_uniqueness:lb}
\begin{aligned}
{} & \left| \left\langle \Pi_J T h, \left( \frac{n}{m} T S_\Omega^* S_\Omega T^\dagger + I_N - T T^\dagger \right) \Pi_J T h \right\rangle \right| \\
{} & = \left| \langle \Pi_J T h, \Pi_J T h \rangle \right|
- \left| \left\langle \Pi_J T h, \left( T T^\dagger - \frac{n}{m} T S_\Omega^* S_\Omega T^\dagger \right) \Pi_J T h \right\rangle \right| \\
{} & \geq \norm{\Pi_J T h}_2^2 - \left\|\Pi_J T T^\dagger \Pi_J - \frac{n}{m} \Pi_J T S_\Omega^* S_\Omega T^\dagger \Pi_J\right\| \norm{\Pi_J T h}_2^2 \\
{} & \geq \frac{1}{2} \norm{\Pi_J T h}_2^2,
\end{aligned}
\end{equation}
where the last step follows from the assumption in \eqref{eq:local_isometry_rank_deficient}.

Next, we derive an upper bound on the second term in the right-hand side of \eqref{eq:pf_lemma_uniqueness:ineq4}.
To this end, we first computes the operator norm of $T e_k e_k^* T^\dagger$ for $k \in [n]$.
In fact, $\norm{T e_k e_k^* T^\dagger} = \norm{T e_k}_2 \norm{\widetilde{T} e_k}_2$, where $\widetilde{T}$ is the adjoint of $T^\dagger$.
Therefore, we only need to compute $\norm{T e_k}_2$ and $\norm{\widetilde{T} e_k}_2$.
First, $\norm{T e_k}_2$ is upper-bounded by
\begin{align*}
\max_{k \in [n]} \norm{T e_k}_2 = \norm{T}_{1 \to 2}.
\end{align*}
On the other hand, $\norm{\widetilde{T} e_k}_2$ is upper-bounded by
\begin{align*}
\max_{k \in [n]} \norm{\widetilde{T} e_k}_2
= \norm{\widetilde{T}}_{1 \to 2} = \norm{T^\dagger}_{2 \to \infty},
\end{align*}
where the last step holds since $\widetilde{T}$ is the adjoint operator of $T^\dagger$
and $\ell_\infty^n$ is the dual space of $\ell_1^n$.
By the above upper bounds on $\norm{T e_k}_2$ and $\norm{\widetilde{T} e_k}_2$, we get
\begin{align*}
\norm{T e_k e_k^* T^\dagger} \leq \norm{T}_{1 \to 2} \norm{T^\dagger}_{2 \to \infty}.
\end{align*}
Then the operator norm of $ \frac{n}{m} T S_\Omega^* S_\Omega T^\dagger + I_N - T T^\dagger $ is upper-bounded by
\begin{equation}
\label{eq:pf_lemma_uniqueness:ineq5}
\begin{aligned}
\left\|  \frac{n}{m} T S_\Omega^* S_\Omega T^\dagger + I_N - T T^\dagger  \right\|
{} & \leq \frac{n}{m} \left( \norm{T e_{\omega_1} e_{\omega_1}^* T^\dagger + I_N - T T^\dagger} + \sum_{j=2}^m \norm{T e_{\omega_j} e_{\omega_j}^* T^\dagger}_2 \right) \\
{} & \leq \frac{n}{m} \left( \max(\norm{T e_{\omega_1} e_{\omega_1}^* T^\dagger}, \norm{I_N - T T^\dagger}) + \sum_{j=2}^m \norm{T e_{\omega_j} e_{\omega_j}^* T^\dagger}_2 \right) \\
{} & \leq n \norm{T}_{1 \to 2} \norm{T^\dagger}_{2 \to \infty},
\end{aligned}
\end{equation}
where the second step follows since
$(T e_{\omega_1} e_{\omega_1}^* T^\dagger)^*(I_N - T T^\dagger) = 0$ and
$(I_N - T T^\dagger) (T e_{\omega_1} e_{\omega_1}^* T^\dagger)^* = 0$.
The second term in the right-hand side of \eqref{eq:pf_lemma_uniqueness:ineq4} is then upper-bounded by
\begin{equation}
\label{eq:pf_lemma_uniqueness:ub}
\begin{aligned}
{} & \left| \left\langle \Pi_J T h, \left( \frac{n}{m} T S_\Omega^* S_\Omega T^\dagger + I_N - T T^\dagger \right) \Pi_{[N] \setminus J} T h \right\rangle \right| \\
{} & \leq \left\|  \frac{n}{m} T S_\Omega^* S_\Omega T^\dagger + I_N - T T^\dagger  \right\| \norm{\Pi_J T h}_2 \norm{\Pi_{[N] \setminus J} T h}_2 \\
{} & \leq n \norm{T}_{1 \to 2} \norm{T^\dagger}_{2 \to \infty} \norm{\Pi_J T h}_2 \norm{\Pi_{[N] \setminus J} T h}_2,
\end{aligned}
\end{equation}
where the last step follows from \eqref{eq:pf_lemma_uniqueness:ineq5}.

Applying \eqref{eq:pf_lemma_uniqueness:lb} and \eqref{eq:pf_lemma_uniqueness:ub} to \eqref{eq:pf_lemma_uniqueness:ineq4} provides
\begin{align*}
0 {} & \geq \left| \left\langle \Pi_J T h, \left( \frac{n}{m} T S_\Omega^* S_\Omega T^\dagger + I_N - T T^\dagger \right) \Pi_J T h \right\rangle \right| \\
{} & \quad - \left| \left\langle \Pi_J T h, \left( \frac{n}{m} T S_\Omega^* S_\Omega T^\dagger + I_N - T T^\dagger \right) \Pi_{[N] \setminus J} T h \right\rangle \right| \\
{} & \geq \frac{1}{2} \norm{\Pi_J T h}_2^2 - n \norm{T}_{1 \to 2} \norm{T^\dagger}_{2 \to \infty} \norm{\Pi_J T h}_2 \norm{\Pi_{[N] \setminus J} T h}_2 \\
{} & \geq \frac{1}{2} \norm{\Pi_J T h}_2^2 - \frac{1}{3} \norm{\Pi_J T h}_2^2 \\
{} & = \frac{1}{6} \norm{\Pi_J T h}_2^2 \geq 0,
\end{align*}
where the second inequality follows from \eqref{eq:case2}.
Then it is implied that $\Pi_J T h = 0$. By (\ref{eq:case2}), we also have $\Pi_{[N] \setminus J} T h = 0$.
Therefore, $T h = 0$, which completes the proof.

\subsection{Proof of Lemma~\ref{lemma:existence}}
\label{subsec:pf:lemma:existence}
We construct a dual certificate $v$ using a golfing scheme.
Since the isotropy is not satisfied, the original golfing scheme needs to be modified accordingly.
We adopt the version for structured matrix completion \cite{chen2014robust}.

Recall that the elements of $\Omega = \{\omega_1, \ldots, \omega_m\}$ are i.i.d. following the uniform distribution on $[n]$.
We partition the multi-set $\Omega$ into $\ell$ multi-sets so that
$\Omega_1$ consists of the first $m_1$ elements of $\Omega$,
$\Omega_2$ consists of the next $m_2$ elements of $\Omega$, and so on, where $\sum_{i=1}^\ell m_i = m$.
Then, $\Omega_i$s are mutually independent and each $\Omega_i$ consists of i.i.d. random indices.

The version of the golfing scheme in this paper generates a dual certificate $v \in \cz^N$
from intermediate vectors $q_i \in \cz^N$ for $i=0,\ldots,\ell-1$ by
\[
v = \sum_{i=1}^\ell \left(\frac{n}{m_i} \widetilde{T} S_{\Omega_i}^* S_{\Omega_i} T^* + I_N - T T^\dagger \right) q_{i-1},
\]
where $q_i$s are generated as follows:
first, initialize $q_0 = \mbox{sgn}(T x)$;
next, generate $q_i$s recursively by
\[
q_i = \Pi_J \left( \widetilde{T} T^* - \frac{n}{m_i} \widetilde{T} S_{\Omega_i}^* S_{\Omega_i} T^* \right) q_{i-1}, \quad i = 1,\ldots,\ell-1.
\]
Here, $\widetilde{T}$ denotes the adjoint of $T^\dagger$.

Note that
\begin{align*}
{} & (T T^\dagger - T S_{\Omega'}^* S_{\Omega'} T^\dagger)^*
\left( \frac{n}{m_i} \widetilde{T} S_{\Omega_i}^* S_{\Omega_i} T^* + I_N - \widetilde{T} T^* \right) \\
{} & = \frac{n}{m_i} (T T^\dagger - T S_{\Omega'}^* S_{\Omega'} T^\dagger)^* \widetilde{T} S_{\Omega_i}^* S_{\Omega_i} T^*
+ (T T^\dagger - T S_{\Omega'}^* S_{\Omega'} T^\dagger)^* (I_N - \widetilde{T} T^*) \\
{} & = \frac{n}{m_i} (\widetilde{T} T^* - \widetilde{T} S_{\Omega'}^* S_{\Omega'} T^*) \widetilde{T} S_{\Omega_i}^* S_{\Omega_i} T^*
+ (\widetilde{T} T^* - \widetilde{T} S_{\Omega'}^* S_{\Omega'} T^*) (I_N - \widetilde{T} T^*) = 0.
\end{align*}
Thus it follows that $v$ satisfies \eqref{eq:dualcert_vanish}.

The rest of the proof is devoted to show that $v$ satisfies \eqref{eq:dualcert_sgn} and \eqref{eq:dualcert_bnd},
which follows similarly to the proof of \cite[Lemma~3.3]{candes2011probabilistic}.
For completeness, we verify that the arguments in \cite{candes2011probabilistic} are valid in our setting
(with neither isotropy nor self-adjointness).

We show that $q_i$ satisfies the following two properties with high probability for each $i \in [\ell]$:
first,
\begin{equation}
\label{eq:decay_qi}
\norm{q_i}_2 \leq c_i \norm{q_{i-1}}_2
\end{equation}
and, second,
\begin{equation}
\label{eq:bnd_qi_infty}
\left\| \Pi_{[N] \setminus J} \left(\frac{n}{m_i} \widetilde{T} S_{\Omega_i}^* S_{\Omega_i} T^* + I_N - \widetilde{T} T^* \right) q_{i-1} \right\|_\infty \leq t_i \norm{q_{i-1}}_2.
\end{equation}

Let $p_1(i)$ (resp. $p_2(i)$) denote the probability that the inequality in \eqref{eq:decay_qi} (resp. \eqref{eq:bnd_qi_infty}) does not hold.
Since $q_{i-1}$ is independent of $\Omega_i$, by Lemma~\ref{lemma:E2}, $p_1(i)$ is upper-bounded by
\[
p_1(i) \leq \exp\left( - \frac{1}{4}(c_i \sqrt{m_i(1-\norm{\gamma T^* T - I_n})/(s\mu)}-1)^2 \right).
\]
Therefore, $p_1(i) \leq \frac{1}{\alpha} e^{-\beta}$ if
\begin{equation}
\label{eq:bndmi_p1}
m_i \geq \frac{2+8(\beta + \log \alpha)}{c_i^2} \cdot \frac{\mu s}{1-\norm{\gamma T^* T - I_n}}.
\end{equation}
On the other hand, note that $q_{i-1} = \Pi_J q_{i-1}$. Then it follows that
\begin{align*}
\left\| \Pi_{[N] \setminus J} \left(\frac{n}{m_i} \widetilde{T} S_{\Omega_i}^* S_{\Omega_i} T^* + I_N - \widetilde{T} T^* \right) q_{i-1} \right\|_\infty
{} & = \left\| \Pi_{[N] \setminus J} \left(\frac{n}{m_i} \widetilde{T} S_{\Omega_i}^* S_{\Omega_i} T^* + I_N - \widetilde{T} T^* \right) \Pi_J q_{i-1} \right\|_\infty \\
{} & = \left\| \Pi_{[N] \setminus J} \left(\frac{n}{m_i} \widetilde{T} S_{\Omega_i}^* S_{\Omega_i} T^* - \widetilde{T} T^* \right) \Pi_J q_{i-1} \right\|_\infty.
\end{align*}
Again, we use the fact that $q_{i-1}$ is independent of $\Omega_i$.
Then, by Lemma~\ref{lemma:E3}, $p_2(i)$ is upper-bounded by
\[
p_2(i) \leq 2N \exp\left( - \frac{3t_i^2 m_i}{6\mu/(1-\norm{\gamma T^* T - I_n}) + 2 \mu \sqrt{s} t_i} \right).
\]
Therefore, $p_2(i) \leq \frac{1}{\alpha} e^{-\beta}$ if
\begin{equation}
\label{eq:bndmi_p2}
m_i \geq \left( \frac{2}{t_i^2 s (1-\norm{\gamma T^* T - I_n})} + \frac{2}{3t_i\sqrt{s}} \right) (\beta + \log(2\alpha) + \log N) s\mu.
\end{equation}

We set the parameters similarly to the proof of \cite[Lemma~3.3]{candes2011probabilistic} as follows:
\begin{equation}
\label{eq:param}
\begin{aligned}
\ell {} & = \left\lceil \frac{\log_2 s}{2} + \log_2 n
+ \log_2 \left( \norm{T}_{1 \to 2} \norm{T^\dagger}_{2 \to \infty} \right) \right\rceil + 3, \\
c_i {} & =
\begin{cases}
1/\lceil 2 \sqrt{\log N} ~ \rceil & i=1,2,3, \\
1/2 & 3 \leq i \leq \ell,
\end{cases} \\
t_i {} & =
\begin{cases}
1/\lceil 4 \sqrt{s} ~\rceil & i=1,2,3, \\
\log N / \lceil 4 \sqrt{s} ~\rceil & 3 \leq i \leq \ell,
\end{cases}\\
m_i {} & = \lceil 10(1+\log 6+\beta) \mu s c_i^{-2} \rceil, \quad \forall i. \\
\end{aligned}
\end{equation}
By the construction of $v$, we have
\begin{align*}
\Pi_J v {} & = \sum_{i=1}^\ell \Pi_J \left(\frac{n}{m_i} \widetilde{T} S_{\Omega_i}^* S_{\Omega_i} T^* + I_N - \widetilde{T} T^* \right) q_{i-1} \\
{} & = \sum_{i=1}^\ell \left[ \Pi_J q_{i-1} - \Pi_J \left(\widetilde{T} T^* - \frac{n}{m_i} \widetilde{T} S_{\Omega_i}^* S_{\Omega_i} T^* \right) q_{i-1} \right] \\
{} & = \sum_{i=1}^\ell \left( q_{i-1} - q_i \right)
= q_0 - q_\ell \\
{} & = \mbox{sgn}(T x) - q_\ell = \Pi_J \mbox{sgn}(T x) - q_\ell.
\end{align*}
Therefore, \eqref{eq:decay_qi} implies
\[
\norm{\Pi_J (v - \mbox{sgn}(T x))}_2
= \norm{q_\ell}_2
\leq \prod_{i=1}^\ell c_i \norm{\mbox{sgn}(T x)}_2
\leq \frac{\sqrt{s}}{2^\ell \log N}.
\]




Next, by \eqref{eq:decay_qi} and \eqref{eq:bnd_qi_infty}, we have
\begin{equation}
\label{eq:dualcert_bnd_proof}
\begin{aligned}
\norm{\Pi_{[N] \setminus J} v}_\infty
{} & \leq \sum_{i=1}^\ell \left\| \Pi_{[N] \setminus J} \left(\frac{n}{m_i} \widetilde{T} S_{\Omega_i}^* S_{\Omega_i} T^* + I_N - \widetilde{T} T^* \right) q_{i-1} \right\|_\infty \\
{} & \leq \sum_{i=1}^\ell t_i \norm{q_{i-1}}_2 \\
{} & \leq \sqrt{s} \left(t_1 + \sum_{i=2}^\ell t_i \prod_{j=1}^{i-1} c_j\right).
\end{aligned}
\end{equation}
By setting parameters as in \eqref{eq:param}, the right-hand side in \eqref{eq:dualcert_bnd_proof} is further upper-bounded by
\[
\frac{1}{4} \left( 1 + \frac{1}{2\sqrt{\log N}} + \frac{\log N}{4 \log N} + \cdots \right) < \frac{1}{2}.
\]
Then, we have shown that $v$ satisfies \eqref{eq:dualcert_bnd}.

It remains to show that \eqref{eq:decay_qi} and \eqref{eq:bnd_qi_infty} hold with the desired probability.
From \eqref{eq:bndmi_p1} and \eqref{eq:bndmi_p2}, it follows that
\[
p_j(i) \leq \frac{1}{6} e^{-\beta}, \quad \forall i \in [\ell], ~ \forall j=1,2.
\]
In particular, we have
\[
\sum_{j=1}^2 \sum_{i=1}^3 p_j(i) \leq e^{-\beta}.
\]
This implies that the first three $\Omega_i$s satisfy \eqref{eq:decay_qi} and \eqref{eq:bnd_qi_infty} except with probability $e^{-\beta}$.

On the other hand, we also have
\[
p_1(i) + p_2(i) < \frac{1}{3}, \quad \forall i = 4,\ldots,\ell.
\]
In other words, the probability that $\Omega_i$ satisfies \eqref{eq:decay_qi} and \eqref{eq:bnd_qi_infty} is at least $2/3$.
The union bound doesn't show that $\Omega_i$ satisfies \eqref{eq:decay_qi} and \eqref{eq:bnd_qi_infty} for all $i \geq 4$ with the desired probability.

As in the proof of \cite[Lemma~3.3]{candes2011probabilistic},
we adopt the oversampling and refinement strategy by Gross \cite{gross2011recovering}.
Recall that each random index set $\Omega_i$ consists of i.i.d. random indices following the uniform distribution on $[n]$.
Thus $\Omega_i$s are mutually independent.
In particular, we set $\Omega_i$s are of the same cardinality in \eqref{eq:param}.
Therefore, $\Omega_i$s are i.i.d. random variables.
We generate a few extra copies of $\Omega_i$ for $i = \ell+1,\ldots,\ell'+3$ where $\ell' = 3(\ell-3)$.
Then, by Hoeffding's inequality, there exist at least $\ell-3$ $\Omega_i$s for $i \geq 4$
that satisfy \eqref{eq:decay_qi} and \eqref{eq:bnd_qi_infty} with probability $1 - 1/n$.
(We refer more technical details for this step to \cite[Section~III.B]{candes2011probabilistic}.)
Therefore, there are $\ell$ good $\Omega_i$s satisfying \eqref{eq:decay_qi} and \eqref{eq:bnd_qi_infty}
with probability $1 - e^{-\beta} - 1/n$, and the dual certificate $v$ is constructed from these good $\Omega_i$s.
The total number of samples for this construction requires
\[
m \geq \frac{40(1+\log 6 +\beta)\mu s (3\log N + 3\ell)}{1 - \norm{\gamma T^* T - I_n}},
\]
which can be simplified as
\[
m \geq \frac{C(1+\beta) \mu s}{1 - \norm{\gamma T^* T - I_n}} \left[ \log N
+ \log \left( \norm{T}_{1 \to 2} \norm{T^\dagger}_{2 \to \infty} \right) \right]
\]
for a numerical constant $C$.

\section{Proofs for Theorems~\ref{thm:stability1} and \ref{thm:stability2}}
\label{sec:pf_main_result_noisy}

In this section, we prove Theorems~\ref{thm:stability1} and \ref{thm:stability2},
which provide sufficient conditions for stable recovery of sparse signals in a transform domain from noisy data.

\subsection{Proof of Theorem~\ref{thm:stability1}}
\label{subsec:pf:thm:stability1}
Let $h := \hat{x} - x$. Since $\hat{x}$ is the minimizer to \eqref{eq:ell1min_noisy}, it follows that
\begin{equation}
\label{eq:pf_thm_stability:bnd1}
\begin{aligned}
\norm{T x}_1
{} & \geq \norm{T \hat{x}}_1 = \norm{T x + T h}_1 \\
{} & \geq \norm{T x + T (I_n - S_{\Omega'}^* S_{\Omega'}) h}_1 - \norm{T S_{\Omega'}^* S_{\Omega'} h}_1.
\end{aligned}
\end{equation}

Since $x$ and $\hat{x}$ are feasible for \eqref{eq:ell1min_noisy}, it follows that
\begin{equation}
\label{eq:noisy_feas}
\begin{aligned}
{} & \max\left( \norm{T S_{\Omega'}^* S_{\Omega'} (\hat{x} - x^\sharp)}_2, ~ \norm{T S_{\Omega'}^* S_{\Omega'} (x - x^\sharp)}_2 \right) \\
{} & \leq \norm{T} \max\left( \norm{S_{\Omega'} (\hat{x} - x^\sharp)}_2, ~ \norm{S_{\Omega'} (x - x^\sharp)}_2 \right) \leq \epsilon \norm{T}.
\end{aligned}
\end{equation}
Therefore, by the triangle inequality, we have
\begin{equation}
\label{eq:tube_cstr}
\norm{T S_{\Omega'}^* S_{\Omega'} h}_2 = \norm{T S_{\Omega'}^* S_{\Omega'} (\hat{x} - x)}_2
\leq \norm{T S_{\Omega'}^* S_{\Omega'} (\hat{x} - x^\sharp)}_2 + \norm{T S_{\Omega'}^* S_{\Omega'} (x - x^\sharp)}_2 \leq 2 \epsilon \norm{T}.
\end{equation}

The rest of the proof will compute upper bounds on $\norm{T h}_2$ in two complementary cases similarly to the proof of Lemma~\ref{lemma:uniqueness}.
Unlike the noiseless case ($x^\sharp = x$ and $\epsilon = 0$) in Lemma~\ref{eq:bnd_qi_infty},
the condition in \eqref{eq:tube_cstr} does not necessarily imply $S_\Omega h = 0$.
In fact, the proof of Lemma~\ref{lemma:uniqueness} critically depends on the condition $S_\Omega h = 0$.
Essentially, we replace $h$ by $(I_n - S_{\Omega'}^* S_{\Omega'}) h$. Then it follows that
\begin{equation}
\label{eq:noisy_vanish}
S_{\Omega'} (I_n - S_{\Omega'}^* S_{\Omega'}) h = 0.
\end{equation}

\noindent\textbf{Case 1:} We first consider the case when $(I_n - S_{\Omega'}^* S_{\Omega'}) h$ satisfies
\begin{equation}
\label{eq:case1noisy}
\norm{\Pi_J T (I_n - S_{\Omega'}^* S_{\Omega'}) h}_2
\leq 3 n \norm{T}_{1 \to 2} \norm{T^\dagger}_{2 \to \infty} \norm{\Pi_{[N] \setminus J} T (I_n - S_{\Omega'}^* S_{\Omega'}) h}_2,
\end{equation}
where $J$ denotes the support of $T x$.

Under \eqref{eq:case1noisy}, similarly to the proof of Lemma~\ref{lemma:uniqueness}, we have
\begin{equation}
\label{eq:pf_thm_stability:bnd2}
\norm{T x + T (I_n - S_{\Omega'}^* S_{\Omega'}) h}_1
\geq \norm{T x}_1 + \frac{1}{14} \norm{\Pi_{[N] \setminus J} T (I_n - S_{\Omega'}^* S_{\Omega'}) h}_2.
\end{equation}
Combining \eqref{eq:pf_thm_stability:bnd1} and \eqref{eq:pf_thm_stability:bnd2} provides
\begin{equation}
\label{eq:pf_thm_stability:bnd3}
\norm{\Pi_{[N] \setminus J} T (I_n - S_{\Omega'}^* S_{\Omega'}) h}_2
\leq 14 \norm{T S_{\Omega'}^* S_{\Omega'} h}_1
\leq 14 \sqrt{N} \norm{T S_{\Omega'}^* S_{\Omega'} h}_2 \leq 28\sqrt{N} \epsilon \norm{T}.
\end{equation}
On the other hand, \eqref{eq:case1noisy} implies
\begin{equation}
\label{eq:pf_thm_stability:bnd4}
\begin{aligned}
\norm{T (I_n - S_{\Omega'}^* S_{\Omega'}) h}_2
{} & \leq \norm{\Pi_{[N] \setminus J} T (I_n - S_{\Omega'}^* S_{\Omega'}) h}_2 + \norm{\Pi_J T (I_n - S_{\Omega'}^* S_{\Omega'}) h}_2 \\
{} & \leq (1 + 3 n \norm{T}_{1 \to 2} \norm{T^\dagger}_{2 \to \infty}) \norm{\Pi_{[N] \setminus J} T (I_n - S_{\Omega'}^* S_{\Omega'}) h}_2.
\end{aligned}
\end{equation}
Therefore, combining \cref{eq:tube_cstr,eq:pf_thm_stability:bnd3,eq:pf_thm_stability:bnd4} provides
\begin{equation}
\label{eq:pf_thm_stability:bnd5}
\norm{T h}_2 \leq \left\{ 2 + 28 \sqrt{N} \left( 3 n \norm{T}_{1 \to 2} \norm{T^\dagger}_{2 \to \infty} + 1 \right) \right\} \epsilon \norm{T}.
\end{equation}

\noindent\textbf{Case 2:} Next, we consider the complementary case when $(I_n - S_{\Omega'}^* S_{\Omega'}) h$ satisfies
\begin{equation}
\label{eq:case2noisy}
\norm{\Pi_J T (I_n - S_{\Omega'}^* S_{\Omega'}) h}_2
> 3 n \norm{T}_{1 \to 2} \norm{T^\dagger}_{2 \to \infty} \norm{\Pi_{[N] \setminus J} T (I_n - S_{\Omega'}^* S_{\Omega'}) h}_2.
\end{equation}

Again, similarly to the proof of Lemma~\ref{lemma:uniqueness}, we get $(I_n - S_{\Omega'}^* S_{\Omega'}) h = 0$.
Therefore,
\[
\norm{T h}_2 \leq \norm{T (I_n - S_{\Omega'}^* S_{\Omega'}) h}_2 + \norm{T S_{\Omega'}^* S_{\Omega'} T h}_2 \leq 2 \epsilon \norm{T},
\]
which is smaller than the upper bound on $\norm{T h}_2$ in the previous case.

By applying $\norm{h}_2 \leq \norm{T h}_2 / \sigma_{\min}(T)$ to \eqref{eq:pf_thm_stability:bnd5},
we obtain the desired upper bound on $\norm{h}_2$.
This completes the proof.

\subsection{Proof of Theorem~\ref{thm:stability2}}
\label{subsec:pf:thm:stability2}
By Theorem~\ref{thm:rboplike}, the condition in \eqref{eq:noisy_samp_comp} implies that
$\frac{n}{m} T S_\Omega^* S_\Omega T^\dagger$ satisfies the condition in \eqref{eq:rboplike} with constant $\delta = 1/3$.
Note that the estimates in Lemmas~\cref{lemma:E1,lemma:E2,lemma:E3} are implied by \eqref{eq:rboplike}.
Therefore, the rest of the proof will be identical to that of Theorem~\ref{thm:stability2} except that
we compute a tighter upper bound on $\norm{\Pi_J T (I_n - S_{\Omega'}^* S_{\Omega'}) h}_2$ as follows.

First, we decompose $\Pi_J T (I_n - S_{\Omega'}^* S_{\Omega'}) h$ as
\begin{equation}
\label{eq:pf_thm_stability2:decomp}
\begin{aligned}
\Pi_J T (I_n - S_{\Omega'}^* S_{\Omega'}) h
{} & = \Pi_J \left(T T^\dagger - \frac{n}{m} T S_\Omega^* S_\Omega T^\dagger\right) T h \\
{} & \quad + \Pi_J \frac{n}{m} T (S_\Omega^* S_\Omega - S_{\Omega'}^* S_{\Omega'}) h \\
{} & \quad + \Pi_J \left( \frac{n}{m} - 1 \right) T S_{\Omega'}^* S_{\Omega'} h.
\end{aligned}
\end{equation}

Let $J_1$ correspond to the indices of the $s$-largest coefficients of $\Pi_{[N] \setminus J} T h$;
$J_2$ to the indices of the next $s$-largest coefficients of $\Pi_{[N] \setminus J} T h$, and so on.
Then the $\ell_2$-norm first term in the right-hand side of \eqref{eq:pf_thm_stability2:decomp} is upper-bounded by
\begin{equation}
\label{eq:pf_thm_stability2:bnd1}
\begin{aligned}
{} & \left\|\Pi_J \left(T T^\dagger - \frac{n}{m} T S_\Omega^* S_\Omega T^\dagger\right) T h\right\|_2 \\
{} & \leq \left\|\Pi_J \left(T T^\dagger - \frac{n}{m} T S_\Omega^* S_\Omega T^\dagger\right) \Pi_{J \cup J_1} T h\right\|_2 \\
{} & \quad + \sum_{i \geq 2} \left\|\Pi_J \left(T T^\dagger - \frac{n}{m} T S_\Omega^* S_\Omega T^\dagger\right) \Pi_{J_i} T h\right\|_2.
\end{aligned}
\end{equation}
By \eqref{eq:rboplike}, the first term in the right-hand side of \eqref{eq:pf_thm_stability2:bnd1} is upper-bounded by
\begin{equation}
\label{eq:pf_thm_stability2:bnd2}
\left\|\Pi_J \left(T T^\dagger - \frac{n}{m} T S_\Omega^* S_\Omega T^\dagger\right) \Pi_{J \cup J_1} T h\right\|_2
\leq \frac{1}{3} \norm{\Pi_{J \cup J_1} T h}_2
\leq \frac{1}{3} \norm{T h}_2.
\end{equation}

By \eqref{eq:rboplike}, the second term in the right-hand side of \eqref{eq:pf_thm_stability2:bnd1} is upper-bounded by
\begin{equation}
\label{eq:pf_thm_stability2:bnd3}
\begin{aligned}
{} & \sum_{i \geq 2} \left\|\Pi_J \left(T T^\dagger - \frac{n}{m} T S_\Omega^* S_\Omega T^\dagger\right) \Pi_{J_i} T h\right\|_2 \\
{} & \leq \frac{1}{3} \sum_{i \geq 2} \norm{\Pi_{J_i} T h}_2
\leq \frac{1}{3\sqrt{s}} \sum_{i \geq 1} \norm{\Pi_{J_i} T h}_1
= \frac{1}{3\sqrt{s}} \norm{\Pi_{[N] \setminus J} T h}_1.
\end{aligned}
\end{equation}
Since $\hat{x}$ is the minimizer to \eqref{eq:ell1min_noisy}, we have the so called ``cone'' constraint:
\[
\norm{T x}_1 \geq \norm{T \hat{x}}_1 \geq \norm{Tx + T h}_1 \geq \norm{\Pi_J T x}_1 - \norm{\Pi_J T h}_1 + \norm{\Pi_{[N] \setminus J} T h}_1 - \norm{\Pi_{[N] \setminus J} T x}_1,
\]
which implies
\[
\norm{\Pi_{[N] \setminus J} T h}_1 \leq 2 \norm{\Pi_{[N] \setminus J} T x}_1 + \norm{\Pi_J T h}_1 = \norm{\Pi_J T h}_1,
\]
where the last step follows since $T x$ is supported on $J$.
Therefore, \eqref{eq:pf_thm_stability2:bnd3} implies
\begin{equation}
\label{eq:pf_thm_stability2:bnd4}
\sum_{i \geq 2} \left\|\Pi_J \left(T T^\dagger - \frac{n}{m} T S_\Omega^* S_\Omega T^\dagger\right) \Pi_{J_i} T h\right\|_2
\leq \frac{1}{3\sqrt{s}} \norm{\Pi_{[N] \setminus J} T h}_1
\leq \frac{1}{3\sqrt{s}} \norm{\Pi_J T h}_1
\leq \frac{1}{3} \norm{\Pi_J T h}_2.
\end{equation}

Plugging \eqref{eq:pf_thm_stability2:bnd2} and \eqref{eq:pf_thm_stability2:bnd4} to \eqref{eq:pf_thm_stability2:bnd1} provides
\begin{equation}
\label{eq:pf_thm_stability2:bnd5}
\left\|\Pi_J \left(T T^\dagger - \frac{n}{m} T S_\Omega^* S_\Omega T^\dagger\right) T h\right\|_2
\leq \frac{2}{3} \norm{T h}_2.
\end{equation}

Next, we derive an upper bound on the $\ell_2$-norm of the second term in the right-hand side of \eqref{eq:pf_thm_stability2:decomp}.
Since $\Omega'$ consists of distinct elements in $\Omega$, it follows that
\[
S_{\Omega'}^* S_{\Omega'} S_\Omega^* S_\Omega
= \sum_{k \in \Omega} S_{\Omega'}^* S_{\Omega'} e_k e_k^*
= \sum_{k \in \Omega} e_k e_k^*
= S_\Omega^* S_\Omega.
\]
Furthermore, $S_{\Omega'}^* S_{\Omega'}$ is idempotent.
Therefore, we obtain the following identity:
\begin{equation}
\label{eq:pf_thm_stability2:id1}
S_\Omega^* S_\Omega - S_{\Omega'}^* S_{\Omega'} = (S_\Omega^* S_\Omega - S_{\Omega'}^* S_{\Omega'}) S_{\Omega'}^* S_{\Omega'}.
\end{equation}
From \eqref{eq:pf_thm_stability2:id1}, we get
\begin{equation}
\label{eq:pf_thm_stability2:bnd6}
\begin{aligned}
\norm{T (S_\Omega^* S_\Omega - S_{\Omega'}^* S_{\Omega'}) h}_2
{} & = \norm{T (S_\Omega^* S_\Omega - S_{\Omega'}^* S_{\Omega'}) S_{\Omega'}^* S_{\Omega'} h}_2 \\
{} & = \norm{T (S_\Omega^* S_\Omega - S_{\Omega'}^* S_{\Omega'}) T^\dagger T S_{\Omega'}^* S_{\Omega'} h}_2 \\
{} & \leq \norm{T (S_\Omega^* S_\Omega - S_{\Omega'}^* S_{\Omega'}) T^\dagger} \norm{T S_{\Omega'}^* S_{\Omega'} h}_2 \\
{} & \leq \max_k \norm{T e_k e_k^* T^\dagger} (|\Omega|-|\Omega'|) \norm{T S_{\Omega'}^* S_{\Omega'} h}_2 \\
{} & \leq \norm{T}_{1 \to 2} \norm{T^\dagger}_{2 \to \infty} (|\Omega|-|\Omega'|) \norm{T S_{\Omega'}^* S_{\Omega'} h}_2.
\end{aligned}
\end{equation}
Then the $\ell_2$-norm of the second term in the right-hand side of \eqref{eq:pf_thm_stability2:decomp} is upper-bounded by
\begin{equation}
\label{eq:pf_thm_stability2:bnd7}
\begin{aligned}
\left\|\Pi_J \frac{n}{m} T (S_\Omega^* S_\Omega - S_{\Omega'}^* S_{\Omega'}) h\right\|_2
{} & \leq \frac{n}{m} \norm{T (S_\Omega^* S_\Omega - S_{\Omega'}^* S_{\Omega'}) h}_2 \\
{} & \leq \frac{n}{m} \norm{T}_{1 \to 2} \norm{T^\dagger}_{2 \to \infty} (|\Omega|-|\Omega'|) \norm{T S_{\Omega'}^* S_{\Omega'} h}_2.
\end{aligned}
\end{equation}

By applying \eqref{eq:pf_thm_stability2:bnd5} and \eqref{eq:pf_thm_stability2:bnd7} to \eqref{eq:pf_thm_stability2:decomp},
then combining the result with \eqref{eq:tube_cstr} and \eqref{eq:pf_thm_stability:bnd3}, we get
\begin{align*}
\norm{T h}_2
\leq 2 \epsilon \norm{T} + 28 \sqrt{N} \epsilon \norm{T} + \frac{2}{3} \norm{T h}_2
+ 2 \frac{n}{m} \norm{T}_{1 \to 2} \norm{T^\dagger}_{2 \to \infty} (|\Omega|-|\Omega'|) \epsilon \norm{T}
+ 2 \left( \frac{n}{m} - 1 \right) \epsilon \norm{T},
\end{align*}
which implies
\[
\norm{T h}_2 \leq 6 \left[ 14\sqrt{N} + \frac{n}{m} \left( \norm{T}_{1 \to 2} \norm{T^\dagger}_{2 \to \infty} (|\Omega|-|\Omega'|) + 1 \right) \right] \epsilon \norm{T}.
\]

In the case when $T^* T = I_n$,
$T e_k e_k^* T^\dagger$s correspond to orthogonal projections onto mutually orthogonal one-dimensional subspaces.
Recall that the summands in $T S_\Omega^* S_\Omega T^\dagger$ repeat at most $R$ times.
Then the summands in $T S_\Omega^* S_\Omega T^\dagger - T S_{\Omega'}^* S_{\Omega'} T^\dagger$ repeat at most $R - 1$ times.
Therefore, we get a sharper estimate than that in \eqref{eq:pf_thm_stability2:bnd6} given by
\begin{equation}
\label{eq:pf_thm_stability2:bnd8}
\begin{aligned}
\norm{T (S_\Omega^* S_\Omega - S_{\Omega'}^* S_{\Omega'}) h}_2
{} & \leq \norm{T (S_\Omega^* S_\Omega - S_{\Omega'}^* S_{\Omega'}) T^\dagger} \norm{T S_{\Omega'}^* S_{\Omega'} h}_2 \\
{} & \leq (R - 1) \norm{T S_{\Omega'}^* S_{\Omega'} h}_2.
\end{aligned}
\end{equation}
This completes the proof for the second claim.


\section{Numerical Results}
\label{sec:numres}
In this section, we conduct numerical experiments of solving \eqref{eq:ell1min} and \eqref{eq:mixedmin}, with partial Fourier measurements and several different sparsifying transforms. We compare different sampling schemes in Monte-Carlo experiments, and observe that the variable density sampling schemes proposed in \eqref{eq:variable_density} and \eqref{eq:variable_density_gs} yield superior recovery results in terms of success rate.

The optimization problems \eqref{eq:ell1min} and \eqref{eq:mixedmin} are solved using Alternating Direction Method of Multipliers (ADMM) \cite{boyd2011distributed}. For example, for $T=\Phi\Psi^*\in\cz^{n\times n}$, \eqref{eq:ell1min} is rewritten in the following form with two linear constraints, and solved by ADMM with two additional terms in the augmented Lagrangian.
\[
\begin{array}{ll}
\displaystyle \minimize_{y \in \cz^n} & \norm{y}_1 \\
\mathrm{subject~to} & T g = y,\\
                    & S_\Omega g = S_\Omega x.
\end{array}
\]
In all experiments, the ADMM algorithm runs for 1,000 iterations.

\subsection{1D Signals}
For 1D signals, we use the DFT $\Psi\in\cz^{n\times n}$, and two sparsifying transforms $\Psi_{\mathrm{TV}, n}, \Psi_{\mathrm{W}, n, \ell}\in \cz^{n\times n}$, denoting the finite difference operator and the discrete Haar wavelet \cite{mallat2008wavelet} at level $\ell$, respectively.
For the finite difference operator $\Psi_{\mathrm{TV}, n}$, we use the two-step scheme in Section \ref{sec:TV}. In Step 2), we either use the uniform density according to Corollary \ref{cor:circulant1} (see Fig. \ref{fig:1d_a}), or the distribution \eqref{eq:variable_density} restricted to $\{2,3,\dots, n\}$ computed from the other transform $\Psi_{\mathrm{W}, n, \ell}$ (see Fig. \ref{fig:1d_b}).
For the discrete Haar wavelet $\Psi_{\mathrm{W}, n, \ell}$, we test two distributions on $[n]$ -- the uniform distribution (see Fig. \ref{fig:1d_a}) and the variable density distribution \eqref{eq:variable_density} (see Fig. \ref{fig:1d_b}).

In the numerical experiments, we choose $n = 512$, and synthesize signals $f$ at sparsity levels $s = 16, 32, 48, \dots, 128$ with respect to the above two sparsifying transforms. We use the Haar wavelet at level $\ell = 6$.
To make the experiments more realistic, we synthesize the sparse signals by thresholding a real-world signal (see Fig. \ref{fig:1d_sig}) in the transform domains.
For the discrete Haar wavelet, we analyze the signal using $\Psi_{\mathrm{W}, n, \ell}$, zero out small coefficients to achieve a certain sparsity level, and synthesize the signal using $\Psi_{\mathrm{W}, n, \ell}^*$.
Due to the non-injectivity of the finite difference operator, we replace the last row of $\Psi_{\mathrm{TV}, n}$ by $[1,1,\dots, 1]^\top \in\cz^n$ in the signal analysis and synthesis steps.

For every sampling scheme, we repeat the experiment for $m = 16, 32, 48, \dots, 512$. We run 50 Monte-Carlo experiments for each setting. An instance is declared a success if the following reconstruction signal-to-noise ratio (RSNR) of the solution $\hat{g}$ to \eqref{eq:ell1min} exceeds $60$dB.
\[
\mathrm{RSNR} = -20\log_{10}\left(\frac{\norm{\hat{g}-x}_2}{\norm{x}_2}\right).
\]
The success rate is computed for every pair $(s, m)$ and shown in Fig. \ref{fig:1d_c} -- \ref{fig:1d_f}.

For signal recovery using the Haar wavelet, the variable density sampling scheme \eqref{eq:variable_density} yields higher success rate than the uniform density sampling scheme (see Fig. \ref{fig:1d_c} and \ref{fig:1d_d}). For 1D TV, the density \eqref{eq:variable_density} computed using 1D TV, which is uniform, yields higher success rate than that computed from the Haar wavelet (see Fig. \ref{fig:1d_e} and \ref{fig:1d_f}). These experiments show that one can recover sparse signals more successfully using the sampling density \eqref{eq:variable_density} computed from the specific sparsifying transform used in the recovery, which is adaptive to local incoherence between the transform and the measurement.
It is commonly believed among practitioners that sampling more densely in the low frequency region, where the energy of a natural signal is concentrated, always provides a better reconstruction. However, that turns out not be the case when one cares about perfect recovery of exactly sparse signals in a certain transform domain from noise-free measurements.

\begin{figure}[ht]
\centering
\subfloat[\label{fig:1d_sig}]{
\input{1d_sig.tex}
}\quad
\subfloat[\label{fig:2d_sig}]{
\includegraphics[width=0.25\columnwidth]{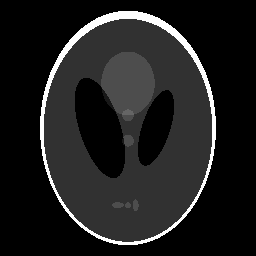}
}
\caption{Signals used in the experiments. (a) A line in an MRI brain image, used to synthesize 1D sparse signals. We only plot the absolute value of this complex signal. (b) The modified Shepp-Logan phantom ($256\times 256$), used for 2D signal recovery.}
\label{fig:sig}
\end{figure}

\begin{figure}[ht]
  \centering
  \subfloat[\label{fig:1d_a}]{
%
%
\begin{tikzpicture}[scale=0.8]

\begin{axis}[%
width=2in,
height=1in,
at={(0in,0in)},
scale only axis,
xmin=0,
xmax=512,
xtick={\empty},
ymin=0,
ymax=0.03,
ytick={\empty},
axis background/.style={fill=white}
]
\addplot [color=blue,solid,line width=2.0pt,forget plot]
  table[row sep=crcr]{%
1	0.001953125\\
2	0.001953125\\
3	0.001953125\\
4	0.001953125\\
5	0.001953125\\
6	0.001953125\\
7	0.001953125\\
8	0.001953125\\
9	0.001953125\\
10	0.001953125\\
11	0.001953125\\
12	0.001953125\\
13	0.001953125\\
14	0.001953125\\
15	0.001953125\\
16	0.001953125\\
17	0.001953125\\
18	0.001953125\\
19	0.001953125\\
20	0.001953125\\
21	0.001953125\\
22	0.001953125\\
23	0.001953125\\
24	0.001953125\\
25	0.001953125\\
26	0.001953125\\
27	0.001953125\\
28	0.001953125\\
29	0.001953125\\
30	0.001953125\\
31	0.001953125\\
32	0.001953125\\
33	0.001953125\\
34	0.001953125\\
35	0.001953125\\
36	0.001953125\\
37	0.001953125\\
38	0.001953125\\
39	0.001953125\\
40	0.001953125\\
41	0.001953125\\
42	0.001953125\\
43	0.001953125\\
44	0.001953125\\
45	0.001953125\\
46	0.001953125\\
47	0.001953125\\
48	0.001953125\\
49	0.001953125\\
50	0.001953125\\
51	0.001953125\\
52	0.001953125\\
53	0.001953125\\
54	0.001953125\\
55	0.001953125\\
56	0.001953125\\
57	0.001953125\\
58	0.001953125\\
59	0.001953125\\
60	0.001953125\\
61	0.001953125\\
62	0.001953125\\
63	0.001953125\\
64	0.001953125\\
65	0.001953125\\
66	0.001953125\\
67	0.001953125\\
68	0.001953125\\
69	0.001953125\\
70	0.001953125\\
71	0.001953125\\
72	0.001953125\\
73	0.001953125\\
74	0.001953125\\
75	0.001953125\\
76	0.001953125\\
77	0.001953125\\
78	0.001953125\\
79	0.001953125\\
80	0.001953125\\
81	0.001953125\\
82	0.001953125\\
83	0.001953125\\
84	0.001953125\\
85	0.001953125\\
86	0.001953125\\
87	0.001953125\\
88	0.001953125\\
89	0.001953125\\
90	0.001953125\\
91	0.001953125\\
92	0.001953125\\
93	0.001953125\\
94	0.001953125\\
95	0.001953125\\
96	0.001953125\\
97	0.001953125\\
98	0.001953125\\
99	0.001953125\\
100	0.001953125\\
101	0.001953125\\
102	0.001953125\\
103	0.001953125\\
104	0.001953125\\
105	0.001953125\\
106	0.001953125\\
107	0.001953125\\
108	0.001953125\\
109	0.001953125\\
110	0.001953125\\
111	0.001953125\\
112	0.001953125\\
113	0.001953125\\
114	0.001953125\\
115	0.001953125\\
116	0.001953125\\
117	0.001953125\\
118	0.001953125\\
119	0.001953125\\
120	0.001953125\\
121	0.001953125\\
122	0.001953125\\
123	0.001953125\\
124	0.001953125\\
125	0.001953125\\
126	0.001953125\\
127	0.001953125\\
128	0.001953125\\
129	0.001953125\\
130	0.001953125\\
131	0.001953125\\
132	0.001953125\\
133	0.001953125\\
134	0.001953125\\
135	0.001953125\\
136	0.001953125\\
137	0.001953125\\
138	0.001953125\\
139	0.001953125\\
140	0.001953125\\
141	0.001953125\\
142	0.001953125\\
143	0.001953125\\
144	0.001953125\\
145	0.001953125\\
146	0.001953125\\
147	0.001953125\\
148	0.001953125\\
149	0.001953125\\
150	0.001953125\\
151	0.001953125\\
152	0.001953125\\
153	0.001953125\\
154	0.001953125\\
155	0.001953125\\
156	0.001953125\\
157	0.001953125\\
158	0.001953125\\
159	0.001953125\\
160	0.001953125\\
161	0.001953125\\
162	0.001953125\\
163	0.001953125\\
164	0.001953125\\
165	0.001953125\\
166	0.001953125\\
167	0.001953125\\
168	0.001953125\\
169	0.001953125\\
170	0.001953125\\
171	0.001953125\\
172	0.001953125\\
173	0.001953125\\
174	0.001953125\\
175	0.001953125\\
176	0.001953125\\
177	0.001953125\\
178	0.001953125\\
179	0.001953125\\
180	0.001953125\\
181	0.001953125\\
182	0.001953125\\
183	0.001953125\\
184	0.001953125\\
185	0.001953125\\
186	0.001953125\\
187	0.001953125\\
188	0.001953125\\
189	0.001953125\\
190	0.001953125\\
191	0.001953125\\
192	0.001953125\\
193	0.001953125\\
194	0.001953125\\
195	0.001953125\\
196	0.001953125\\
197	0.001953125\\
198	0.001953125\\
199	0.001953125\\
200	0.001953125\\
201	0.001953125\\
202	0.001953125\\
203	0.001953125\\
204	0.001953125\\
205	0.001953125\\
206	0.001953125\\
207	0.001953125\\
208	0.001953125\\
209	0.001953125\\
210	0.001953125\\
211	0.001953125\\
212	0.001953125\\
213	0.001953125\\
214	0.001953125\\
215	0.001953125\\
216	0.001953125\\
217	0.001953125\\
218	0.001953125\\
219	0.001953125\\
220	0.001953125\\
221	0.001953125\\
222	0.001953125\\
223	0.001953125\\
224	0.001953125\\
225	0.001953125\\
226	0.001953125\\
227	0.001953125\\
228	0.001953125\\
229	0.001953125\\
230	0.001953125\\
231	0.001953125\\
232	0.001953125\\
233	0.001953125\\
234	0.001953125\\
235	0.001953125\\
236	0.001953125\\
237	0.001953125\\
238	0.001953125\\
239	0.001953125\\
240	0.001953125\\
241	0.001953125\\
242	0.001953125\\
243	0.001953125\\
244	0.001953125\\
245	0.001953125\\
246	0.001953125\\
247	0.001953125\\
248	0.001953125\\
249	0.001953125\\
250	0.001953125\\
251	0.001953125\\
252	0.001953125\\
253	0.001953125\\
254	0.001953125\\
255	0.001953125\\
256	0.001953125\\
257	0.001953125\\
258	0.001953125\\
259	0.001953125\\
260	0.001953125\\
261	0.001953125\\
262	0.001953125\\
263	0.001953125\\
264	0.001953125\\
265	0.001953125\\
266	0.001953125\\
267	0.001953125\\
268	0.001953125\\
269	0.001953125\\
270	0.001953125\\
271	0.001953125\\
272	0.001953125\\
273	0.001953125\\
274	0.001953125\\
275	0.001953125\\
276	0.001953125\\
277	0.001953125\\
278	0.001953125\\
279	0.001953125\\
280	0.001953125\\
281	0.001953125\\
282	0.001953125\\
283	0.001953125\\
284	0.001953125\\
285	0.001953125\\
286	0.001953125\\
287	0.001953125\\
288	0.001953125\\
289	0.001953125\\
290	0.001953125\\
291	0.001953125\\
292	0.001953125\\
293	0.001953125\\
294	0.001953125\\
295	0.001953125\\
296	0.001953125\\
297	0.001953125\\
298	0.001953125\\
299	0.001953125\\
300	0.001953125\\
301	0.001953125\\
302	0.001953125\\
303	0.001953125\\
304	0.001953125\\
305	0.001953125\\
306	0.001953125\\
307	0.001953125\\
308	0.001953125\\
309	0.001953125\\
310	0.001953125\\
311	0.001953125\\
312	0.001953125\\
313	0.001953125\\
314	0.001953125\\
315	0.001953125\\
316	0.001953125\\
317	0.001953125\\
318	0.001953125\\
319	0.001953125\\
320	0.001953125\\
321	0.001953125\\
322	0.001953125\\
323	0.001953125\\
324	0.001953125\\
325	0.001953125\\
326	0.001953125\\
327	0.001953125\\
328	0.001953125\\
329	0.001953125\\
330	0.001953125\\
331	0.001953125\\
332	0.001953125\\
333	0.001953125\\
334	0.001953125\\
335	0.001953125\\
336	0.001953125\\
337	0.001953125\\
338	0.001953125\\
339	0.001953125\\
340	0.001953125\\
341	0.001953125\\
342	0.001953125\\
343	0.001953125\\
344	0.001953125\\
345	0.001953125\\
346	0.001953125\\
347	0.001953125\\
348	0.001953125\\
349	0.001953125\\
350	0.001953125\\
351	0.001953125\\
352	0.001953125\\
353	0.001953125\\
354	0.001953125\\
355	0.001953125\\
356	0.001953125\\
357	0.001953125\\
358	0.001953125\\
359	0.001953125\\
360	0.001953125\\
361	0.001953125\\
362	0.001953125\\
363	0.001953125\\
364	0.001953125\\
365	0.001953125\\
366	0.001953125\\
367	0.001953125\\
368	0.001953125\\
369	0.001953125\\
370	0.001953125\\
371	0.001953125\\
372	0.001953125\\
373	0.001953125\\
374	0.001953125\\
375	0.001953125\\
376	0.001953125\\
377	0.001953125\\
378	0.001953125\\
379	0.001953125\\
380	0.001953125\\
381	0.001953125\\
382	0.001953125\\
383	0.001953125\\
384	0.001953125\\
385	0.001953125\\
386	0.001953125\\
387	0.001953125\\
388	0.001953125\\
389	0.001953125\\
390	0.001953125\\
391	0.001953125\\
392	0.001953125\\
393	0.001953125\\
394	0.001953125\\
395	0.001953125\\
396	0.001953125\\
397	0.001953125\\
398	0.001953125\\
399	0.001953125\\
400	0.001953125\\
401	0.001953125\\
402	0.001953125\\
403	0.001953125\\
404	0.001953125\\
405	0.001953125\\
406	0.001953125\\
407	0.001953125\\
408	0.001953125\\
409	0.001953125\\
410	0.001953125\\
411	0.001953125\\
412	0.001953125\\
413	0.001953125\\
414	0.001953125\\
415	0.001953125\\
416	0.001953125\\
417	0.001953125\\
418	0.001953125\\
419	0.001953125\\
420	0.001953125\\
421	0.001953125\\
422	0.001953125\\
423	0.001953125\\
424	0.001953125\\
425	0.001953125\\
426	0.001953125\\
427	0.001953125\\
428	0.001953125\\
429	0.001953125\\
430	0.001953125\\
431	0.001953125\\
432	0.001953125\\
433	0.001953125\\
434	0.001953125\\
435	0.001953125\\
436	0.001953125\\
437	0.001953125\\
438	0.001953125\\
439	0.001953125\\
440	0.001953125\\
441	0.001953125\\
442	0.001953125\\
443	0.001953125\\
444	0.001953125\\
445	0.001953125\\
446	0.001953125\\
447	0.001953125\\
448	0.001953125\\
449	0.001953125\\
450	0.001953125\\
451	0.001953125\\
452	0.001953125\\
453	0.001953125\\
454	0.001953125\\
455	0.001953125\\
456	0.001953125\\
457	0.001953125\\
458	0.001953125\\
459	0.001953125\\
460	0.001953125\\
461	0.001953125\\
462	0.001953125\\
463	0.001953125\\
464	0.001953125\\
465	0.001953125\\
466	0.001953125\\
467	0.001953125\\
468	0.001953125\\
469	0.001953125\\
470	0.001953125\\
471	0.001953125\\
472	0.001953125\\
473	0.001953125\\
474	0.001953125\\
475	0.001953125\\
476	0.001953125\\
477	0.001953125\\
478	0.001953125\\
479	0.001953125\\
480	0.001953125\\
481	0.001953125\\
482	0.001953125\\
483	0.001953125\\
484	0.001953125\\
485	0.001953125\\
486	0.001953125\\
487	0.001953125\\
488	0.001953125\\
489	0.001953125\\
490	0.001953125\\
491	0.001953125\\
492	0.001953125\\
493	0.001953125\\
494	0.001953125\\
495	0.001953125\\
496	0.001953125\\
497	0.001953125\\
498	0.001953125\\
499	0.001953125\\
500	0.001953125\\
501	0.001953125\\
502	0.001953125\\
503	0.001953125\\
504	0.001953125\\
505	0.001953125\\
506	0.001953125\\
507	0.001953125\\
508	0.001953125\\
509	0.001953125\\
510	0.001953125\\
511	0.001953125\\
512	0.001953125\\
};
\end{axis}
\end{tikzpicture}%
  }\quad
  \subfloat[\label{fig:1d_b}]{
  \input{den_variable.tex}
  }\\
  \subfloat[\label{fig:1d_c}]{
%
%
\begin{tikzpicture}[scale=0.8]

\begin{axis}[%
width=2in,
height=2in,
at={(0in,0in)},
scale only axis,
point meta min=0,
point meta max=1,
axis on top,
xmin=0.5,
xmax=16.5,
xtick={4,8,12,16},
xticklabels={{0.125},{0.25},{0.375},{0.5}},
ymin=0.5,
ymax=32.5,
ytick={8,16,24,32},
yticklabels={{0.25},{0.5},{0.75},{1}},
axis background/.style={fill=white}
]
\addplot [forget plot] graphics [xmin=0.5,xmax=16.5,ymin=0.5,ymax=32.5] {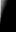};
\end{axis}
\end{tikzpicture}%
  }\quad
  \subfloat[\label{fig:1d_d}]{
%
%
\begin{tikzpicture}[scale=0.8]

\begin{axis}[%
width=2in,
height=2in,
at={(0in,0in)},
scale only axis,
point meta min=0,
point meta max=1,
axis on top,
xmin=0.5,
xmax=16.5,
xtick={4,8,12,16},
xticklabels={{0.125},{0.25},{0.375},{0.5}},
ymin=0.5,
ymax=32.5,
ytick={8,16,24,32},
yticklabels={{0.25},{0.5},{0.75},{1}},
axis background/.style={fill=white}
]
\addplot [forget plot] graphics [xmin=0.5,xmax=16.5,ymin=0.5,ymax=32.5] {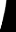};
\end{axis}
\end{tikzpicture}%
  }\\
  \subfloat[\label{fig:1d_e}]{
%
%
\begin{tikzpicture}[scale=0.8]

\begin{axis}[%
width=2in,
height=2in,
at={(0in,0in)},
scale only axis,
point meta min=0,
point meta max=1,
axis on top,
xmin=0.5,
xmax=16.5,
xtick={4,8,12,16},
xticklabels={{0.125},{0.25},{0.375},{0.5}},
ymin=0.5,
ymax=32.5,
ytick={8,16,24,32},
yticklabels={{0.25},{0.5},{0.75},{1}},
axis background/.style={fill=white}
]
\addplot [forget plot] graphics [xmin=0.5,xmax=16.5,ymin=0.5,ymax=32.5] {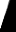};
\end{axis}
\end{tikzpicture}%
  }\quad
  \subfloat[\label{fig:1d_f}]{
%
%
\begin{tikzpicture}[scale=0.8]

\begin{axis}[%
width=2in,
height=2in,
at={(0in,0in)},
scale only axis,
point meta min=0,
point meta max=1,
axis on top,
xmin=0.5,
xmax=16.5,
xtick={4,8,12,16},
xticklabels={{0.125},{0.25},{0.375},{0.5}},
ymin=0.5,
ymax=32.5,
ytick={8,16,24,32},
yticklabels={{0.25},{0.5},{0.75},{1}},
axis background/.style={fill=white}
]
\addplot [forget plot] graphics [xmin=0.5,xmax=16.5,ymin=0.5,ymax=32.5] {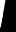};
\end{axis}
\end{tikzpicture}%
  }
  \caption{Sampling densities and empirical success rate for 1D signal recovery. (a) (b) The sampling density \eqref{eq:variable_density} computed from 1D TV (uniform), and from the Haar wavelet. (c) -- (f) The success rate plots. The $x$-axis represents the sparsity level $s/n$, and the $y$-axis represents the sub-sampling ratio $m/n$. The empirical success rate is represented by the grayscale value (white for $1$ and black for $0$).  (c) (d) Recovery using the Haar wavelet, under sampling density (a) and (b), respectively. (e) (f) Recovery using 1D TV, under sampling density (a) and (b), respectively, both restricted to $\{2,3,\dots,n\}$.}%
  \label{fig:1d}%
\end{figure}

\begin{figure}[htbp]
  \centering
  \subfloat[\label{fig:2d_a}]{
  \includegraphics[width=0.25\columnwidth]{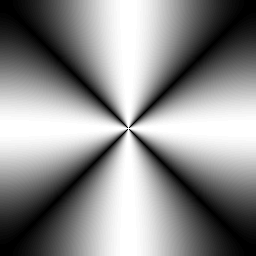}
  }\quad
  \subfloat[\label{fig:2d_b}]{
  \includegraphics[width=0.25\columnwidth]{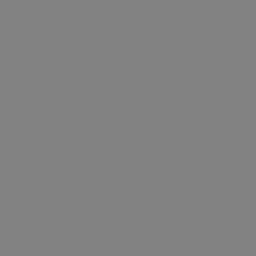}
  }\quad
  \subfloat[\label{fig:2d_c}]{
  \includegraphics[width=0.25\columnwidth]{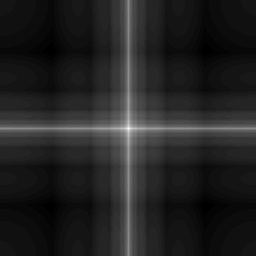}
  }\\
  \subfloat[\label{fig:2d_d}]{
%
%
\begin{tikzpicture}[scale = 0.8]

\begin{axis}[%
width=2in,
height=1in,
at={(0in,0in)},
scale only axis,
xmin=1,
xmax=8,
xtick={1,2,3,4,5,6,7,8},
xticklabels={{1/32},{1/16},{3/32},{1/8},{5/32},{3/16},{7/32},{1/4}},
ymin=0,
ymax=1,
ytick={0,0.2,0.4,0.6,0.8,1},
axis background/.style={fill=white}
]
\addplot [color=red,solid,line width=2.0pt,forget plot]
  table[row sep=crcr]{%
1	0\\
2	1\\
3	1\\
4	1\\
5	1\\
6	1\\
7	1\\
8	1\\
};
\addplot [color=blue,dashed,line width=2.0pt,forget plot]
  table[row sep=crcr]{%
1	0\\
2	0.88\\
3	1\\
4	1\\
5	1\\
6	1\\
7	1\\
8	1\\
};
\end{axis}
\end{tikzpicture}%
  }\quad
  \subfloat[\label{fig:2d_e}]{
%
%
\begin{tikzpicture}[scale = 0.8]

\begin{axis}[%
width=2in,
height=1in,
at={(0in,0in)},
scale only axis,
xmin=1,
xmax=8,
xtick={1,2,3,4,5,6,7,8},
xticklabels={{1/32},{1/16},{3/32},{1/8},{5/32},{3/16},{7/32},{1/4}},
ymin=0,
ymax=1,
ytick={0,0.2,0.4,0.6,0.8,1},
axis background/.style={fill=white}
]
\addplot [color=red,solid,line width=2.0pt,forget plot]
  table[row sep=crcr]{%
1	0\\
2	0.26\\
3	1\\
4	1\\
5	1\\
6	1\\
7	1\\
8	1\\
};
\addplot [color=blue,dashed,line width=2.0pt,forget plot]
  table[row sep=crcr]{%
1	0\\
2	0\\
3	0\\
4	0\\
5	0.38\\
6	1\\
7	1\\
8	1\\
};
\end{axis}
\end{tikzpicture}%
  }
  \caption{Sampling density and empirical success rate for 2D signal recovery. (a) -- (c) The sampling densities on a $256\times 256$ grid, computed from 2D anisotropic TV, 2D isotropic TV (uniform), and 2D separable Haar wavelet. Grayscale values represent samping density in log scale. (d) (e) The empirical success rate for 2D signal recovery using anisotropic TV minimization and isotropic TV minimization, respectively. The $x$-axis represents the sub-sampling ratio $m/n$, and the $y$-axis represents the empirical success rate. Within the same plot, different curves represent the results using different sampling schemes. The solid red curves in (d) and (e) represent results using sampling density (a) and (b), respectively. The dashed blue curves represent results using sampling density (c) computed from wavelet.}%
  \label{fig:2d}%
\end{figure}

\subsection{2D Signals}
We also run numerical experiments on a 2D signal, the modified Shepp-Logan phantom (see Fig. \ref{fig:2d_sig}). We minimize the anisotropic and isotropic 2D total variations, which correspond to solving \eqref{eq:ell1min} and \eqref{eq:mixedmin} for the 2D finite difference operator $\Phi_{\mathrm{TV},n_1,n_2}$.

For the two types of total variations, we use the two-step scheme in Section \ref{sec:TV}. In Step 2), we use the densities in \eqref{eq:variable_density} and \eqref{eq:variable_density_gs} for anisotropic and isotropic total variations, respectively. As a comparison, we use the variable density proposed by Krahmer and Ward \cite{krahmer2014stable}, which is \eqref{eq:variable_density} computed with the 2D separable Haar wavelet $\Psi_{\mathrm{W}, n_2, \log_2n_2}\otimes \Psi_{\mathrm{W}, n_1, \log_2n_1}$.

We use a phantom of size $n = n_1\times n_2 = 256\times 256$, and repeat the experiments for $m= n/32, n/16,3n/32,\dots, n/4$. We run 50 Monte-Carlo experiments for each setting, and compute the success rate for every chioce of $m$, as shown in Fig. \ref{fig:2d_d} and \ref{fig:2d_e}.

For both anisotropic TV minimization \eqref{eq:ell1min} and isotropic TV minimization \eqref{eq:mixedmin}, the success rates using sampling density computed from TV are higher than those using sampling density computed from separable Haar wavelet. Although the performances of two sampling schemes are relatively close for anisotropic TV minimization (see Fig. \ref{fig:2d_d}), the advantage of the density computed from TV over that computed from wavelet is more pronounced for isotropic TV minimization (see Fig. \ref{fig:2d_e}). Contrary to common belief, sampling low frequencies more densely does not always lead to superior recovery. We suggest using sampling densities \eqref{eq:variable_density} and \eqref{eq:variable_density_gs} tailored to the specific sparsifying transform and the measurement operator.

\section{Conclusion}
\label{sec:concl}

In this paper, we established a unified theory for recovery of sparse signals in a transform domain. Our theory guaranteed robust recovery from noisy measurements by convex programming and apply without relying on a particular choice of measurement and sparsifying transforms. We quantified the sufficient sampling rate using functions of the two transforms, and this result identifies a class of measurement and sparsity models enabling recovery at a near optimal sampling rate.
We also proposed a variable sampling density designed with incoherence parameters of the two transforms, which provided recovery guarantee at a lower sampling rate than previous works in various scenarios.
Furthermore, we extended the result to the group-sparsity models so that it also applies to the popular isotropic total variation minimization.
In particular, for the partial Fourier recovery of sparse signals over a circulant transform, our theory suggests a uniformly random sampling or its variation.
Our numerical results showed that our variable density random sampling strategy outperforms other known sampling strategies in various scenarios. This suggests that our new theory is indeed universally useful.

\section*{Acknowledgement}
The authors thank referees for their valuable comments and suggestions.

\appendices

\section{Bernstein inequalities}

\begin{theorem}[{Matrix Bernstein Inequality \cite{tropp2012user}}]
\label{thm:mtx_bernstein_ineq}
Let $\{X_j\} \in \cz^{d \times d}$ be a finite sequence of independent random matrices.
Suppose that $\mathbb{E} X_j = 0$ and $\norm{X_j} \leq B$ almost surely for all $j$
and
\[
\max\left( \left\| \sum_j \mathbb{E} X_j X_j^* \right\|,~ \left\| \sum_j \mathbb{E} X_j^* X_j \right\| \right) \leq \sigma^2.
\]
Then for all $t \geq 0$,
\[
\mathbb{P} \left( \left\| \sum_j X_j \right\| \geq t \right)
\leq 2d \exp \left( \frac{-t^2/2}{\sigma^2 + Bt/3} \right).
\]
\end{theorem}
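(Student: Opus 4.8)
The plan is to prove this by the matrix Laplace transform method, which is the standard route for matrix concentration. First I would reduce the non-Hermitian, two-sided spectral-norm statement to a one-sided largest-eigenvalue statement for Hermitian matrices by passing to the Hermitian dilation $\mathcal{H}(X) = \begin{bmatrix} 0 & X \\ X^* & 0 \end{bmatrix}$. Since $\norm{\mathcal{H}(X)} = \norm{X}$ and $\mathcal{H}(X)^2 = \diag(XX^*, X^*X)$, the dilated variables $\mathcal{H}(X_j)$ are $2d \times 2d$ Hermitian with $\norm{\mathcal{H}(X_j)} \leq B$ almost surely, zero mean, and $\norm{\sum_j \mathbb{E}\,\mathcal{H}(X_j)^2} = \max(\norm{\sum_j \mathbb{E} X_j X_j^*}, \norm{\sum_j \mathbb{E} X_j^* X_j}) \leq \sigma^2$. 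Because $\norm{\sum_j X_j} = \lambda_{\max}(\sum_j \mathcal{H}(X_j))$, it suffices to bound the top eigenvalue of a sum of independent, zero-mean Hermitian matrices of dimension $2d$, which will supply the prefactor $2d$.

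For the Hermitian problem, writing $S = \sum_j \mathcal{H}(X_j)$, I would apply Markov's inequality in the form $\mathbb{P}(\lambda_{\max}(S) \geq t) \leq e^{-\theta t}\, \mathbb{E}\,\trace\, e^{\theta S}$ for every $\theta > 0$. The crucial step is to control the matrix moment generating function of the sum. Here I would invoke the subadditivity of the matrix cumulant generating function, giving $\mathbb{E}\,\trace\,e^{\theta S} \leq \trace \exp(\sum_j \log \mathbb{E}\, e^{\theta \mathcal{H}(X_j)})$, which reduces the problem to bounding each summand's log-mgf individually.

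Next I would bound each factor using the zero-mean assumption and the almost-sure norm bound. The scalar estimate $e^{\theta y} \leq 1 + \theta y + g(\theta)\,y^2$, valid for all $y \leq B$ with $g(\theta) = (e^{\theta B} - \theta B - 1)/B^2$, transfers to Hermitian matrices whose eigenvalues are at most $B$; taking expectations and using $\mathbb{E}\,\mathcal{H}(X_j) = 0$ together with $I + A \preceq e^A$ yields $\mathbb{E}\,e^{\theta \mathcal{H}(X_j)} \preceq \exp(g(\theta)\,\mathbb{E}\,\mathcal{H}(X_j)^2)$, hence $\log \mathbb{E}\, e^{\theta \mathcal{H}(X_j)} \preceq g(\theta)\,\mathbb{E}\,\mathcal{H}(X_j)^2$. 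Summing over $j$, applying monotonicity of the trace exponential, and bounding $\lambda_{\max}(\sum_j \mathbb{E}\,\mathcal{H}(X_j)^2) \leq \sigma^2$ gives $\mathbb{P}(\lambda_{\max}(S)\geq t) \leq 2d\,\exp(-\theta t + g(\theta)\sigma^2)$.

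Finally I would optimize over $\theta$. The choice $\theta = B^{-1}\log(1 + Bt/\sigma^2)$ yields $2d\,\exp(-\sigma^2 B^{-2}\, h(Bt/\sigma^2))$ with $h(u) = (1+u)\log(1+u) - u$, and the elementary inequality $h(u) \geq \tfrac{u^2/2}{1 + u/3}$ converts this to the stated Bernstein form $2d\exp(-\tfrac{t^2/2}{\sigma^2 + Bt/3})$. The main obstacle is the second step: the subadditivity of the matrix cgf is not elementary and rests on Lieb's theorem on the concavity of $A \mapsto \trace\exp(H + \log A)$. The remaining ingredients — the dilation reduction, the transfer of the scalar mgf estimate to matrices, and the one-variable optimization — are routine by comparison.
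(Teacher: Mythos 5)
Your outline is correct: it is precisely the matrix Laplace transform argument of Tropp (Hermitian dilation, subadditivity of the matrix cumulant generating function via Lieb's theorem, the Bennett-type mgf bound, and the elementary inequality $h(u) \geq \tfrac{u^2/2}{1+u/3}$), which is the proof in the reference \cite{tropp2012user} that the paper cites. The paper itself gives no proof of this theorem---it imports it verbatim from that reference---so there is nothing to compare beyond noting that your route is the canonical one and all the stated intermediate facts (the dilation identities, the variance bound matching $\sigma^2$, and the choice of $\theta$) check out.
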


\begin{theorem}[{Vector Bernstein Inequality \cite{candes2011probabilistic}}]
\label{thm:vec_bernstein_ineq}
Let $\{v_j\} \in \cz^d$ be a finite sequence of independent random vectors.
Suppose that $\mathbb{E} v_j = 0$ and $\norm{v_j}_2 \leq B$ almost surely for all $j$
and $\mathbb{E} \sum_j \norm{v_j}_2^2 \leq \sigma^2$.
Then for all $0 \leq t \leq \sigma^2/B$,
\[
\mathbb{P} \left( \left\| \sum_j v_j \right\|_2 \geq t \right)
\leq \exp \left( - \frac{t^2}{8 \sigma^2} + \frac{1}{4} \right).
\]
\end{theorem}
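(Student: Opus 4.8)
The plan is to establish this dimension-free tail bound in two stages: first bound the expectation of $Z := \norm{\sum_j v_j}_2$, and then prove a variance-sensitive concentration of $Z$ about its mean. The first stage is immediate. By Jensen's inequality together with independence and the mean-zero hypothesis (which kills all cross terms $\mathbb{E}\langle v_i, v_j\rangle$ for $i\neq j$, since they factor as $\langle \mathbb{E}v_i, \mathbb{E}v_j\rangle = 0$), one obtains $\mathbb{E} Z \le (\mathbb{E}\norm{\sum_j v_j}_2^2)^{1/2} = (\sum_j \mathbb{E}\norm{v_j}_2^2)^{1/2} \le \sigma$.

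For the second stage I would \emph{avoid} reducing to the matrix Bernstein inequality (Theorem~\ref{thm:mtx_bernstein_ineq}) via the Hermitian dilation $X_j = \bigl(\begin{smallmatrix} 0 & v_j \\ v_j^* & 0 \end{smallmatrix}\bigr)$: although this does correctly produce the variance proxy $\max\{\norm{\sum_j \mathbb{E} v_j v_j^*},\, \sum_j \mathbb{E}\norm{v_j}_2^2\} \le \sigma^2$, it drags in an ambient-dimension factor $2(d+1)$ in front of the exponential and therefore cannot yield the dimension-free constant $e^{1/4}$. Instead I would view $Z = \sup_{\norm{u}_2 \le 1} \sum_j \langle u, v_j\rangle$ as the supremum of an empirical process whose summands are mean-zero and bounded ($|\langle u, v_j\rangle| \le B$), and apply a Talagrand/Bousquet concentration inequality. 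Its weak variance is exactly $\sup_{\norm{u}_2\le 1}\sum_j \mathbb{E}\langle u, v_j\rangle^2 = \norm{\sum_j \mathbb{E} v_j v_j^*} \le \trace\sum_j \mathbb{E} v_j v_j^* = \sum_j \mathbb{E}\norm{v_j}_2^2 \le \sigma^2$, which is the \emph{tight} proxy the sharp constants require. (A Doob-martingale argument combined with Freedman's inequality is an equally valid engine, but there one must bound the summed conditional variances by $\sigma^2$ rather than by a crude multiple of it, lest the final constant degrade.) In the stated range $t \le \sigma^2/B$ the Bernstein denominator $\sigma^2 + O(Bt)$ is comparable to $\sigma^2$, so the concentration is effectively sub-Gaussian, $\mathbb{P}(Z - \mathbb{E} Z \ge s) \lesssim \exp(-s^2/(2\sigma^2))$, with the constants tracked explicitly.

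Finally I would glue the two stages together. Since $\mathbb{E} Z \le \sigma$, we have $\mathbb{P}(Z \ge t) \le \mathbb{P}(Z - \mathbb{E} Z \ge t - \sigma)$, and a one-line quadratic estimate absorbs the mean shift: writing $r = t/\sigma$, the inequality $(t-\sigma)^2/(2\sigma^2) \ge t^2/(8\sigma^2) - 1/4$ is equivalent to $\tfrac{3}{8}r^2 - r + \tfrac34 \ge 0$, which holds for all $t$ because its discriminant $1 - \tfrac98 < 0$; for $t \le \sigma$ the claimed right-hand side already exceeds $1$ and there is nothing to prove. The main obstacle is squarely the second stage: obtaining a \emph{dimension-free}, variance-sensitive concentration whose proxy is the sharp $\sigma^2$ (rather than the $nB^2$ that a naive bounded-differences/McDiarmid bound would produce), and then carrying the constants and the restriction $t \le \sigma^2/B$ carefully enough to land exactly on the factors $8$ and $1/4$.
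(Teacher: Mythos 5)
First, a point of reference: the paper does not prove this statement at all --- it is imported verbatim from \cite{candes2011probabilistic} (where it in turn comes from Gross \cite{gross2011recovering}), so the comparison is with that standard proof rather than with anything in the present manuscript. Your two-stage architecture --- bound $\mathbb{E}Z \le \sigma$ for $Z=\norm{\sum_j v_j}_2$ by Jensen and independence, then apply a dimension-free deviation inequality for $Z$ about its mean, then absorb the mean shift by completing the square --- is exactly that standard proof, and your gluing algebra is correct: with a deviation bound of the form $\exp(-s^2/(c\sigma^2))$ the inequality $(t-\sigma)^2/(c\sigma^2)\ge t^2/(8\sigma^2)-1/4$ holds for all $t$ precisely when $c\le 4$ (at $c=4$ it reduces to $(t-2\sigma)^2\ge 0$), so $4\sigma^2$ is the borderline denominator that the final factors $8$ and $1/4$ require.

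The gap is in the deviation step. Bousquet's inequality for $Z=\sup_{\norm{u}_2\le 1}\sum_j \mathrm{Re}\langle u, v_j\rangle$ carries the variance term $2\sigma_w^2 + 4B\,\mathbb{E}Z + 2Bs/3$ in the denominator of the exponent, where $\sigma_w^2=\bigl\|\sum_j\mathbb{E}v_jv_j^*\bigr\|$ is the weak variance. Even in the only nontrivial regime ($t>\sqrt{2}\,\sigma$ together with $t\le\sigma^2/B$ forces $B<\sigma/\sqrt{2}$), the middle term alone can contribute up to $2\sqrt{2}\,\sigma^2$, pushing the denominator to roughly $5.5\,\sigma^2>4\sigma^2$; with that constant the completed square fails for $t/\sigma$ in an interval around $2$, so the intermediate bound $\exp(-s^2/(2\sigma^2))$ you assert is not delivered by the tool you name. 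The engine actually used in the source is a Yurinskii/Ledoux--Talagrand type inequality for norms of sums of bounded independent vectors, giving $\mathbb{P}(Z\ge\mathbb{E}Z+s)\le\exp(-s^2/(4\sigma^2))$ with the \emph{strong} variance $\sigma^2\ge\sum_j\mathbb{E}\norm{v_j}_2^2$ and with the range restriction $s\le 2\sigma^2/B$ --- which is also where the hypothesis $t\le\sigma^2/B$ is actually consumed (it keeps $s=t-\sigma$ inside the admissible range), a point your sketch leaves implicit. The architecture is right, but the concentration inequality you invoke must be replaced, or its constants repaired, before the stated bound follows.
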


\section{Proof of Lemma~\ref{lemma:E1}}
\label{sec:pf:lemma:E1}
Define
\[
X_j := \Pi_J (n T e_{\omega_j} e_{\omega_j}^* T^\dagger - T T^\dagger) \Pi_J, \quad \forall j \in [m].
\]
Then, $X_j$ satisfies $\mathbb{E} X_j = 0$ and $\norm{X_j} \leq \mu s$ for all $j$.
Since
\begin{align*}
X_j^* X_j
{} & = n^2 \Pi_J \widetilde{T} e_{\omega_j} e_{\omega_j}^* T^* \Pi_J T e_{\omega_j} e_{\omega_j}^* T^\dagger \Pi_J \\
{} & \quad - n \Pi_J \widetilde{T} e_{\omega_j} e_{\omega_j}^* T^* \Pi_J T T^\dagger \Pi_J \\
{} & \quad - n \Pi_J T T^\dagger \Pi_J T e_{\omega_j} e_{\omega_j}^* T^\dagger \Pi_J \\
{} & \quad + \Pi_J T T^\dagger \Pi_J T T^\dagger \Pi_J,
\end{align*}
it follows that
\begin{align*}
\mathbb{E} X_j^* X_j
{} & = \mathbb{E} n^2 \Pi_J \widetilde{T} e_{\omega_j} e_{\omega_j}^* T^* \Pi_J T e_{\omega_j} e_{\omega_j}^* T^\dagger \Pi_J - \Pi_J T T^\dagger \Pi_J T T^\dagger \Pi_J \\
{} & \leq \mathbb{E} n^2 e_{\omega_j}^* T^* \Pi_J T e_{\omega_j} \Pi_J \widetilde{T} e_{\omega_j} e_{\omega_j}^* T^\dagger \Pi_J \\
{} & = \mathbb{E} n^2 \norm{\gamma^{1/2} \Pi_J T e_{\omega_j}}_2^2 \gamma^{-1} \Pi_J \widetilde{T} e_{\omega_j} e_{\omega_j}^* T^\dagger \Pi_J \\
{} & \leq \mathbb{E} n \mu s \gamma^{-1} \Pi_J \widetilde{T} e_{\omega_j} e_{\omega_j}^* T^\dagger \Pi_J \\
{} & = \mu s \gamma^{-1} \Pi_J \widetilde{T} T^\dagger \Pi_J.
\end{align*}
By symmetry, we also have
\[
\mathbb{E} X_j X_j^* \leq \mu s \gamma \Pi_J T T^* \Pi_J.
\]
Therefore,
\[
\max\left( \left\| \sum_{j=1}^m \mathbb{E} X_j X_j^* \right\|,~ \left\| \sum_{j=1}^m \mathbb{E} X_j^* X_j \right\| \right)
\leq m \mu s \max( \norm{\gamma^{-1} \widetilde{T} T^\dagger}, \norm{\gamma T T^*} )
\leq \frac{m \mu s}{1 - \norm{\gamma T^* T - I_n}}.
\]
Applying the above results to Theorem~\ref{thm:mtx_bernstein_ineq} with $t = m \delta$ completes the proof.

\section{Proof of Lemma~\ref{lemma:E2}}
\label{sec:pf:lemma:E2}
Define
\[
v_j := \Pi_J (n T e_{\omega_j} e_{\omega_j}^* T^\dagger - T T^\dagger) \Pi_J q, \quad \forall j \in [m].
\]
Then, $v_j$ satisfies $\mathbb{E} v_j = 0$ and
\begin{align*}
\norm{v_j}_2
{} & \leq \norm{\Pi_J n T e_{\omega_j} e_{\omega_j}^* T^\dagger \Pi_J q}_2 + \norm{\Pi_J T T^\dagger \Pi_J q}_2 \\
{} & \leq (\norm{\gamma^{1/2} \Pi_J \sqrt{n} T e_{\omega_j}}_2 \norm{\gamma^{-1/2} \Pi_J \sqrt{n} \widetilde{T} e_{\omega_j}}_2 + 1) \norm{\Pi_J q}_2 \\
{} & \leq (s\mu+1) \norm{\Pi_J q}_2.
\end{align*}
Furthermore,
\begin{align*}
\mathbb{E} \norm{v_j}_2^2
{} & = \mathbb{E} n^2 q^* \Pi_J \widetilde{T} e_{\omega_j} e_{\omega_j}^* T^* \Pi_J T e_{\omega_j} e_{\omega_j}^* T^\dagger \Pi_J q
- q^* \Pi_J T T^\dagger \Pi_J q \\
{} & \leq n \norm{\gamma^{1/2} \Pi_J T e_{\omega_j}}_2^2 \mathbb{E} n \gamma^{-1} q^* \Pi_J \widetilde{T} e_{\omega_j} e_{\omega_j}^* T^\dagger \Pi_J q \\
{} & \leq \mu s \gamma^{-1} \mathbb{E} n q^* \Pi_J \widetilde{T} e_{\omega_j} e_{\omega_j}^* T^\dagger \Pi_J q \\
{} & = \mu s \gamma^{-1} q^* \Pi_J \widetilde{T} T^\dagger \Pi_J q \\
{} & \leq \frac{\mu s \norm{\Pi_J q}_2^2}{1 - \norm{\gamma T^* T - I_n}},
\end{align*}
where the second inequality follows by the incoherence property.
Applying the above results to Theorem~\ref{thm:vec_bernstein_ineq} completes the proof.

\section{Proof of Lemma~\ref{lemma:E3}}
\label{sec:pf:lemma:E3}
Let $i \in [n] \setminus J$ be arbitrarily fixed.
Define
\[
w_j := \langle e_i, (n T e_{\omega_j} e_{\omega_j}^* T^\dagger - T T^\dagger) \Pi_J q \rangle, \quad \forall j \in [m].
\]
Then, $w_j$ satisfies $\mathbb{E} w_j = 0$ and
\begin{align*}
|w_j|
{} & \leq |e_j^* n T e_{\omega_j} e_{\omega_j}^* T^\dagger \Pi_J q| + \norm{T T^\dagger \Pi_J q}_\infty \\
{} & \leq (\norm{\gamma^{1/2} \sqrt{n} T e_{\omega_j}}_\infty \norm{\gamma^{-1/2} \sqrt{n} \Pi_J \widetilde{T} e_{\omega_j}}_2 + 1) \norm{\Pi_J q}_2 \\
{} & \leq (\sqrt{s} \mu+1) \norm{\Pi_J q}_2.
\end{align*}
Furthermore,
\begin{align*}
\mathbb{E} |w_j|^2
{} & = \mathbb{E} n^2 q^* \Pi_J \widetilde{T} e_{\omega_j} e_{\omega_j}^* T^* e_i e_i^* T e_{\omega_j} e_{\omega_j}^* T^\dagger \Pi_J q
- q^* \Pi_J T T^\dagger e_i e_i^* T T^\dagger \Pi_J q \\
{} & \leq n \norm{\gamma^{1/2} T e_{\omega_j}}_\infty^2 \gamma^{-1} \mathbb{E} n q^* \Pi_J \widetilde{T} e_{\omega_j} e_{\omega_j}^* T^\dagger \Pi_J q \\
{} & \leq \mu \gamma^{-1} q^* \Pi_J \widetilde{T} T^\dagger \Pi_J q \\
{} & \leq \frac{\mu \norm{\Pi_J q}_2^2}{1 - \norm{\gamma T^* T - I_n}}.
\end{align*}
Applying the above results to Theorem~\ref{thm:mtx_bernstein_ineq} gives
\[
\mathbb{P}\left(\left| \left\langle e_i, \left(\frac{n}{m} B_\Omega^* - B\right) \Pi_J q \right\rangle \right| \geq t \norm{\Pi_J q}_2\right)
\leq \exp\left( - \frac{m}{2\mu} \cdot \frac{t^2}{1/(1-\norm{\gamma T^* T - I_n}) + \sqrt{s}t/3} \right).
\]
Combine this for $i \in [N]$ with the union bound completes the proof.

\section{Proof of Lemma~\ref{lemma:E3'}}
\label{sec:pf:lemma:E3'}
Let $i \in [n] \setminus J$ be arbitrarily fixed.
Define
\[
v_j := \Pi_{\calG_i} (n T e_{\omega_j} e_{\omega_j}^* T^\dagger - T T^\dagger) \Pi_{\calG_J} q, \quad \forall j \in [m].
\]
Then, $v_j$ satisfies $\mathbb{E} v_j = 0$ and
\begin{align*}
\norm{v_j}_2
{} & \leq \norm{\Pi_{\calG_i} n T e_{\omega_j} e_{\omega_j}^* T^\dagger \Pi_{\calG_J} q}_2 + \norm{\Pi_{\calG_i} T T^\dagger \Pi_{\calG_J} q}_2 \\
{} & \leq (\norm{\gamma^{1/2} \sqrt{n} \Pi_{\calG_i} T e_{\omega_j}}_2 \norm{\gamma^{-1/2} \sqrt{n} \Pi_{\calG_J} \widetilde{T} e_{\omega_j}}_2 + 1) \norm{\Pi_{\calG_J} q}_2 \\
{} & \leq (\sqrt{s} \mu_\calG + 1) \norm{\Pi_{\calG_J} q}_2.
\end{align*}
Furthermore,
\begin{align*}
\mathbb{E} v_j^* v_j
{} & = \mathbb{E} n^2 q^* \Pi_{\calG_J} \widetilde{T} e_{\omega_j} e_{\omega_j}^* T^* \Pi_{\calG_i} T e_{\omega_j} e_{\omega_j}^* T^\dagger \Pi_{\calG_J} q
- q^* \Pi_{\calG_J} T T^\dagger \Pi_{\calG_i} T T^\dagger \Pi_{\calG_J} q \\
{} & \leq n \norm{\gamma^{1/2} \Pi_{\calG_i} T e_{\omega_j}}_2^2 \gamma^{-1} \mathbb{E} n q^* \Pi_{\calG_J} \widetilde{T} e_{\omega_j} e_{\omega_j}^* T^\dagger \Pi_{\calG_J} q \\
{} & \leq \mu_\calG \gamma^{-1} q^* \Pi_{\calG_J} \widetilde{T} T^\dagger \Pi_{\calG_J} q \\
{} & \leq \frac{\mu_\calG \norm{\Pi_{\calG_J} q}_2^2}{1 - \norm{\gamma T^* T - I_n}}.
\end{align*}
Note that
\[
v_j v_j^* \leq v_j^* v_j I_L, \quad \forall j \in [m].
\]
Applying the above results to Theorem~\ref{thm:vec_bernstein_ineq} gives
\[
\mathbb{P}\left(\left\| \Pi_{\calG_i} \left(\frac{n}{m} B_\Omega^* - B\right) \Pi_{\calG_J} q \right\|_2 \geq t \norm{\Pi_{\calG_J} q}_2\right)
\leq \exp\left( - \frac{m}{2\mu} \cdot \frac{t^2}{1/(1-\norm{\gamma T^* T - I_n}) + \sqrt{s}t/3} \right).
\]
Combine this for $i \in [N]$ with the union bound completes the proof.

\section{Proof of Theorem~\ref{thm:rboplike}}
\label{sec:pf:thm:rboplike}
Theorem~\ref{thm:rboplike} is analogous to \cite[Theorem~3.1]{lee2013oblique}.
It has been shown that if $T$ is of full row rank,
then the deviation of $\frac{n}{m} T S_\Omega^* S_\Omega T^\dagger$ from $T T^\dagger = I_N$
is small with high probability \cite[Theorem~3.1]{lee2013oblique}.
On the contrary, Theorem~\ref{thm:rboplike} assumes that $T$ is of full column rank
and shows that the deviation of $\frac{n}{m} T S_\Omega^* S_\Omega T^\dagger$ from $T T^\dagger$, which is not necessarily $I_N$,
is small with high probability.

The proof of Theorem~\ref{thm:rboplike} is obtained from that of \cite[Theorem~3.1]{lee2013oblique} by replacing $I_N$ by $T T^\dagger$.
For example, the isotropy condition
\[
\frac{n}{m} \mathbb{E} T S_\Omega^* S_\Omega T^\dagger = I_N
\]
is replaced by
\[
\frac{n}{m} \mathbb{E} T S_\Omega^* S_\Omega T^\dagger = T T^\dagger.
\]
For a matrix $M \in \cz^{N \times N}$, the term $\theta_s(M)$, previously defined by in \cite{lee2013oblique}
\[
\theta_s(M) := \max_{|\widetilde{J}| \leq s} \norm{\Pi_{\widetilde{J}} (M - I_N) \Pi_{\widetilde{J}}}
\]
is replaced by
\[
\theta_s(M) := \max_{|\widetilde{J}| \leq s} \norm{\Pi_{\widetilde{J}} (M - T T^\dagger) \Pi_{\widetilde{J}}}
\]
Clearly, $T T^\dagger = I_N$ if $T$ has full row rank.
However, in the hypothesis of Theorem~\ref{thm:rboplike}, $T$ has full column rank and $T T^\dagger$ may be rank deficient.
Since the modifications are rather straightforward, we omit the details of the proof and refer them to \cite[Appendix~E]{lee2013oblique}.

By modifying \cite[Theorem~3.1]{lee2013oblique} and its proof as shown above,
\eqref{eq:rboplike} is implied by \eqref{eq:rboplike:cond2} and
\[
m \geq C_1 \delta^{-2} K_T \mu s \log^2 s \log N \log m,
\]
where the factor $K_T$ is given by
\begin{align*}
K_T
{} & = \left\{
\left(2 + \max_{|\widetilde{J}| \leq s} \left\|\Pi_{\widetilde{J}} (\gamma T T^* - T T^\dagger) \Pi_{\widetilde{J}}\right\|\right)^{1/2}
+
\left(2 + \max_{|\widetilde{J}| \leq s} \left\|\Pi_{\widetilde{J}} (\gamma^{-1} \widetilde{T} T^\dagger - T T^\dagger) \Pi_{\widetilde{J}}\right\|\right)^{1/2}
\right\}^2 \\
{} & \leq 4 + 2 \max\left(
\max_{|\widetilde{J}| \leq s} \norm{\Pi_{\widetilde{J}}(\gamma T T^* - T T^\dagger)\Pi_{\widetilde{J}}}
,~
\max_{|\widetilde{J}| \leq s} \norm{\Pi_{\widetilde{J}}(\gamma^{-1} \widetilde{T} T^\dagger - T T^\dagger)\Pi_{\widetilde{J}}}
\right) \\
{} & \leq 4 + 2 \max\left(
\norm{\gamma T T^* - T T^\dagger}
,~
\norm{\gamma^{-1} \widetilde{T} T^\dagger - T T^\dagger}
\right) \\
{} & = 4 + 2 \max\left(
\norm{\gamma T^* T - I_n}
,~
\norm{\gamma^{-1} T^\dagger \widetilde{T} - I_n}
\right).
\end{align*}
Finally, we verify that
\[
\norm{\gamma^{-1} T^\dagger \widetilde{T} - I_n} \leq \frac{1}{\norm{\gamma T^* T - I_n}},
\]
where the upper bound dominates $\norm{\gamma T^* T - I_n}$.
This completes the proof.


\begin{thebibliography}{10}
\providecommand{\url}[1]{#1}
\csname url@samestyle\endcsname
\providecommand{\newblock}{\relax}
\providecommand{\bibinfo}[2]{#2}
\providecommand{\BIBentrySTDinterwordspacing}{\spaceskip=0pt\relax}
\providecommand{\BIBentryALTinterwordstretchfactor}{4}
\providecommand{\BIBentryALTinterwordspacing}{\spaceskip=\fontdimen2\font plus
\BIBentryALTinterwordstretchfactor\fontdimen3\font minus
  \fontdimen4\font\relax}
\providecommand{\BIBforeignlanguage}[2]{{%
\expandafter\ifx\csname l@#1\endcsname\relax
\typeout{** WARNING: IEEEtran.bst: No hyphenation pattern has been}%
\typeout{** loaded for the language `#1'. Using the pattern for}%
\typeout{** the default language instead.}%
\else
\language=\csname l@#1\endcsname
\fi
#2}}
\providecommand{\BIBdecl}{\relax}
\BIBdecl

\bibitem{donoho2006compressed}
D.~L. Donoho, ``Compressed sensing,'' \emph{{IEEE} Trans. Inf. Theory},
  vol.~52, no.~4, pp. 1289--1306, 2006.

\bibitem{candes2006robust}
E.~J. Cand{\`e}s, J.~Romberg, and T.~Tao, ``Robust uncertainty principles:
  Exact signal reconstruction from highly incomplete frequency information,''
  \emph{{IEEE} Trans. Inf. Theory}, vol.~52, no.~2, pp. 489--509, 2006.

\bibitem{beck2009fast}
A.~Beck and M.~Teboulle, ``A fast iterative shrinkage-thresholding algorithm
  for linear inverse problems,'' \emph{SIAM J. Imaging Sci.}, vol.~2, no.~1,
  pp. 183--202, 2009.

\bibitem{boyd2011distributed}
S.~Boyd, N.~Parikh, E.~Chu, B.~Peleato, and J.~Eckstein, ``Distributed
  optimization and statistical learning via the alternating direction method of
  multipliers,'' \emph{Foundations and Trends in Machine Learning}, vol.~3,
  no.~1, pp. 1--122, 2011.

\bibitem{needell2009cosamp}
D.~Needell and J.~A. Tropp, ``{CoSaMP}: Iterative signal recovery from
  incomplete and inaccurate samples,'' \emph{Appl. Comput. Harmon. Anal.},
  vol.~26, no.~3, pp. 301--321, 2009.

\bibitem{dai2009subspace}
W.~Dai and O.~Milenkovic, ``Subspace pursuit for compressive sensing signal
  reconstruction,'' \emph{{IEEE} Trans. Inf. Theory}, vol.~55, no.~5, pp.
  2230--2249, May 2009.

\bibitem{blumensath2009iterative}
T.~Blumensath and M.~E. Davies, ``Iterative hard thresholding for compressed
  sensing,'' \emph{Appl. Comput. Harmon. Anal.}, vol.~27, no.~3, pp. 265--274,
  2009.

\bibitem{foucart2011hard}
S.~Foucart, ``Hard thresholding pursuit: An algorithm for compressive
  sensing,'' \emph{SIAM J. Numer. Anal.}, vol.~49, no.~6, pp. 2543--2563, 2011.

\bibitem{candes2005decoding}
E.~J. Cand{\`e}s and T.~Tao, ``Decoding by linear programming,'' \emph{{IEEE}
  Trans. Inf. Theory}, vol.~51, no.~12, pp. 4203--4215, 2005.

\bibitem{chandrasekaran2012convex}
V.~Chandrasekaran, B.~Recht, P.~A. Parrilo, and A.~S. Willsky, ``The convex
  geometry of linear inverse problems,'' \emph{Found. Comput. Math.}, vol.~12,
  no.~6, pp. 805--849, 2012.

\bibitem{amelunxen2014living}
D.~Amelunxen, M.~Lotz, M.~B. McCoy, and J.~A. Tropp, ``Living on the edge:
  Phase transitions in convex programs with random data,'' \emph{Information
  and Inference}, vol.~3, no.~3, p. 224–294, 2014.

\bibitem{bresler1999image}
Y.~Bresler, M.~Gastpar, and R.~Venkataramani, ``Image compression on-the-fly by
  universal sampling in {F}ourier imaging systems,'' in \emph{Proc. 1999 IEEE
  Information Theory Workshop on Detection, Estimation, Classification, and
  Imaging}, Santa Fe, NM, Feb. 1999, p.~48.

\bibitem{candes2006near}
E.~J. Candes and T.~Tao, ``Near-optimal signal recovery from random
  projections: Universal encoding strategies?'' \emph{{IEEE} Trans. Inf.
  Theory}, vol.~52, no.~12, pp. 5406--5425, 2006.

\bibitem{rudelson2008sparse}
M.~Rudelson and R.~Vershynin, ``On sparse reconstruction from {F}ourier and
  {G}aussian measurements,'' \emph{Comm. Pure Appl. Math.}, vol.~61, no.~8, pp.
  1025--1045, 2008.

\bibitem{candes2011probabilistic}
E.~J. Candes and Y.~Plan, ``A probabilistic and {RIP}less theory of compressed
  sensing,'' \emph{{IEEE} Trans. Inf. Theory}, vol.~57, no.~11, pp. 7235--7254,
  2011.

\bibitem{rauhut2010compressive}
H.~Rauhut, ``Compressive sensing and structured random matrices,'' in
  \emph{Theoretical Foundations and Numerical Methods for Sparse Recovery},
  ser. Radon Series Comp. Appl. Math., M.~Fornasier, Ed.\hskip 1em plus 0.5em
  minus 0.4em\relax Berlin: de Gruyter, 2010, vol.~9, pp. 1--92.

\bibitem{candes2011compressed}
E.~J. Candes, Y.~C. Eldar, D.~Needell, and P.~Randall, ``Compressed sensing
  with coherent and redundant dictionaries,'' \emph{Appl. Comput. Harmon.
  Anal.}, vol.~31, no.~1, pp. 59--73, 2011.

\bibitem{nam2013cosparse}
S.~Nam, M.~E. Davies, M.~Elad, and R.~Gribonval, ``The cosparse analysis model
  and algorithms,'' \emph{Appl. Comput. Harmon. Anal.}, vol.~34, no.~1, pp.
  30--56, 2013.

\bibitem{giryes2014greedy}
R.~Giryes, S.~Nam, M.~Elad, R.~Gribonval, and M.~E. Davies, ``Greedy-like
  algorithms for the cosparse analysis model,'' \emph{Linear Algebra and its
  Applications}, vol. 441, pp. 22--60, 2014.

\bibitem{mallat2008wavelet}
S.~Mallat, \emph{A wavelet tour of signal processing: the sparse way}.\hskip
  1em plus 0.5em minus 0.4em\relax Academic press, 2008.

\bibitem{do2005contourlet}
M.~N. Do and M.~Vetterli, ``The contourlet transform: an efficient directional
  multiresolution image representation,'' \emph{{IEEE} Trans. Image Process.},
  vol.~14, no.~12, pp. 2091--2106, 2005.

\bibitem{starck2002curvelet}
J.-L. Starck, E.~J. Cand{\`e}s, and D.~L. Donoho, ``The curvelet transform for
  image denoising,'' \emph{{IEEE} Trans. Image Process.}, vol.~11, no.~6, pp.
  670--684, 2002.

\bibitem{lustig2008compressed}
M.~Lustig, D.~L. Donoho, J.~M. Santos, and J.~M. Pauly, ``Compressed sensing
  {MRI},'' \emph{{IEEE} Signal Process. Mag.}, vol.~25, no.~2, pp. 72--82,
  2008.

\bibitem{krahmer2011new}
F.~Krahmer and R.~Ward, ``New and improved {Johnson-Lindenstrauss} embeddings
  via the restricted isometry property,'' \emph{SIAM J. Math. Anal.}, vol.~43,
  no.~3, pp. 1269--1281, 2011.

\bibitem{needell2013stable}
D.~Needell and R.~Ward, ``Stable image reconstruction using total variation
  minimization,'' \emph{SIAM J. Imaging Sci.}, vol.~6, no.~2, pp. 1035--1058,
  2013.

\bibitem{krahmer2014stable}
F.~Krahmer and R.~Ward, ``Stable and robust sampling strategies for compressive
  imaging,'' \emph{{IEEE} Trans. Image Process.}, vol.~23, no.~2, pp. 612--622,
  2014.

\bibitem{kabanava2015robust}
M.~Kabanava, H.~Rauhut, and H.~Zhang, ``Robust analysis $\ell_1$-recovery from
  {G}aussian measurements and total variation minimization,'' \emph{European
  Journal of Applied Mathematics}, vol.~26, no.~06, pp. 917--929, 2015.

\bibitem{cai2015guarantees}
J.-F. Cai and W.~Xu, ``Guarantees of total variation minimization for signal
  recovery,'' \emph{Information and Inference}, vol.~4, no.~4, pp. 328--353,
  2015.

\bibitem{kabanava2015analysis}
M.~Kabanava and H.~Rauhut, ``Analysis $\ell_1$-recovery with frames and
  {G}aussian measurements,'' \emph{Acta Applicandae Mathematicae}, vol. 140,
  no.~1, pp. 173--195, 2015.

\bibitem{chen2014robust}
Y.~Chen and Y.~Chi, ``Robust spectral compressed sensing via structured matrix
  completion,'' \emph{{IEEE} Trans. Inf. Theory}, vol.~60, no.~10, pp.
  6576--6601, 2014.

\bibitem{gross2011recovering}
D.~Gross, ``Recovering low-rank matrices from few coefficients in any basis,''
  \emph{{IEEE} Trans. Inf. Theory}, vol.~57, no.~3, pp. 1548--1566, 2011.

\bibitem{lee2013oblique}
K.~Lee, Y.~Bresler, and M.~Junge, ``Oblique pursuits for compressed sensing,''
  \emph{{IEEE} Trans. Inf. Theory}, vol.~59, no.~9, pp. 6111--6141, 2013.

\bibitem{huang2010benefit}
J.~Huang and T.~Zhang, ``The benefit of group sparsity,'' \emph{Ann. Stat.},
  vol.~38, no.~4, pp. 1978--2004, 2010.

\bibitem{pfister2015learning}
L.~Pfister and Y.~Bresler, ``Learning sparsifying filter banks,'' in \emph{SPIE
  Optical Engineering+ Applications}.\hskip 1em plus 0.5em minus 0.4em\relax
  International Society for Optics and Photonics, 2015, pp. 959\,703--959\,703.

\bibitem{ye2016compressive}
J.~C. Ye, J.~M. Kim, K.~H. Jin, and K.~Lee, ``Compressive sampling using
  annihilating filter-based low-rank interpolation,'' \emph{IEEE Transactions
  on Information Theory}, 2016 (in press).

\bibitem{rudelson2013reconstruction}
M.~Rudelson and S.~Zhou, ``Reconstruction from anisotropic random
  measurements,'' \emph{{IEEE} Trans. Inf. Theory}, vol.~59, no.~6, pp.
  3434--3447, 2013.

\bibitem{kueng2014ripless}
R.~Kueng and D.~Gross, ``{RIP}less compressed sensing from anisotropic
  measurements,'' \emph{Linear Algebra Appl.}, vol. 441, pp. 110--123, 2014.

\bibitem{tropp2012user}
J.~A. Tropp, ``User-friendly tail bounds for sums of random matrices,''
  \emph{Found. Comput. Math.}, vol.~12, no.~4, pp. 389--434, 2012.

\end{thebibliography}

\end{document}